\documentclass[aps, pre, twocolumn, floatfix, superscriptaddress, longbibliography, nofootinbib]{revtex4-2}
\usepackage{amsmath, amsfonts, amssymb, amsthm}
\usepackage[usenames,dvipsnames,svgnames,table]{xcolor}
\usepackage{hyperref}
\hypersetup{%
    colorlinks=true,
    linkcolor=blue,
    filecolor=magenta,      
    urlcolor=red,
    citecolor=blue,
}

\setlength{\marginparwidth}{1.61cm}

\usepackage[
  color=orange!40,
  textsize=singlespacetiny,
]{todonotes}

\definecolor{green}{HTML}{4daf4a}
\definecolor{purple}{HTML}{984ea3}

\usepackage{graphicx}
\usepackage{tikz}
\usetikzlibrary{shapes, shapes.geometric, arrows}
\usepackage{pgfplots}
\usepgfplotslibrary{dateplot}
\usepackage{pgffor}
\pgfplotsset{compat=1.10}
\usepgfplotslibrary{fillbetween}

\usepackage{booktabs}
\usepackage{makecell}
\usepackage{multirow}
\usepackage{siunitx}

\newtheorem{defn}{Definition}

\newtheorem{prop}[defn]{Proposition}
\newtheorem{lemma}[defn]{Lemma}

\usepackage{enumitem}


\newcommand{\lmat}{\begin{bmatrix}}
\newcommand{\rmat}{\end{bmatrix}}
\newcommand{\RR}{\mathbb{R}}
\newcommand{\XX}{\mathbb{X}}
\newcommand{\ZZ}{\mathbb{Z}}
\newcommand{\calD}{\mathcal{D}}

\newcommand{\calF}{\mathcal{F}}
\newcommand{\calH}{\mathcal{H}}
\newcommand{\calL}{\mathcal{L}}
\newcommand{\calO}{\mathcal{O}}
\newcommand{\calS}{\mathcal{S}}
\newcommand{\calX}{\mathcal{X}}
\newcommand{\scb}{\mathrm{SCB}}
\newcommand{\checkb}[1]{\check{#1}_{\mathrm{B}}}
\newcommand{\checkw}[1]{\check{#1}_{\mathrm{W}}}
\newcommand{\eps}{\varepsilon}
\newcommand{\degC}{\operatorname{deg}}
\newcommand{\degR}{\operatorname{deg}_{\RR}}
\newcommand{\IFS}{\operatorname{I}_{\mathrm{FS}}}

\begin{document}

\title{A group-equivariant autoencoder for identifying spontaneously broken symmetries}

\author{Devanshu Agrawal}
\affiliation{Department of Industrial and Systems Engineering, University of Tennessee, Knoxville, TN 37996, USA\looseness=-1}
\author{Adrian Del Maestro}
\affiliation{Department of Physics and Astronomy, University of Tennessee, Knoxville, TN 37996, USA}
\affiliation{Min H. Kao Department of Electrical Engineering and Computer Science,
University of Tennessee, Knoxville, TN 37996, USA\looseness=-1}
\affiliation{Institute for Advanced Materials and Manufacturing, University of Tennessee, Knoxville, Tennessee 37996, USA\looseness=-1}
\author{Steven Johnston}
\affiliation{Department of Physics and Astronomy, University of Tennessee, Knoxville, TN 37996, USA}
\affiliation{Institute for Advanced Materials and Manufacturing, University of Tennessee, Knoxville, Tennessee 37996, USA\looseness=-1}
\author{James Ostrowski}
\affiliation{Department of Industrial and Systems Engineering, University of Tennessee, Knoxville, TN 37996, USA\looseness=-1}

\begin{abstract}
We introduce the group-equivariant autoencoder (GE-autoencoder) -- a deep neural network (DNN) method that locates phase boundaries by determining which symmetries of the Hamiltonian have spontaneously broken at each temperature.  
We use group theory to deduce which symmetries of the system remain intact in all phases, and then use this information to constrain the parameters of the GE-autoencoder such that the encoder learns an order parameter invariant to these ``never-broken'' symmetries.  
This procedure produces a dramatic reduction in the number of free parameters such that the GE-autoencoder size is independent of the system size.  
We include symmetry regularization terms in the loss function of the GE-autoencoder so that the learned order parameter is also equivariant to the remaining symmetries of the system.  
By examining the group representation by which the learned order parameter transforms, we are then able to extract information about the associated spontaneous symmetry breaking.  
We test the GE-autoencoder on the 2D classical ferromagnetic and antiferromagnetic Ising models, finding that the GE-autoencoder 
(1) accurately determines which symmetries have spontaneously broken at each temperature; 
(2) estimates the critical temperature in the thermodynamic limit with greater accuracy, robustness, and time-efficiency than a symmetry-agnostic baseline autoencoder; and 
(3) detects the presence of an external symmetry-breaking magnetic field with greater sensitivity than the baseline method. 
Finally, we describe various key implementation details, including a new method for extracting the critical temperature estimate from trained autoencoders and calculations of the DNN initialization and learning rate settings required for fair model comparisons.
\end{abstract}

\maketitle

\section{Introduction}

Mapping a material's phase diagram is an important endeavor in condensed-matter physics and materials science~\cite{ashcroft1976solid, friedli2017statistical}. 
This is a crucial step towards practical applications as such diagrams can act as a road map to manipulating a material's  functionality. 
From a theoretical perspective, predicting the locations of phase boundaries for a given material can provide important insights into the microscopic physics that govern its behavior and can provide crucial validation of proposed low-energy models. 

In some cases (e.g., structural transitions), phase transitions are governed by high-energy processes and can be reasonably predicted with efficient computational methods \cite{Gomez2019Review} like density functional theory \cite{DFT1, DFT2, DFT3, Shahi:2018dft, Maurer:2019dft} or molecular dynamics simulations \cite{haile1992molecular, Sasaki:2020md, Shanavas2009review}. 
However, there are also numerous examples of phase transitions between novel states of matter that are completely governed by low-energy properties, which can be much harder to predict. 
Notable examples include the Mott and other metal-to-insulator transitions~\cite{Imada1998mit}, unconventional superconductivity~\cite{Johnston2010Review, Stewart2017Review, Keimer2015Review}, and quantum magnetism and spin liquid behavior~\cite{Zhou2017Review, Savary2016spinliquid}. 
For example, predicting the low-energy properties of strongly correlated systems has proven to be extremely challenging, even with the advent of state-of-the-art computational algorithms and the widespread availability of high-performance computing. 
A case in point is the single band Hubbard model, whose doping-temperature phase diagram is rich with many competing and intertwined orders~\cite{Keimer2015Review, Fradkin2015Review} and may or may not contain a superconducting ground state~\cite{Maier2005superconductivity, Zheng2017absence, Jiang2019stripes}.

In the Landau paradigm, a phase transition is characterized by an order parameter -- 
a measurable quantity encoding some macroscopic property of the system that undergoes a discontinuous change at a critical point (e.g., a critical temperature, critical pressure, etc.). 
The change in the order parameter is tied to an associated symmetry breaking, where the order parameter is zero in the high-symmetry disordered phase and nonzero in the low-symmetry ordered phase. 
Identifying when a phase transition occurs thus requires knowledge of an appropriate order parameter or the corresponding symmetry.

Physical intuition or experimental input can provide insight towards the identification of the correct order parameter and its relevant symmetry. 
However, there are well known examples of order parameters that are nonlocal or exist in a more abstract space. 
Notable examples include the Haldane transition in spin-$1$ antiferromagnetic chains~\cite{Haldane1983, Kennedy1992}, the breaking of gauge symmetry across the superconducting transition~\cite{Anderson1963}, or the emergence of topological order in the quantum Hall states~\cite{Wen:1990ws}. 
In cases like these, there is no general method for identifying order parameters and their associated symmetries. 
There are also materials where cross-over behavior is observed that may or may not be associated with a true phase transition. 
Perhaps the most famous example of this is the pseudogap ``phase'' of the high-T$_c$ cuprates~\cite{Kivelson2019}. 
Thus, it would be very advantageous to have a general method of identifying an order parameter, detecting sudden changes in its value, and determining the corresponding broken symmetry across the transition. 
Our goal here is to introduce such a method.

In recent years, techniques from machine learning---in particular, deep neural networks (DNNs)~\cite{goodfellow2016deep}---have been used to successfully identify phase transitions in both classical and quantum many-body lattice systems in a purely data-driven manner~\cite{carrasquilla2017machine, broecker2017machine, ch2017machine, wetzel2017machine,Morningstar:2018dl,Efthymiou:2019uy,Walker:2020hi,JohnstonReview}. 
DNNs are complex parametric models consisting of an alternating composition of linear and nonlinear transformations; 
such models now constitute the state-of-the-art for a variety of problems in domains such as computer vision and natural language understanding~\cite{krizhevsky2012imagenet, antipov2015learned, liang2017text}. 
To date, most applications using DNNs to detect phase transitions have focused on Monte Carlo (MC) simulations of lattice models, which is natural given that large volumes of training and validation data can be easily generated~\cite{JohnstonReview}.


More recently, methods from unsupervised learning have been applied to the problem of identifying phase transitions~\cite{Wang2016unsupervised, wetzel2017unsupervised, ch2018unsupervised, alexandrou2020critical, yevick2021variational}. 
Unsupervised learning is the paradigm used to find structure in unlabeled data, such as its intrinsic dimensionality. 
Perhaps the most well-known methods for dimensionality reduction are principal components analysis (PCA) and the autoencoder, 
where the latter is a DNN with an encoder-decoder architecture that may be thought of as a nonlinear generalization of PCA~\cite{hinton2006reducing, kingma2014auto}.

The main contribution of this paper is a new DNN method for identifying phase transitions, which we call the group-equivariant autoencoder (GE-autoencoder). 
In contrast to previous methods cited above, the GE-autoencoder is specifically designed to identify which symmetries of a given system are broken at each point in a region of its phase diagram; 
the identification of the corresponding phase transition is thus a corollary. 
In this way, the GE-autoencoder not only locates phase transitions but gives insight into its mechanism via the associated spontaneous symmetry breaking (SSB). 
Our method only assumes that (1) we have knowledge of the symmetry group $G$ of the system Hamiltonian and that (2) we have selected a latent dimensionality for the GE-autoencoder; 
the key steps of the GE-autoencoder method are then the following:
\begin{enumerate}
\item Use group theory to deduce the subgroup $G_{\mathrm{NB}}$ of ``never-broken symmetries''-- i.e., the symmetries in $G$ that remain in tact in all phases of the system. 
\item Constrain the GE-autoencoder such that it learns a $G_{\mathrm{NB}}$-invariant order parameter. 
\item Train the GE-autoencoder using ``symmetry regularization'' such that it learns a $G$-equivariant order parameter.
\end{enumerate}
During training, the GE-autoencoder learns the representation of $G$ by which the order parameter transforms, and from this we can extract information about the associated SSB.

The advantage of the GE-autoencoder over previous symmetry-agnostic DNN methods is three-fold. 
First, the GE-autoencoder exploits knowledge about the symmetries of the system that would otherwise be wasted. 
The point of using ML for identifying phase transitions is that it does not require us to have knowledge of the relevant order parameter, but this does not mean we should forget the knowledge we may have-- such as symmetries of the high-energy microscopic Hamiltonian. 
Second, thanks to the never-broken symmetries constraining the GE-autoencoder as well as training with symmetry regularization, we expect the GE-autoencoder to locate phase transitions with greater accuracy, efficiency, and robustness than symmetry-agnostic methods. 
Third and finally, as already mentioned, the GE-autoencoder not only identifies phase transitions but provides information about the associated SSB, thereby elucidating its mechanism. 
Having access to details on broken symmetries provides information on how to couple to the order parameter via a conjugate field-- a requirement for probing associated phase transitions in the laboratory.

Since the GE-autoencoder is a new method, we focus on the details of the methodology in this paper and test it as a proof-of-principle on the 2D ferromagnetic and antiferromagnetic Ising models. 
Moreover, we discuss numerous implementation details throughout the paper that were essential for obtaining conclusive results. 
The paper is organized as follows: 
In Sec.~\ref{sec:background}, we review as background the mathematical notion of SSB, the Ising model, and autoencoders. 
In Sec.~\ref{sec:methods}, we describe the GE-autoencoder method in detail, focusing on the case of a 1D order observable for clarity; 
we also describe the experimental setup, including calculations of the DNN initialization and learning rate settings required for fair experimental comparisons. 
In Sec.~\ref{sec:results}, we present our experimental results; 
we find  that the GE-autoencoder accurately identifies which symmetries are broken at each temperature and estimates the critical temperature with greater accuracy, time-efficiency, and robustness than a baseline autoencoder. 
We concurrently give additional details of the data analysis, including a new method for extracting stable critical temperature estimates from statistics of trained autoencoder models. 
In Sec.~\ref{sec:vector}, we extend the GE-autoencoder to support arbitrary finite symmetry groups and vector-valued order observables, paving the way for future applications. 
Finally, in Sec.~\ref{sec:discussion}, we conclude the paper with a discussion of its key findings, implications, and directions for future work.%
\footnote{Code to reproduce all results in this paper can be found at \url{https://github.com/dagrawa2/ssb_detection_ising}. Permanent link: \url{https://doi.org/10.5281/zenodo.6055507}.}

\section{Background}
\label{sec:background}

\subsection{Spontaneous symmetry breaking}
\label{sec:ssb}

In this section, we review the concept of spontaneous symmetry breaking (SSB) from a mathematical perspective, which will help us formulate the method. This  
discussion is based on the one given in Ref.~[\citenum{georgii2011gibbs}]. 

Consider a classical many-body system on a lattice whose size is parameterized by $L$ (e.g., for a hypercubic lattice, $L$ is the size of one dimension). Let $\XX$ denote the space of all lattice configurations, 
and suppose the system Hamiltonian is invariant under the action of a group $G$ on $\XX$. An \emph{equilibrium state} is then a distribution of lattice configurations that maximizes the entropy subject to a fixed expected internal energy. In other words, the equilibrium state solves a constrained convex optimization problem. 
For finite systems, the equilibrium state is uniquely the well-known Boltzmann distribution over lattice configurations. 
In the thermodynamic limit ($L\rightarrow\infty$), on the other hand, uniqueness is no longer necessary, and in general we have a polyhedral solution set $\calS$ of equilibrium states. An abrupt change in the structure (e.g., dimensionality) of this set $\calS$ with respect to temperature or any other tuning parameter in the Hamiltonian is called a \emph{phase transition}. 
Here we restrict ourselves to disorder-order phase transitions, where the equilibrium state changes from being unique (disordered) to not unique (ordered).

When viewed in this framework, every symmetry (element) in $G$ sends an equilibrium state to an equilibrium state. Thus, in the disordered phase, the unique equilibrium state is itself $G$-invariant. However, in the ordered phase, the equilibrium states are no longer necessarily $G$-invariant, as they may permute under the action of an element of $G$. This phenomenon is often called \emph{spontaneous symmetry breaking} (SSB).

The abstract polyhedral set $\calS$ of equilibrium states is made concrete by way of an \emph{order parameter} -- a linear embedding of $\calS$ 
into a Euclidean space of dimension $d = \operatorname{dim}(\calS)$, with the centroid of the polyhedron mapped to the origin. 
By the Riesz-Markov-Kakutani Representation Theorem, there exists an \emph{order observable} $\calO:\XX\mapsto\RR^d$ such that the order parameter sends each equilibrium state $\nu$ to the expectation
\[ \langle\calO\rangle_{\nu} = \int_{\XX}\calO(x)\,\mathrm{d}\nu(x). \]
Without loss of generality, $\calO$ can be chosen such that (1) $\langle\calO\rangle_{\nu}=0$ in the disordered phase and (2) it is  $G$-equivariant; 
by $G$-equivariant, we mean $\calO(gx) = \psi_g\calO(x)$ for all $g\in G,x\in\XX$, where $\psi$ is a nontrivial real-orthogonal representation of $G$. 
Every symmetry $g\in G$ for which $\psi_g= 1$ is then said to be \emph{never-broken} and is otherwise \emph{broken} in the ordered phase.

Markov Chain Monte Carlo (MCMC) simulations of lattice systems are designed to converge to the average equilibrium state $\overline{\nu}$ (the centroid of the set $\calS$), which is $G$-invariant and satisfies $\langle\calO\rangle_{\overline{\nu}}=0$ in both the disordered and ordered phases. 
Hence, the order parameter as defined above cannot be used to distinguish the two phases. 
It is possible, however, to define a general expression in terms of the observable $\calO$ that can. 
For simplicity, for most of this work we focus our exposition on scalar order parameters ($d=1$); 
for the case of higher-dimensional order parameters ($d\geq 2$), see Sec.~\ref{sec:vector}. 
In the scalar case, the expected absolute value $\langle |\calO|\rangle_{\overline{\nu}}$ is sufficient to distinguish the two phases, 
taking a value of zero in the disordered phase (in the thermodynamic limit) and a nonzero value in the ordered phase. 
Although a misnomer, we will refer to $\langle |\calO|\rangle_{\overline{\nu}}$ throughout this paper as the order parameter and will drop the subscript $\overline{\nu}$. 
In the limited case of $d=1$, the representation $\psi$ takes values in $\{-1, 1\}$ with $\psi_g=-1$ for at least one $g\in G$ since the representation must be nontrivial.

\subsection{The Ising model}
\label{sec:ising}

One of the simplest and most well-studied lattice systems is the classical Ising model in 2D; it is both rich enough to exhibit a second-order phase transition while also admitting an exact solution~\cite{onsager1944crystal}. 
We consider the Ising model on a square $L\times L$ lattice with $L$ even and periodic boundary conditions. 
A lattice configuration is obtained by assigning to each lattice site a classical spin $x_{\bf i}=\pm 1$, where ${\bf i} = (i_x,i_y)$ are the spatial indices of the site. The space of all lattice configurations is $\XX = \{-1, 1\}^{L\times L}$. 
The Ising Hamiltonian is
\begin{equation}\label{Eq:HIsing}
\calH(\mathbf{x}) = -J\sum_{\langle {\bf i},{\bf j}\rangle} x_{\bf i} x_{\bf j},
\end{equation}
where $\mathbf{x}$ is an $L\times L$ matrix with entries $x_{\bf i}\in\{-1, 1\}$, $J$ is the coupling constant, and the sum is taken over all pairs of neighboring lattice sites. We set $J = \pm 1$, where $J=1$ (resp.~$J=-1$) corresponds to a ferromagnetic (resp.~antiferromagnetic) magnetic interaction. 

Equation \eqref{Eq:HIsing} has both spatial (translations, reflections, and orthogonal rotations) and spin-flip ($\mathbf{x}\rightarrow -\mathbf{x}$) internal symmetries. 
The Ising symmetry group $G$ admits a presentation with independent generators $\alpha$, $\rho$, $\tau$, and $\sigma$~(see Fig.~\ref{fig:generators}). Picturing the Ising lattice as a matrix, $\alpha$ can be interpreted as a downward (cyclic) translation; 
$\rho$ is a $90^\circ$-counterclockwise rotation about the origin; 
$\tau$ is a reflection about the vertical line of symmetry; 
and $\sigma$ is the spin-flip internal symmetry (not shown in Fig.~\ref{fig:generators}). 
Every symmetry operation of the Ising model can be expressed in terms of the four generators of $G$.  For example, a rightward translation can be expressed as $\rho\alpha\rho^{-1}$ while a reflection about the diagonal can be expressed as 
$\tau\rho$, as shown in Fig.~\ref{fig:generators}.  
Algebraicly, this presentation of $G$ is defined by the following relations:
\begin{align*}
\alpha^L &= \rho^4 = \tau^2 = \sigma^2 = 1 \\
\rho\tau &= \tau\rho^3 \\
\alpha\rho^2 &= \rho^2\alpha^{-1} \\
\alpha\tau &= \tau\alpha \\
g\sigma &= \sigma g\ \forall g\in G.
\end{align*}

\begin{figure}
\centering
\begin{tikzpicture}
\scriptsize
\begin{axis}[
axis lines=center,
axis line style={->},
axis equal image, 
xmin=-1.2, xmax=1, 
ymin=-1, ymax=1.2, 
xtick={}, ytick={},
xticklabels={}, yticklabels={},
disabledatascaling,
hide x axis, hide y axis]
\foreach \x in {-1, -0.75, -0.5, -0.25, 0, 0.25, 0.5, 0.75, 1} {%
\addplot[black] coordinates {(\x, -1.0) (\x, 1.0)};
\addplot[black] coordinates {(-1.0, \x) (1.0, \x)};
}
\addplot[green, ->] coordinates {(-1.1, 1.0) (-1.1, 0.75)}
node[anchor=north, pos=1, text=green] {\Large $\alpha$};
\draw[red, ->] (axis cs:{cos(110)}, {sin(110)}) arc[radius=1, start angle=110, end angle=170];
\draw[red, ->] (axis cs:{cos(-80)}, {sin(-80)}) arc[radius=1, start angle=-80, end angle=-10]
node[anchor=west, pos=0.5, text=red] {\Large $\rho$};
\addplot[blue, dashed, <->] coordinates {(-0.02, -1.0) (-0.02, 1.0)};
\draw[blue, <->] (axis cs:-0.1, -0.3) arc[radius={sqrt(0.1)}, start angle={-atan(0.1/0.3)-90}, end angle={atan(0.1/0.3)-90}]
node[anchor=north, pos=0.25, text=blue] {\Large $\tau$};
\end{axis}
\node[above, font=\small\bfseries] at (current bounding box.north) {Generators};
\end{tikzpicture}%
\qquad %
\begin{tikzpicture}
\scriptsize
\begin{axis}[
axis lines=center,
axis line style={->},
axis equal image, 
xmin=-1.2, xmax=1, 
ymin=-1, ymax=1.2, 
xtick={}, ytick={},
xticklabels={}, yticklabels={},
disabledatascaling,
hide x axis, hide y axis]
\foreach \x in {-1, -0.75, -0.5, -0.25, 0, 0.25, 0.5, 0.75, 1} {%
\addplot[black] coordinates {(\x, -1.0) (\x, 1.0)};
\addplot[black] coordinates {(-1.0, \x) (1.0, \x)};
}

\addplot[white, ->] coordinates {(-1.1, 1.0) (-1.1, 0.75)}
node[anchor=north, pos=1, text=white] {\Large $\alpha$};

\addplot[brown, ->] coordinates {(-1.0, 1.1) (-0.75, 1.1)}
node[anchor=west, pos=1] {\Large $\textcolor{red}{\rho}\textcolor{green}{\alpha}\textcolor{red}{\rho^{-1}}$};
\addplot[magenta, dashed, <->] coordinates {(-1.0, 0.02) (1.0, 0.02)};
\draw[magenta, <->] (axis cs:-0.3, 0.1) arc[radius={sqrt(0.1)}, start angle={180-atan(0.1/0.3)}, end angle={180+atan(0.1/0.3)}]
node[anchor=east, pos=-0.1] {\Large $\textcolor{red}{\rho}\textcolor{blue}{\tau}\textcolor{red}{\rho^{-1}}$};
\addplot[violet, dashed, <->] coordinates {(-1.0, -1.0) (1.0, 1.0)};
\draw[violet, <->] (axis cs:0.282843, 0.141421) arc[radius={sqrt(0.1)}, start angle={45-atan(0.1/0.3)}, end angle={45+atan(0.1/0.3)}]
node[anchor=west, pos=0.25] {\Large $\textcolor{blue}{\tau}\textcolor{red}{\rho}$};
\end{axis}
\node[above, font=\small\bfseries] at (current bounding box.north) {Compounds};
\end{tikzpicture}
\caption{\label{fig:generators} %
Generators (top) of the spatial Ising symmetries acting on a square lattice, 
and some example compound spatial symmetries (bottom) that can be formed from the generators.
}
\end{figure}
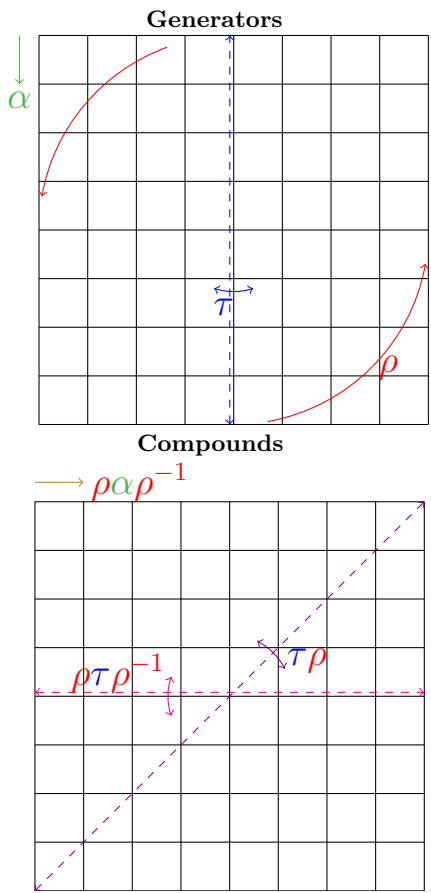

In both the ferromagnetic and antiferromagnetic cases, the 2D Ising model exhibits a second-order phase transition at the critical temperature $T_{\mathrm{c}}/J = \frac{\sqrt{2}}{\log(1+\sqrt{2})}=2.269\ldots$~\cite{onsager1944crystal}. 
The phase transition is associated with the spontaneous breaking of the $\mathbf{x}\mapsto -\mathbf{x}$ internal symmetry. 
The associated order parameter is the expected absolute value $\langle |M|\rangle$ (resp.~$\langle |M_{\mathrm{stag}}|\rangle$) of the magnetization $M$ (resp.~staggered magnetization $M_{\mathrm{stag}}$) in the ferromagnetic (resp.~antiferromagnetic) case, 
where
\begin{align}
M(\mathbf{x}) &= \frac{1}{L^2}\sum_{\bf i} x_{\bf i} \label{eq:magnetization},~\mathrm{and} \\
M_{\mathrm{stag}}(\mathbf{x}) &= \frac{1}{L^2}\left(\sum_{i_x+i_y \mbox{ even}}x_{\bf i} - \sum_{i_x+i_y \mbox{ odd}}x_{\bf i}\right). \label{eq:staggered_magnetization}
\end{align}
Note that both $M$ and $M_{\mathrm{stag}}$ are equivariant functions with respect to the spatial and internal Ising symmetries.

\subsection{Autoencoders}
\label{sec:autoencoders}

The core of our method is the autoencoder, a DNN architecture used for various unsupervised learning tasks~\cite{hinton2006reducing, kingma2014auto}, which we use for dimensionality reduction or ``compression.'' Given a dataset $\{\mathbf{x}_n\in\RR^m\}_{n=1}^N$, it is a common assumption in the traditional domains of computer vision and natural language understanding that the data points lie on a low-dimensional manifold embedded in $\RR^m$. The autoencoder is a means to discovering this intrinsic manifold structure. 
An autoencoder consists of a pair of DNNs --- an encoder $\calO:\RR^m\mapsto\RR^d$ (which will ultimately represent an observable in our application) and a decoder $\calD:\RR^d\mapsto\RR^m$, 
where $d < m$ is the assumed dimensionality of the intrinsic data manifold~(Fig.~\ref{fig:autoencoder}). 
The encoder thus maps its input to a low-dimensional ``latent'' or ``compressed'' representation in terms of intrinsic coordinates on the manifold, 
and the decoder attempts to reconstruct the original input given the latent representation by learning the embedding of the manifold into $\RR^m$. 
The autoencoder is trained by minimizing the reconstruction loss
\[ \calL(\calO, \calD) = \frac{1}{N}\sum_{n=1}^N L_{\mathrm{metric}}(\calD(\calO(\mathbf{x}_n)), \mathbf{x}_n), \]
where $\calL(\calO, \calD)$ means that $\calL$ is a function of the network parameters of $\calO$ and $\calD$, and $L_{\mathrm{metric}}$ is some metric (such as mean square error or binary cross-entropy) that measures the difference between the reconstructed and original inputs. 
Once trained, the encoder can be used to obtain low-dimensional ``summaries'' of the data. Below we describe how this aspect can be used for identifying phase transitions.

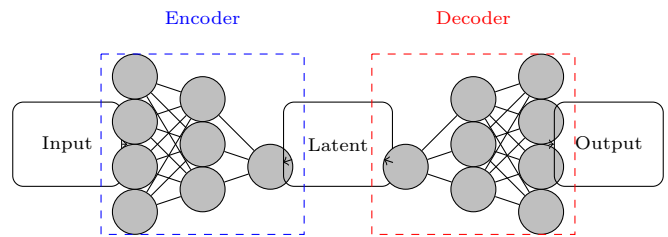
\begin{figure}
\centering
\def\layersep{1.5cm}
\begin{tikzpicture}[node distance=\layersep, auto, scale=0.6]
\scriptsize
\tikzstyle{neuron}=[circle, draw, fill=black!25, minimum size=17pt, inner sep=0pt];
\tikzstyle{block} = [rectangle, draw, 
text width=4.5em, text centered, rounded corners, minimum height=4em]
\tikzstyle{annot} = [text width=4em, text centered]

\node [block] (input) at (-\layersep, 2.5) {Input};
\node[draw=none] (input-right) at (0, 2.5) {};
\path[->] (input) edge (input-right);

\foreach \y in {1,...,4}
\node[neuron] (EI-\y) at (0, \y) {};
\foreach \y in {1,...,3}
\node[neuron] (EH-\y) at (\layersep, \y.5) {};
\foreach \y in {1,...,1}
\node[neuron] (EO-\y) at (2*\layersep, \y+1) {};
\foreach \source in {1,...,4}
\foreach \dest in {1,...,3}
\path (EI-\source) edge (EH-\dest);
\foreach \source in {1,...,3}
\foreach \dest in {1,...,1}
\path (EH-\source) edge (EO-\dest);

\node [block] (latent) at (3*\layersep, 2.5) {Latent};
\path[->] (EO-1) edge (latent);

\foreach \y in {1,...,1}
\node[neuron] (DI-\y) at (4*\layersep, \y+1) {};
\foreach \y in {1,...,3}
\node[neuron] (DH-\y) at (5*\layersep, \y.5) {};
\foreach \y in {1,...,4}
\node[neuron] (DO-\y) at (6*\layersep, \y) {};
\foreach \source in {1,...,1}
\foreach \dest in {1,...,3}
\path (DI-\source) edge (DH-\dest);
\foreach \source in {1,...,3}
\foreach \dest in {1,...,4}
\path (DH-\source) edge (DO-\dest);

\path[->] (latent) edge (DI-1);

\node [block] (output) at (7*\layersep, 2.5) {Output};
\node[draw=none] (output-left) at (6*\layersep, 2.5) {};
\path[->] (output-left) edge (output);

\draw[blue, dashed] (-0.5*\layersep, 4.5)--(2.5*\layersep, 4.5)--(2.5*\layersep, 0.5)--(-0.5*\layersep, 0.5)--cycle;
\node[annot, anchor=south, text=blue] at (\layersep, 5.0) {Encoder};

\draw[red, dashed] (3.5*\layersep, 4.5)--(6.5*\layersep, 4.5)--(6.5*\layersep, 0.5)--(3.5*\layersep, 0.5)--cycle;
\node[annot, anchor=south, text=red] at (5*\layersep, 5.0) {Decoder};

\end{tikzpicture}
\caption{\label{fig:autoencoder} %
Schematic illustration of the autoencoder architecture.
}
\end{figure}

\section{Methods}
\label{sec:methods}

\subsection{Detecting phase transitions with autoencoders}
\label{sec:autoencoders_for_phase}
In our autoencoder, the encoder is exactly the sought-after order observable $\calO:\XX\mapsto\RR^d$ as introduced in Sec.~\ref{sec:ssb}, except that we do not require it to be $G$-equivariant at present. 
The decoder $\calD:\RR^d\times\RR\mapsto\XX$ then represents the conditional Boltzmann distribution of lattice configurations given a value of the observable $\calO$ and a temperature $T$. 
Note that in contrast to traditional autoencoders, this decoder accepts a second argument --- the temperature --- as a direct input; we do this because we know the Boltzmann distribution of lattice configurations depends on temperature.  This explicit temperature dependence is the first novelty of our method, as previous works on autoencoders for identifying phase transitions assumed a temperature-independent architecture.  Once the autoencoder is trained, we may interpret any abrupt change in the distribution of the learned observable $\calO$ with respect to temperature as indicative of a phase transition.

Even in the absence of any knowledge about the symmetries of the Hamiltonian, previous works found that autoencoders could identify phase transitions with some accuracy~\cite{alexandrou2020critical}. 
However, the reason for the autoencoder's efficacy remains unclear at this time. 
Rather than relying only on the empirical success of autoencoders, we motivate their use with the following intuition: Using general information theory, we can show that training an autoencoder is equivalent to maximizing the entropy of the observable $\calO$ learned by the encoder, 
where we regard $\calO$ as a function of the random lattice configuration across different temperatures. 
Thus, training the autoencoder moves the distribution of $\calO$ closer to a uniform distribution with as large of a support as possible, 
and this in turn means that $\calO$ learns to aggregate low-probability states together. 
This property seems to mimic the type of coarse-graining performed in Landau theory, 
where the competition between high-probability states and aggregations of low-probability states drives a phase transition.

For the example application of the Ising model, we define the encoder and decoder to have shallow neural network architectures, each with one hidden layer of nonlinear activation units:
\begin{align}
\calO(\mathbf{x}) &= \mathbf{c} + \sum_{k=1}^h \mathbf{a}_k\phi(\langle \mathbf{w}_k, \mathbf{x}\rangle_F + b_k) \label{eq:encoder} \\
\calD(\mathbf{z}, T) &= \tanh\left[\mathbf{c}^{\prime} + \sum_{k=1}^{h^{\prime}}\mathbf{a}^{\prime}_k\phi(\langle \mathbf{w}^{\prime}_k, \mathbf{z}\rangle_F+b^{\prime}_k+b^{\prime\prime}T)\right], \label{eq:decoder}
\end{align}
where $b_k\in\RR$, $\mathbf{a}_k,\mathbf{c}\in\RR^d$, $\mathbf{w}_k\in\RR^{L\times L}$; 
$b^{\prime}_k,b^{\prime\prime}\in\RR$, $\mathbf{a}^{\prime}_k,\mathbf{c}^{\prime}\in\RR^{L\times L}$, $\mathbf{w}^{\prime}_k\in\RR^d$; 
$h$ and $h^\prime$ are the number of hidden neurons in the encoder and decoder, respectively; 
$\langle\cdot, \cdot\rangle_F$ denotes the Frobenius inner product (Hadamard product of matrices followed by a sum over all entries); 
$\tanh()$ is applied elementwise; 
and $\phi:\RR\mapsto\RR$ is the elementwise leaky rectified linear unit (ReLU) activation function defined as
\begin{equation}
\phi(y) = 
\begin{cases}
0.01y, & \mbox{ if } y < 0 \\
y, & \mbox{ otherwise.}
\end{cases}
\end{equation}
The $\tanh$ function is used in the decoder to guarantee each output component lies in the interval $(-1, 1)$. 
We set $h=4$ and $h^\prime=64$. We also set $d=1$, as stated in Sec.~\ref{sec:ssb}. 
Since the magnetization and staggered magnetization are linear functions, a linear encoder and linear decoder would have been sufficient. However, for the purpose of demonstrating the efficacy of our method, we assume no knowledge of the system except a dataset of MC-sampled lattice configurations over a range of temperatures and the group $G$ of Ising model symmetries. We therefore consider an architecture deliberately more complex than a linear autoencoder, and one that would be a reasonable initial choice given no additional information about the system.

Now, given a dataset $\{(\mathbf{x}_n, T_n)\in \XX\times [0, \infty)\}_{n=1}^N$ of lattice configurations $\mathbf{x}_n$ at temperatures $T_n$, we train the autoencoder by minimizing the loss
\begin{equation} \label{eq:loss}
\calL(\calO, \calD) = \frac{1}{N}\sum_{n=1}^N L_{\mathrm{BCE}}(\calD(\calO(\mathbf{x}_n), T_n), \mathbf{x}_n),
\end{equation}
where $L_{\mathrm{BCE}}:(-1, 1)^{L\times L}\times\{-1, 1\}^{L\times L}\mapsto (0, \infty)$ is the binary cross-entropy loss function defined as
\begin{multline} \label{eq:bce}
L_{\mathrm{BCE}}(\mathbf{\hat{x}}, \mathbf{x}) =
-\sum_{\bf i}\left[\left(\frac{1+x_{\bf i}}{2}\right)\log\left(\frac{1+\hat{x}_{\bf i}}{2}\right)
    \right. \\ 
\left. + \left(\frac{1-x_{\bf i}}{2}\right)\log\left(\frac{1-\hat{x}_{\bf i}}{2}\right)\right],
\end{multline}
where $\mathbf{\hat{x}}$ is the output of the autoencoder.

\subsection{The group-equivariant autoencoder}

We now extend the baseline autoencoder introduced in Sec.~\ref{sec:autoencoders_for_phase} to a \emph{group-equivariant autoencoder} (GE-autoencoder) by incorporating our prior knowledge about the symmetries of the system's Hamiltonian into the network architecture. 
Once trained, we will then be able to interpret the GE-autoencoder to infer which symmetries are spontaneously broken at any temperature. 

\subsubsection{The subgroup of never-broken symmetries}

The first step is to see if we can identify a subgroup of ``never-broken symmetries'' -- symmetries that do not spontaneously break at any temperature. Identifying these will reduce the number of symmetries that we will ultimately have to check. Our approach for this step is entirely group-theoretic.

Recall from Sec.~\ref{sec:ssb} that a symmetry $g\in G$ will remain unbroken as a function of model parameters and temperature if $\psi_g=1$. In the absence of any knowledge about the true representation $\psi$ associated with the order parameter, 
we can deduce a subgroup of never-broken symmetries by finding all symmetries $g\in G$ such that $\psi_g=1$ for all representations $\psi:G\mapsto\{-1, 1\}$. 
We establish such a subgroup for the Ising symmetry group in Prop.~\ref{prop:never_broken} (see Appendix~\ref{appendix:propositions:never}); 
we denote the subgroup  as $\scb(L) = \langle\alpha^2,\rho^2,(\alpha\rho)^2\rangle$ and refer to it as the \emph{special checkerboard group}, 
as it represents the group of all proper (i.e., no reflections) symmetries of an $L\times L$ checkerboard that map black (resp.~white) squares onto black (resp.~white) squares%
\footnote{The set of black squares (resp.~white squares) is also referred to as sublattice A (resp.~sublattice B) in the literature.}. 
Thus, all even-parity translational symmetries and the $180^\circ$-rotational symmetry are never spontaneously broken in the Ising model. 

Having established a subgroup of never-broken symmetries, it can be shown (see Appendix~\ref{appendix:propositions:never}) that the only symmetries we have to check for SSB are $\alpha^{m_1}\rho^{m_2}\tau^{m_3}\sigma^{m_4}$ for $m_i\in \{0, 1\}$. 
This represents a reduction from $16L^2$ to $16$ symmetries to check, so that the complexity of detecting SSB is now independent of lattice size; an important advance of the proposed method.

\subsubsection{Incorporating symmetries into the encoder}

The next step of our method is to incorporate the deduced subgroup of never-broken symmetries into our autoencoder. 
Recall from Sec.~\ref{sec:ssb} that the observable $\calO:\XX\mapsto\RR^d$, which is modeled by the encoder network of the autoencoder, must be $G$-equivariant; 
we start by first constraining the parameters of the encoder $\calO$ [Eq.~\eqref{eq:encoder}] such that it is invariant to the subgroup of never-broken symmetries $\scb(L)$. 
However, it turns out that there are many inequivalent ways to do this, and it is unclear which set of constraints is optimal. 
A complete classification of all ways this information can be incorporated, as well as the development of a metric by which to determine which way is best, is beyond the scope of this paper and is left for future work. Here, we enforce invariance in a simple way and find that it yields good results. 
The general idea is illustrated in the top panel of Fig.~\ref{fig:encoder}.

We start with Eq.~\eqref{eq:encoder} for the observable encoder $\calO:\XX\mapsto\RR$. 
We constrain the elements $w_{k,{\bf i}}$ of each matrix $\mathbf{w}_k$ to be
\begin{equation} \label{eq:uivi}
w_{k,{\bf i}} = \frac{2}{L^2}
\begin{cases}
u_k, & \mbox{ if } i_x+i_y \mbox{ is even} \\
v_k, & \mbox{ otherwise.}
\end{cases}
\end{equation}
Each $\mathbf{w}_k$ is thus constrained to have a ``checkerboard'' pattern and is invariant under the action of $\scb(L)$. 
Since it can be shown that $G$ acts orthogonally on all of $\RR^{L\times L}$, then the invariance of $\calO$ under $\scb(L)$ immediately follows.

\begin{figure}
\centering
\def\layersep{1.5cm}
\begin{tikzpicture}[node distance=\layersep, auto, scale=0.75]
\scriptsize
\tikzstyle{neuron}=[circle, draw, fill=black!25, minimum size=17pt, inner sep=0pt];
\tikzstyle{annot} = [text width=4em, text centered]

\foreach \y/\shade in {1/25, 2/5, 3/25, 4/5}
\node[rectangle, draw, fill=black!\shade] (I-\y) at (-\layersep, \y) {};
\draw[black, dashed] (-1.4*\layersep, 4.5)--(-0.6*\layersep, 4.5)--(-0.6*\layersep, 0.5)--(-1.4*\layersep, 0.5)--cycle;
\node[annot, anchor=south, text=black] at (-\layersep, 4.6) {Input};

\foreach \y in {1,...,4}
\node[neuron] (EI-\y) at (0, \y) {};
\foreach \y in {1,...,3}
\node[neuron] (EH-\y) at (\layersep, \y.5) {};
\foreach \y in {1,...,1}
\node[neuron] (EO-\y) at (2*\layersep, \y+1) {};
\foreach \source/\shade in {1/100, 2/50, 3/100, 4/50}
\foreach \dest/\c in {1/red, 2/green, 3/blue}
\path[\c!\shade] (EI-\source) edge (EH-\dest);
\foreach \source/\shade in {1/100, 2/75, 3/50}
\foreach \dest in {1,...,1}
\path[purple!\shade] (EH-\source) edge (EO-\dest);
\draw[black, dashed] (-0.4*\layersep, 4.5)--(2.4*\layersep, 4.5)--(2.4*\layersep, 0.5)--(-0.4*\layersep, 0.5)--cycle;
\node[annot, anchor=south, text=black] at (\layersep, 4.6) {Encoder};

\foreach \y in {1,...,4}
\path[->] (I-\y) edge (EI-\y);

\node[above, font=\small\bfseries] at (current bounding box.north) {Constrained encoder};

\end{tikzpicture}%
\qquad %
\def\layersep{1.5cm}
\begin{tikzpicture}[node distance=\layersep, auto, scale=0.75]
\scriptsize
\tikzstyle{neuron}=[circle, draw, fill=black!25, minimum size=17pt, inner sep=0pt];
\tikzstyle{annot} = [text width=4em, text centered]

\foreach \y/\shade in {1/25, 2/5, 3/25, 4/5}
\node[rectangle, draw, fill=black!\shade] (I-\y) at (-\layersep, \y) {};
\draw[black, dashed] (-1.4*\layersep, 4.5)--(-0.6*\layersep, 4.5)--(-0.6*\layersep, 0.5)--(-1.4*\layersep, 0.5)--cycle;
\node[annot, anchor=south, text=black] at (-\layersep, 4.6) {Input};

\foreach \y in {1,...,2}
\node[neuron] (EI-\y) at (0, \y+1) {};
\foreach \y in {1,...,3}
\node[neuron] (EH-\y) at (\layersep, \y.5) {};
\foreach \y in {1,...,1}
\node[neuron] (EO-\y) at (2*\layersep, \y+1) {};
\foreach \source/\shade in {1/100, 2/50}
\foreach \dest/\c in {1/red, 2/green, 3/blue}
\path[\c!\shade] (EI-\source) edge (EH-\dest);
\foreach \source/\shade in {1/100, 2/75, 3/50}
\foreach \dest in {1,...,1}
\path[purple!\shade] (EH-\source) edge (EO-\dest);
\draw[black, dashed] (-0.4*\layersep, 4.5)--(2.4*\layersep, 4.5)--(2.4*\layersep, 0.5)--(-0.4*\layersep, 0.5)--cycle;
\node[annot, anchor=south, text=black] at (\layersep, 4.6) {Encoder};

\foreach \source/\dest in {1/1, 2/2, 3/1, 4/2}
\path[->] (I-\source) edge (EI-\dest);

\node[above, font=\small\bfseries] at (current bounding box.north) {Reduced encoder};

\end{tikzpicture}
\caption{\label{fig:encoder} %
Top: Schematic illustration of an encoder constrained to be invariant to a group of symmetries (even-unit cyclic translations in the illustration). 
Network edges of same color and shade are constrained to have equal weight. 
Bottom: Reduced representation of the same encoder. 
Rather than replicating weights in the first layer, lattice sites that feed into the same input neuron are first averaged.
}
\end{figure}
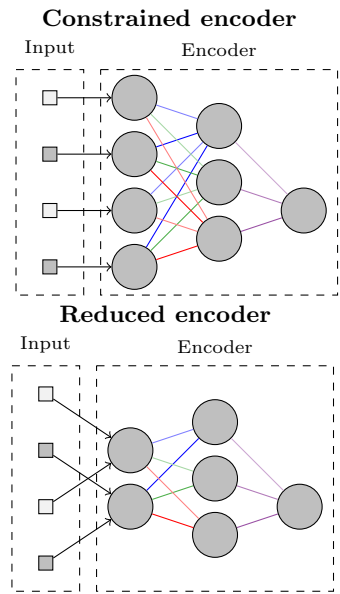

The constraints imposed on the $\mathbf{w}_k$ allow for a significant simplification of the expression for the encoder; 
this is illustrated in Fig.~\ref{fig:encoder}~(bottom). 
For an $L\times L$ lattice configuration $\mathbf{x}$, define
\begin{align*}
\checkb{x} &= \frac{2}{L^2}\sum_{i_x+i_y \mbox{ even}} x_{\bf i} \\
\checkw{x} &= \frac{2}{L^2}\sum_{i_x+i_y \mbox{ odd}} x_{\bf i}.
\end{align*}
We refer to $(\checkb{x}, \checkw{x})$ as the ``checkerboard average'' of the lattice configuration $\mathbf{x}$, i.e., the average value over all ``black squares'' and the average value over all the ``white squares''. 
Then we have
\[ \langle \mathbf{w}_k, \mathbf{x}\rangle_F = u_k\checkb{x} + v_k\checkw{x}, \]
which allows for a more efficient implementation of the encoder. 
We define the reduced encoder $\check{\calO}:[-1, 1]^2\mapsto\RR$ by
\begin{equation} \label{eq:encoder_reduced}
\check{\calO}((\checkb{x}, \checkw{x})) = c + \sum_{k=1}^h a_k\phi(u_k\checkb{x}+v_k\checkw{x}+b_k).
\end{equation}
This result allows us to evaluate the encoder in Eq.~\eqref{eq:encoder} in two separate steps: 
First, we compute the checkerboard average $(\checkb{x}, \checkw{x})$ of the input lattice configuration $\mathbf{x}$. 
This task is a one-time calculation and can be done across the entire available dataset of lattice configurations as a preprocessing step. 
Second, we evaluate the reduced encoder [Eq.~\eqref{eq:encoder_reduced}] on these checkerboard averages. This task is now independent of the lattice size $L$. 
Note that the checkerboard average of $\mathbf{x}$ is manifestly invariant under the action of $\scb(L)$, and thus so is the reduced encoder $\check{\calO}$. 
Moreover, the form of the reduced encoder [Eq.~\eqref{eq:encoder_reduced}] places additional constraints on the spatial symmetries and implies $\psi_{\alpha} = \psi_{\rho} = \psi_{\tau}$ (see Prop.~\ref{prop:art} in Appendix~\ref{appendix:propositions:art} for details). 
The upshot is that we now need only estimate $\psi_{\sigma}$ and one of $\psi_{\alpha}$, $\psi_{\rho}$, and $\psi_{\tau}$ from the data; 
we choose $\psi_{\tau}$ without loss of generality. 
If $\psi_{\tau}=1$, then all spatial symmetries in $\langle\alpha,\rho,\tau\rangle$ never break spontaneously. 
Estimating $\psi_{\sigma}$ and $\psi_{\tau}$ is discussed in Sec.~\ref{sec:regularization}.

\subsubsection{Incorporating symmetries into the decoder}
\label{sec:decoder}

As with the encoder, we now incorporate the subgroup of never-broken symmetries into the decoder of our autoencoder. 
The starting point is Prop.~\ref{prop:invariant_min} (see Appendix~\ref{appendix:propositions:invariant}), which states that under suitable conditions, if an unsupervised model is fit to a dataset containing symmetries, then the fit model will be invariant to those symmetries at least when restricted to the dataset. 
Proposition~\ref{prop:invariant_min} provides a strong motivation to assume that our autoencoder $f = \calD\circ\calO$ is $G$-invariant;  
for $g\in G$,
\begin{align*}
g\calD(\calO(\mathbf{x}), T) &= \calD(\calO(g\mathbf{x}), T) \\
g\calD(\calO(\mathbf{x}), T) &= \calD(\psi_g\calO(\mathbf{x}), T).
\end{align*}
We demand this hold for all $\calO(\mathbf{x})$ and make the stronger assumption
\[ g\calD(z, T) = \calD(\psi_g z, T),~\forall g\in G,z\in\RR. \]
We have already deduced that $\psi_g=1$ for all $g\in\scb(L)$, giving us the constraint
\[ g\calD = \calD, \forall g\in\scb(L). \]
This constraint necessitates the output of the decoder to have a checkerboard pattern as in Eq.~\eqref{eq:uivi}, 
and thus it suffices to have the decoder return only two values ---  
one representing the value on the black squares and the other for the white squares. 
Therefore, we define the reduced decoder $\check{\calD}:\RR\times (0, \infty)\mapsto (-1, 1)^2$ as
\begin{equation} \label{eq:decoder_reduced}
\check{\calD}(z, T) = \tanh\left[\mathbf{c}^\prime + \sum_{k=1}^{h^\prime} \mathbf{a}^\prime_k\phi(w^\prime_k z + b^\prime_k + b^{\prime\prime} T)\right],
\end{equation}
where $\mathbf{a}^\prime_k,\mathbf{c}^\prime\in\RR^2$ and $w^\prime_k,b^\prime_k,b^{\prime\prime}\in\RR$; 
$h^\prime$ is the number of hidden neurons (we set $h^\prime=64$); 
and the functions $\tanh$ and $\phi$ are applied elementwise. 
We interpret the output as the checkerboard average of the output of the unreduced decoder $\calD$. 
Note that like the reduced encoder, the reduced decoder is now independent of the lattice size $L$. 
The reduced encoder [Eq.~\eqref{eq:encoder_reduced}] and reduced decoder [Eq.~\eqref{eq:decoder_reduced}] together comprise a reduced autoencoder $\check{\calD}\circ\check{\calO}$, 
which can now be trained directly on the preprocessed and reduced dataset of checkerboard-averaged lattice configurations.

\subsubsection{Symmetry regularization}
\label{sec:regularization}

The final step is to ensure that the (reduced) encoder is not only invariant to the subgroup $\scb(L)$ of never-broken symmetries but is in fact $G$-equivariant; 
we want $\check{\calO}(g\check{x}) = \psi_g\check{\calO}(\check{x})$ for all $g\in \{\alpha,\rho,\tau,\sigma\}$. 
By Prop.~\ref{prop:art}, it is sufficient to consider only $\tau$ and $\sigma$, 
and since $\psi_g=\pm 1$, then we want $\check{\calO}(g\check{x}) = \pm\check{\calO}(x)$ for all $g\in\{\tau,\sigma\}$. 
We impose this as a soft constraint by including regularization terms in the loss function used to train the reduced autoencoder:
\begin{align}
\check{\calL}(\check{\calO}, \check{\calD}) 
&= \frac{1}{N}\sum_{n=1}^N L_{\mathrm{BCE}}(\check{\calD}(\check{\calO}(\check{x}_n), T_n), \check{x}_n) \nonumber \\
&+ \lambda \sum_{g\in\{\tau,\sigma\}} \left(1 - \frac{\Vert\check{\calO}\circ g\Vert}{\Vert\check{\calO}\Vert}\right)^2 \nonumber \\
&+ \lambda \sum_{g\in\{\tau,\sigma\}} [1 - L_{\mathrm{cos}}(\check{\calO}, \check{\calO}\circ g)^2] \nonumber \\
&+ \lambda \min_{g\in\{\tau,\sigma\}} [1 + L_{\mathrm{cos}}(\check{\calO}, \check{\calO}\circ g)], \label{eq:loss_reduced}
\end{align}
where $\lambda \geq 0$ is a regularization coefficient and $L_{\mathrm{cos}}$ is the cosine similarity between observables defined as
\begin{equation*}
L_{\mathrm{cos}}(\calO_1, \calO_2) = \frac{\langle\calO_1, \calO_2\rangle}{\Vert\calO_1\Vert\Vert\calO_2\Vert},
\end{equation*}
where
\begin{equation*}
\langle\calO_1, \calO_2\rangle = \frac{1}{N}\sum_{n=1}^N \calO_1(\mathbf{x}_n)^\top\calO_2(\mathbf{x}_n).
\end{equation*}
The first regularization term [i.e., the second term in Eq.~\eqref{eq:loss_reduced}]
enforces the soft constraint $\Vert\check{\calO}\circ g\Vert \approx \Vert\check{\calO}\Vert$. The second regularization term drives the cosine similarity to one of its extreme values $\pm 1$. Together, these two terms encode the constraint $\check{\calO}\circ g \approx \pm 1\check{\calO}$ as desired. 
To explain the final regularization term, recall that we require $\psi_g=-1$ for some $g\in G$ to avoid a trivial representation. The last term drives at least one of $\psi_{\tau}$ and $\psi_{\sigma}$ to $-1$ to satisfy this requirement. 

Once trained, we estimate $\psi_g$ for $g\in\{\tau,\sigma\}$ with the final cosine similarity:
\begin{equation} \label{eq:psi}
\psi_g \approx L_{\mathrm{cos}}(\check{\calO}, \check{\calO}\circ g).
\end{equation}

\subsection{Experimental setup}

\subsubsection{Datasets}

We generate datasets of lattice configurations by MC-sampling the 2D ferromagnetic and antiferromagnetic Ising models. 
We impose periodic boundary conditions on an $L\times L$ lattice and consider $L=16$, $32$, $64$, and $128$. For each lattice size, we consider 100 temperatures with 25 values in $[1.04, 2]$ in increments of $0.04$, 50 values in $[2.01, 2.5]$ in increments of $0.01$, and 25 values in $[2.54, 3.5]$ in increments of $0.04$. This distribution of temperatures is evenly distributed about the theoretical critical temperature $T_{\mathrm{c}}=\frac{2}{\log(1+\sqrt{2})}=2.269\ldots$ and denser near $T_{\mathrm{c}}$. 
Although not uniform, the temperature samples are constant across all order observable models and thus do not effect the model comparison; 
we sample more temperatures near $T_{\mathrm{c}}$ only to ensure we achieve results sufficiently stable to draw meaningful conclusions. 
For each lattice size and temperature, we run the Wolff algorithm first for $10,000$ iterations to allow for thermal equilibration and then for an additional $50,000$ iterations during which we record the lattice configuration every 10 iterations. 
We thus obtain $5,000$ samples for each lattice size and temperature and for each of the ferromagnetic and antiferromagnetic cases (although in practice we only use $4,096$ samples). We also preprocess copies of these datasets by checkerboard-averaging the lattices, which will be used to train the GE-autoencoder. We evenly split each set of $4,096$ lattice configurations into a training-validation set and a test set, 
and we further partition the $2,048$ training-validation samples into eight ``folds'' each of size $256$ that will be used to measure sampling variance.

\subsubsection{Order observables}

Given only the MCMC datasets and the Ising symmetry group $G$, our  objective is to detect when a phase transition occurs by (1) identifying the associated spontaneous symmetry breaking and (2) estimating the temperature where it occurs (\emph{i.e.}~the critical temperature). Importantly, we assume no prior knowledge about the Ising model beyond the given datasets and the symmetry group. 

We test three ``order observables'' from which we hope to derive order parameters:
\begin{enumerate}
\item Magnetization~[Eqs.~\eqref{eq:magnetization} \& \eqref{eq:staggered_magnetization}]: %
(In the antiferromagnetic case, ``magnetization'' will be understood to mean the 
staggered magnetization.) This observable is the standard order parameter used for the Ising model. Here, we use it to provide a ground-truth estimate for the critical temperature in comparison to the exact value obtained from the Onsager solution. 

\item Baseline-autoencoder~[Eqs.~\eqref{eq:encoder}, \eqref{eq:decoder}, \& \eqref{eq:loss}]: %
This autoencoder does not exploit the symmetry group $G$ and is used as a 
machine learning baseline. We will refer to its encoder and decoder as baseline-encoder and baseline-decoder, respectively. Once trained, we interpret the output of the encoder as an order observable. Note that $G$-equivariance is not guaranteed a priori. 

\item GE-autoencoder~[Eqs.~\eqref{eq:encoder_reduced}, \eqref{eq:decoder_reduced}, \& \eqref{eq:loss_reduced}]: %
This autoencoder takes advantage of the symmetry group $G$, and we thus expect it to be more accurate and more efficient than the baseline-autoencoder. 
We will refer to its encoder and decoder as GE-encoder and GE-decoder, respectively. 
Once trained, we interpret the output of the encoder as a $G$-equivariant order observable. We also interpret its representation of $G$ to identify which symmetries spontaneously break. 
Finally, as the GE-autoencoder acts on checkerboard-averaged lattice configurations, the same network architecture can be applied to different sizes of lattices. 
Therefore, we will also consider the case of a ``multiscale GE-autoencoder'', which is trained simultaneously on all four lattice sizes in our dataset while using only  one-quarter of the MCMC data for each lattice size.
\end{enumerate}
We evaluate magnetization and the trained baseline-encoder and GE-encoder observables on all lattice configurations in our datasets to obtain measurement distributions and subsequently order parameters. Further details on using these order parameters to estimate the critical temperature are given in Sec.~\ref{sec:results}.

\subsubsection{Training details}

Independent of the lattice size $L$ and training-validation fold $j$, we train and validate the baseline- and GE-autoencoders on a dataset of $100N$ lattice configurations, where $50\%$ of the data set is randomly selected for training and the remaining $50\%$ is used for validation. This dataset consists of the last $N$ MC-sampled lattice configurations (out of the total $256$ configurations in the $j^\text{th}$ fold) at each of the 100 temperatures considered. 
We test various values of $N$ (ranging from $8$ to $256$ in powers of $2$) to measure the data efficiency of the GE-autoencoder vs.\ baseline-autoencoder, as well as the dependence of estimated $T_{\mathrm{c}}$ on the training-validation sample size. 
We train all autoencoders using the Adam optimizer with learning rate of $0.001$ (or equivalent; see below) and minibatch size $N$ for $64$ epochs; 
this choice guarantees $50$ iterations in each epoch and ensures that the scale of the noise generated from stochastic gradient descent is the same for all experiments~\cite{smith2018dont}. 
While training the GE-autoencoder, we include the symmetry regularization terms in the loss function [Eq.~\eqref{eq:loss_reduced}] only for the second half of training epochs; 
this practice prevents the randomly initialized GE-autoencoder from getting trapped in the nearest local minimum of the loss landscape that possibly corresponds to an incorrect group representation $\psi$ (i.e., the breaking of incorrect symmetries). 
We note that switching on regularization halfway through training is, in a sense, the simplest schedule from weak to strong regularization, 
and our choice of regularization schedule can be further validated in the same way as all other optimizer hyperparameter settings -- namely, in terms of the validation loss. 
We test three different seeds to randomly initialize the autoencoder network parameters. Together with eight training data folds, we thus have $24$ trials of each autoencoder experiment.

\begin{figure*}[t]
\centering
\includegraphics[width=2.0\columnwidth]{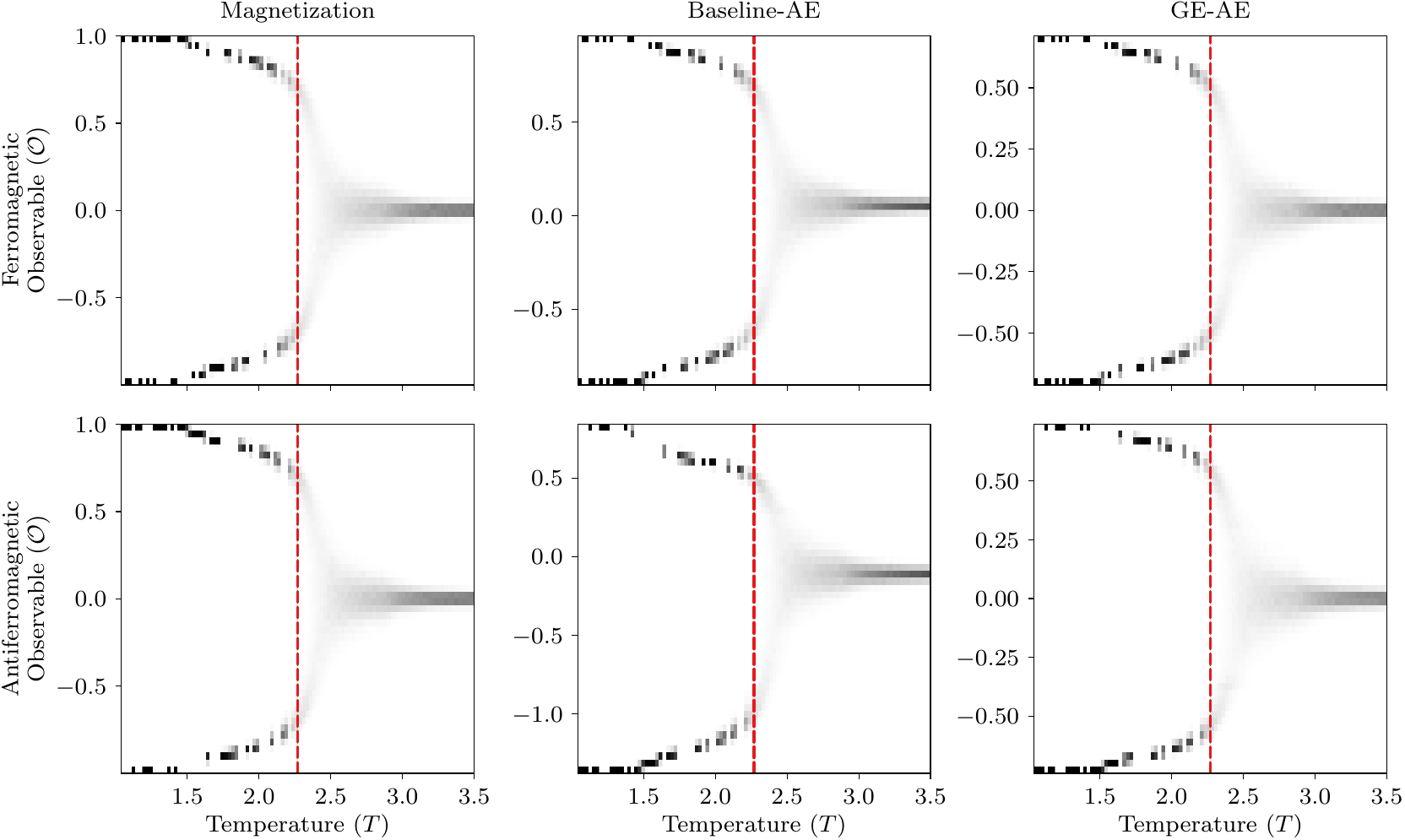}
\caption{\label{fig:distributions} %
Distributions of the order observable values at each temperature $T$ on a lattice of size $L=128$ for one example training data fold and initialization seed and using $N=256$ training-validation samples per temperature. 
Note that the observable scales ($y$-axes) of the baseline-encoder and GE-encoder are arbitrary, as any nonzero rescaling of the encoder can be compensated by the inverse scaling in the decoder; 
thus, only the relative shapes of these distributions are meaningful. 
All three order observables suggest a phase transition in both the ferromagnetic and antiferromagnetic cases near the theoretical critical temperature (dashed red).
}
\end{figure*}

Nontrivial parameter initialization and learning rate settings were needed to obtain reasonable comparisons (see Appendix~\ref{appendix:lr} for details). 
We initialize the GE-autoencoder as usual and set the learning rate to $0.001$ based on validation learning curves. 
However, to ensure a fair comparison between the baseline- and GE-autoencoders, and to avoid artifactual and noisy results due to hand-tuned hyperparameter settings, 
we initialize the baseline-autoencoder such that it is functionally equivalent to the initial GE-autoencoder; 
i.e., the baseline-autoencoder satisfies the same symmetry constraints as the GE-autoencoder at initialization time. 
We then set a separate learning rate for each layer of the baseline-autoencoder such that, it would remain equivalent to the GE-autoencoder throughout training if we maintained the symmetry constraints on the baseline-autoencoder. This requires setting smaller learning rates for larger layers to prevent large sums of parameter updates flowing through the network. 
As a result of these settings, the baseline-autoencoder and GE-autoencoder are identical in terms of their initial values and their learning dynamics and differ only in the symmetry constraints and symmetry regularization imposed on the GE-autoencoder.

\section{Results}
\label{sec:results}

\begin{figure*}[t]
\centering
\includegraphics[width=2.0\columnwidth]{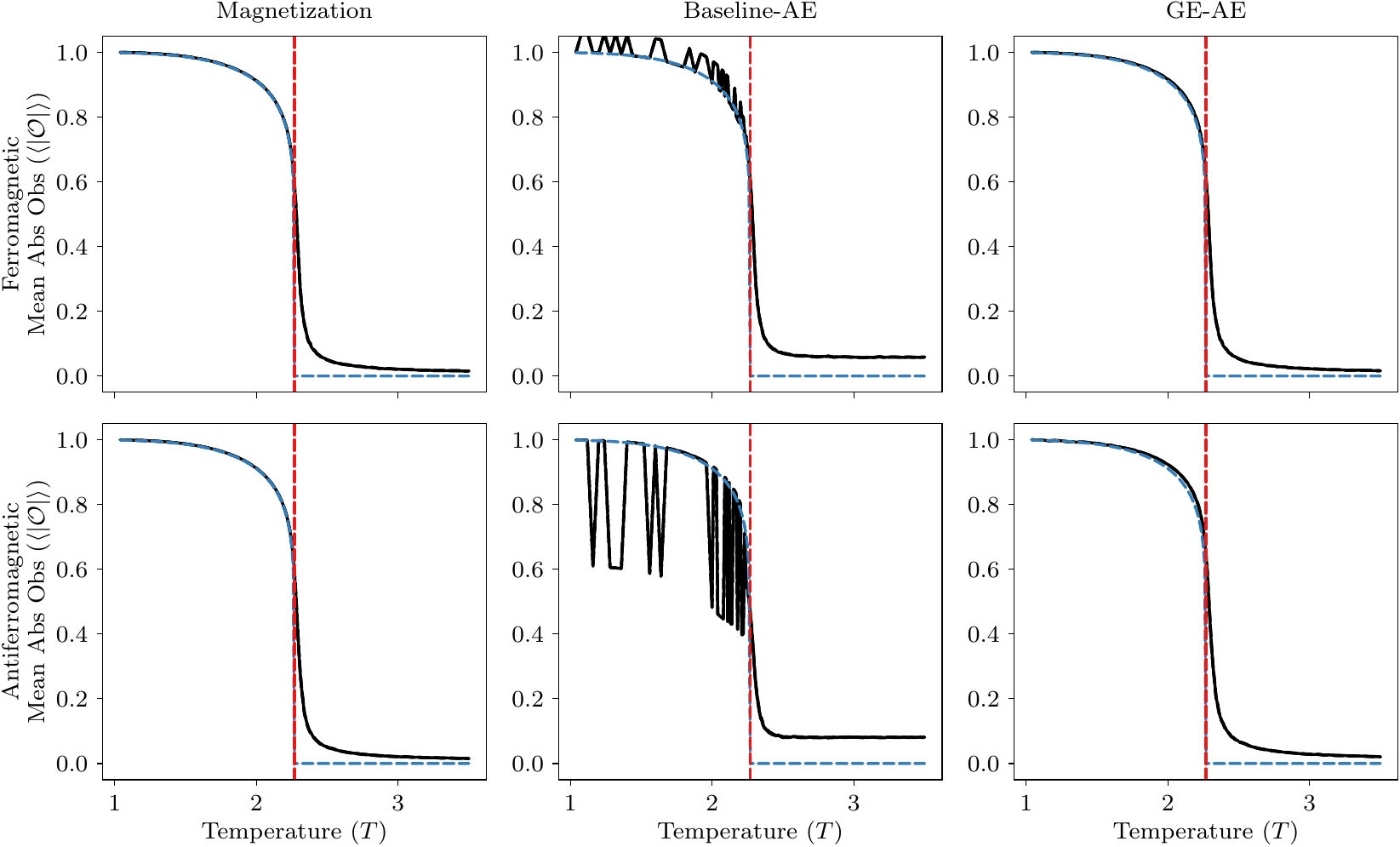}
\caption{\label{fig:orders} %
Order parameters (mean absolute value of the distributions in Fig.~\ref{fig:distributions}) vs.\ temperature $T$ on a lattice of size $L=128$ for one example training data fold and initialization seed and using $N=256$ training-validation samples per temperature. 
Note that the curves have been normalized to the same scale as Onsager's solution for comparison. 
The standard deviation curves (dashed) were obtained using the jackknife resampling method (these may be difficult to see as the standard deviations are small). 
All three order parameters suggest a phase transition in both the ferromagnetic and antiferromagnetic cases near the theoretical critical temperature (dashed red). 
However, the order parameters derived from magnetization and the GE-encoder are smoother and give better approximations to Onsager's solution (dashed blue).
}
\end{figure*}

\subsection{Identifying phase transitions}

Figure~\ref{fig:distributions} is representative of the observed distributions of the magnetization,  
baseline-encoder output, and GE-encoder output over all lattice configurations in our dataset at each temperature. For brevity, we present the distributions only for the largest lattice size $L=128$ and the largest number of training-validation samples per temperature $N=256$. 
In contrast to magnetization and the GE-encoder, the distribution of the baseline-encoder is not symmetric about zero in the antiferromagnetic case. 
This asymmetry is a consequence of a redundancy in the autoencoder network: 
The (baseline) encoder may be freely transformed by any invertible affine function since the first layer of the decoder can always undo it. 
The center of the baseline-encoder distribution is therefore arbitrary. Although previous works~\cite{alexandrou2020critical} have reported approximately symmetric encoder distributions for the Ising model, our results show that this is not guaranteed unless some form of explicit symmetry regularization is used, as in the GE-encoder. 
Similarly, the scale of the encoder is arbitrary as well (even in the GE-encoder) although the scale is not relevant for SSB. 
Nevertheless, all distributions exhibit an abrupt qualitative change near the theoretically known critical temperature $T_{\mathrm{c}}=\frac{2}{\log(1+\sqrt{2})}=2.269\ldots$, 
and hence all three observables are able to identify the phase transition in the Ising model to some degree.

We derive an order parameter from each of the three observables by calculating the mean absolute value under each distribution at each temperature~(Fig.~\ref{fig:orders}). 
Note that while this procedure is justified for magnetization and the GE-encoder as these are $G$-equivariant observables, it is not justified a priori for the asymmetric baseline-encoder observable in the antiferromagnetic case. 
Nevertheless, we do it anyways to provide a baseline case where symmetries were not taken into consideration. 
Thanks to symmetry constraints and regularization, the GE-encoder learns a smoother order parameter that is almost identical to the magnetization order parameter up to a scale factor (this is made quantitative in Appendix~\ref{appendix:onsager}). 
We also compare each order parameter to Onsager's exact solution for spontaneous magnetization in the thermodynamic limit~\cite{onsager1944crystal} 
\begin{equation} \label{eq:onsager}
M_{\mathrm{ONS}}(T) = 
\begin{cases}
\left[1-\sinh^{-4}\left(\frac{2}{T}\right)\right]^{\frac{1}{8}}, & \mbox{ if } T < T_{\mathrm{c}} \\
0, & \mbox{ otherwise,}
\end{cases}
\end{equation}
which is plotted as the dashed blue line in Fig.~\ref{fig:orders}. 
The smooth order parameter of the GE-encoder provides a better approximation to Onsager's solution compared to the baseline-encoder~(Fig.~\ref{fig:orders}). Moreover,  the GE-encoder in the thermodynamic limit converges to Onsager's solution with less error than the baseline-encoder (see Appendix~\ref{appendix:onsager} for details). As an immediate consequence, the GE-encoder is able to identify the phase transition as being second-order, which previous works~\cite{alexandrou2020critical} could not do.

\subsection{Identifying spontaneously broken symmetries}
\label{sec:learned_rep}

Can we identify at each temperature which symmetries of the system have spontaneously broken? We have seen that the GE-encoder order parameter becomes nonzero below some critical temperature (Fig.~\ref{fig:orders}), and thus breaks the $\ZZ_2 = \psi_G$ symmetry. 
It follows that every Ising symmetry $g\in G$ such that $\psi_g=-1$ breaks below this critical temperature, while $\psi_g=1$ implies that $g$ remains unbroken. 
Using Eq.~\eqref{eq:psi}, we estimate $\psi_g$ for each generator of $G$~(Table~\ref{table:generators}). 
In the ferromagnetic case, we find that $\psi_g\approx -1$ only for $g=\sigma$, and hence only the internal spin-flip symmetry breaks. 
In the antiferromagnetic case, we obtain $\psi_g\approx -1$ for every generator $g\in\{\alpha,\rho,\tau,\sigma\}$. In other words, the internal spin-flip symmetry as well as every spatial symmetry not in the special checkerboard subgroup $\scb(L)$ breaks. Our results are in agreement with the known SSB in the Ising model across the magnetic transition, and thus we conclude that our GE-autoencoder method can correctly and accurately detect SSB.

\begin{table}
\centering
\caption{\label{table:generators} %
Estimated latent representation $\psi_g$ of the spatial symmetry generators ($g=\alpha,\rho,\tau$) and the internal symmetry generator ($g=\sigma$). 
Estimates were averaged over all $24$ trials (eight training data folds and three initialization seeds) as well as over all lattice sizes $L$ and training-validation sample sizes $N$ as they showed little variation; 
the reported uncertainties are standard deviations. 
In the antiferromagnetic case, odd spatial symmetries and the internal symmetry spontaneously break at some temperature; 
in the ferromagnetic case, only the internal symmetry spontaneously breaks.
}
\begin{tabular}{cS[table-format=-1.5(2),table-align-uncertainty=true]S[table-format=-1.5(2),table-align-uncertainty=true]}
\toprule
\quad & {Spatial} & {Internal} \\
\midrule
Ferromagnetic & 0.99996\pm 0.00027 & -1.00001\pm 0.00029 \\
Antiferromagnetic & -0.99982\pm 0.00042 & -0.99993\pm 0.00040 \\
\bottomrule
\end{tabular}
\end{table}

For contrast, we also measure the degree to which the baseline-encoder is equivariant; 
we again use Eq.~\eqref{eq:psi} but replace the GE-encoder $\check{\calO}$ with the baseline-encoder $\calO$. 
As with the GE-encoder, we average estimates over all $24$ trials (eight training data folds and three initialization seeds) as well as over all lattice sizes $L$ and training-validation sample sizes $N$. 
In the ferromagnetic case, we find that $\psi_g\approx 0.9998(2)$ for spatial symmetry generators $g\in\{\alpha,\rho,\tau\}$ and $\psi_{\sigma}\approx -0.92(8)$ for the internal spin-flip symmetry generator $\sigma$. 
In the antiferromagnetic case, we obtain $\psi_g\approx -0.87(5)$ for both spatial and internal symmetry generators. 
The baseline-encoder is thus approximately equivariant and transforms by approximately the correct group representation. 
However, the GE-autoencoder learns the representation $\psi$ with significantly greater accuracy and sometimes with orders of magnitude more robustness than the baseline, particularly in the antiferromagnetic case.

\subsection{Estimating the critical temperature}
\label{sec:results:temperature}

\begin{figure*}
\centering
\includegraphics[width=2.0\columnwidth]{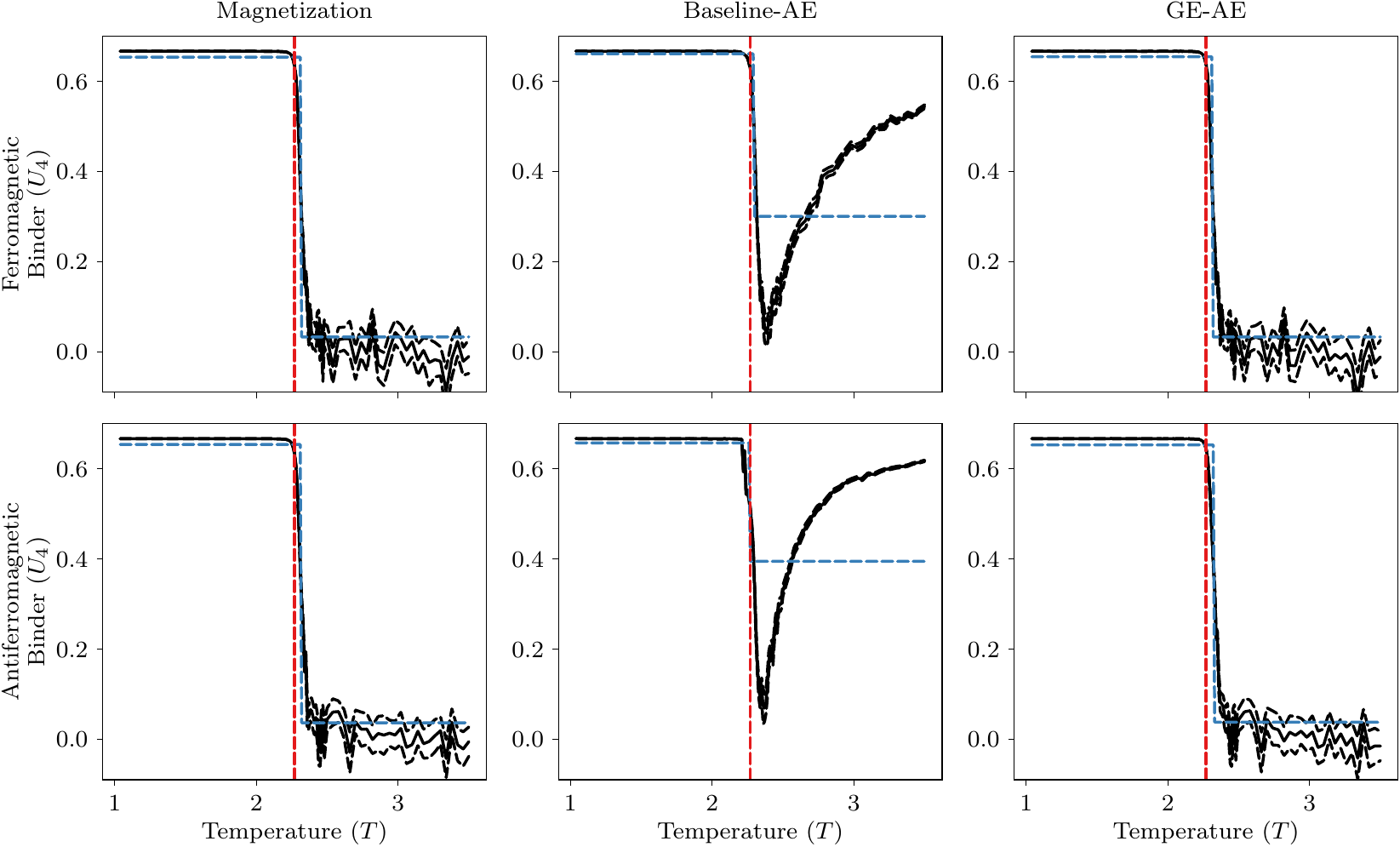}
\caption{\label{fig:u4s} %
Fourth-order Binder cumulants $U_4$ of the order observables at each temperature $T$ on a lattice of size $L=128$ for one example training data fold and initialization seed and using $N=256$ training-validation samples per temperature. 
The standard deviation curves (dashed) were obtained using the jackknife resampling method (these may be difficult to see since the standard deviations are small). 
In all cases, the Binder cumulant drops abruptly near the theoretical critical temperature (dashed red). 
Step functions (blue) are fit to the Binder cumulants using least-squares, and the locations of their jump discontinuities are taken as estimates of the critical temperature.
}
\end{figure*}

We now turn to our second question: Is an SSB-based approach to identifying phase transitions from data more accurate than a purely data-driven approach? In particular, does the GE-encoder give a more accurate estimate of $T_{\mathrm{c}}$ than the baseline-encoder?

We begin by estimating $T_{\mathrm{c}}$ independently for each lattice size based on the fourth Binder cumulant~\cite{binder1993monte}:
\[ U_4 = 1 - \frac{\langle\calO^4\rangle}{3\langle\calO^2\rangle^2}, \]
where $\calO$ is the order observable. 
We obtain Binder cumulant vs.\ temperature curves for each of the three order observables~(Fig.~\ref{fig:u4s}). (Once again, for brevity, we present the curves only for the largest lattice size $L=128$ and number of training-validation samples per temperature $N=256$.) We emphasize that we are simulating a scenario in which we only have access to a dataset of lattice configurations and the symmetries of the system and are not aware that the system is in fact the Ising model. 
Our choice to look at the Binder cumulant should therefore be interpreted only as a ``guess'', and its only justification is the aposteriori observation that the Binder cumulant curves all display an abrupt change near the theoretical critical temperature. 

We perform least-squares regression to fit a step function to each jackknife-sample Binder cumulant vs.\ temperature dataset. We then interpret the location of the jump discontinuity of the step function as a jackknife-sample estimate of $T_{\mathrm{c}}$. 
However, if our dataset includes temperatures $T_1,T_2,\ldots,T_{100}$ and if we find the jump discontinuity to lie in the open interval $T_i, T_{i+1})$ for some $i$, 
then moving the jump discontinuity to any other temperature in $(T_i, T_{i+1})$ would result in a fit that is just as good as the original step function. 
This approach therefore only allows us to obtain interval estimates of the critical temperature. 
To obtain point estimates, we set up and solve a convex optimization problem in which we seek to minimize the jackknife standard deviation in the estimate subject to the constraints defined by the jackknife-sample interval estimates (see Appendix~\ref{appendix:optimization} for details). 
In this way, for each lattice size $L$ and number of training-validation samples per temperature $N$, we obtain a critical temperature estimate as the jackknife%
\footnote{%
We remark on an important detail in the jackknife calculation: 
It is common to reduce the bias in the jackknife estimate by combining the jackknife mean with the estimate obtained without resampling. 
The argument for this, however, relies on a Taylor expansion of the underlying estimator~\cite{young2015everything}, 
and it turns out that our critical temperature estimator is not everywhere-differentiable; under a small perturbation of the Binder cumulant estimates, our critical temperature estimate either remains constant or changes abruptly if the jackknife-sample interval estimates change. 
Indeed, in our original $T_{\mathrm{c}}$ estimates, we found that the bias estimate was either zero or so extreme that it often pushed the critical temperature estimate outside the range of temperatures included in our dataset.  
In contrast, when we did not reduce the bias, we obtained more stable results. 
Overall, since the bias estimate is known to scale as $\frac{1}{n}$ while the jackknife standard deviation scales as $\frac{1}{\sqrt{n}}$ --- so that the bias is typically much smaller than the standard deviation for a sufficiently large sample size $n$ and can often be ignored~\cite{young2015everything} --- we were confident that our extreme bias estimates were spurious. Therefore, we have not adjusted for them in the results presented~(Figs.~\ref{fig:tc_ferro}-\ref{fig:tc_antiferro}).%
} %
 mean averaged over all $24$ trials (eight training data folds and three initialization seeds), along with a standard deviation~(Figs.~\ref{fig:tc_ferro}-\ref{fig:tc_antiferro}).

\begin{figure}
\centering
\includegraphics[width=1.0\columnwidth]{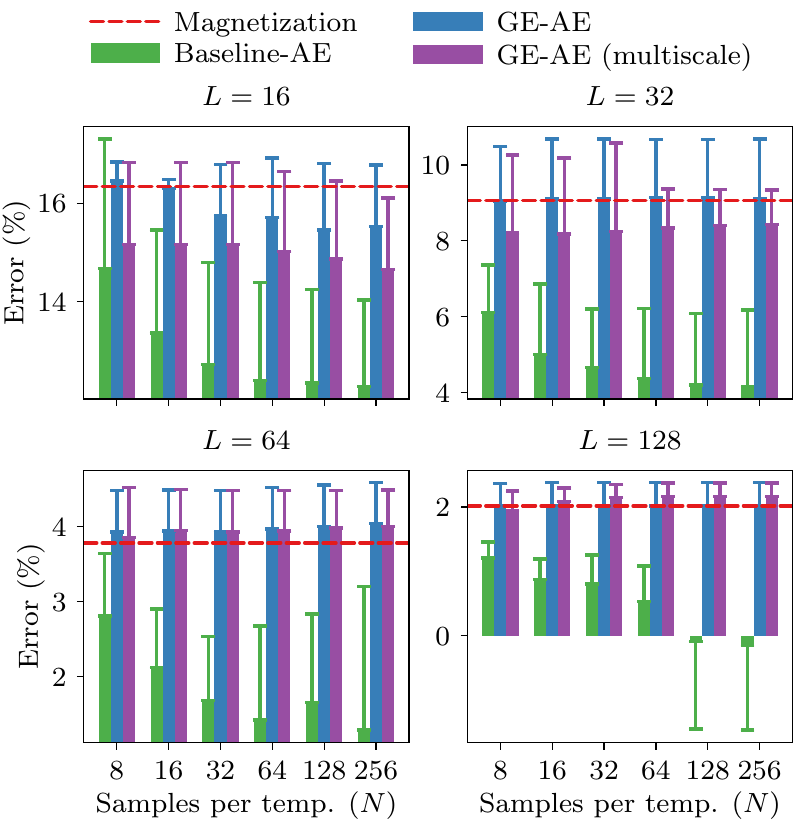}
\caption{\label{fig:tc_ferro} %
Ferromagnetic critical temperature estimates (expressed as percent errors relative to the exact theoretical critical temperature).
The error bars represent standard deviations that combine the standard deviation of means across $24$ trials (eight training data folds and three initialization seeds) and the jackknife standard deviations for each fold and seed, with most of the variance coming from the former. 
The standard deviation associated to magnetization is $0$. 
The baseline-encoder consistently achieves the lowest error and improves with more training data. 
The GE-encoders, on the other hand, are more similar to magnetization, with multiscale training reducing error slightly, and are also more stable in terms of their standard deviations.
}
\end{figure}

In both the ferromagnetic and antiferromagnetic cases, the baseline-encoder consistently achieves lower error in its critical temperature estimates than do the GE-encoders~(Figs.~\ref{fig:tc_ferro}-\ref{fig:tc_antiferro}). 
Moreover, in contrast to the GE-encoders, the baseline-encoder makes better use of more training data, as its error decreases with increasing training dataset size. 
The GE-encoders, on the other hand, achieve errors closer to that of magnetization, 
and their estimates are also more stable in terms of lower standard deviations. 
Multiscale training (i.e., all four lattice sizes in the training dataset) results in an additional reduction in error. 

We speculate that the baseline-encoder achieves the lowest error in $T_{\mathrm{c}}$ estimation because it is a more flexible network in comparison to the GE-encoder. As such, 
it may be able to express certain nonlinearities that the GE-encoder cannot. 
From the proximity of the GE-encoder to magnetization, we infer that its four hidden neurons have aligned such that the GE-encoder is approximately a linear function of its input. 
In contrast, if the baseline-encoder has learned a nonlinearity such that it squashes (resp.~inflates) the value assigned to lattice configurations with low (resp.~high) absolute magnetization, 
then the mean absolute value of the baseline-encoder vs.\ temperature curve will be more ``bowed'' compared to Onsager's solution. Such bowing is indeed what we see~(Fig.~\ref{fig:orders}). 
This observation is consistent with previous works reporting that deeper and more flexible autoencoders incorrectly classify the Ising phase transition as first-order~\cite{alexandrou2020critical}, 
and this could also explain why the baseline-encoder estimates the critical temperature with greater accuracy, even though it does worse when extrapolated to the thermodynamic limit (see Sec.~\ref{sec:results:extrapolate}).

\begin{figure}
\centering
\includegraphics[width=1.0\columnwidth]{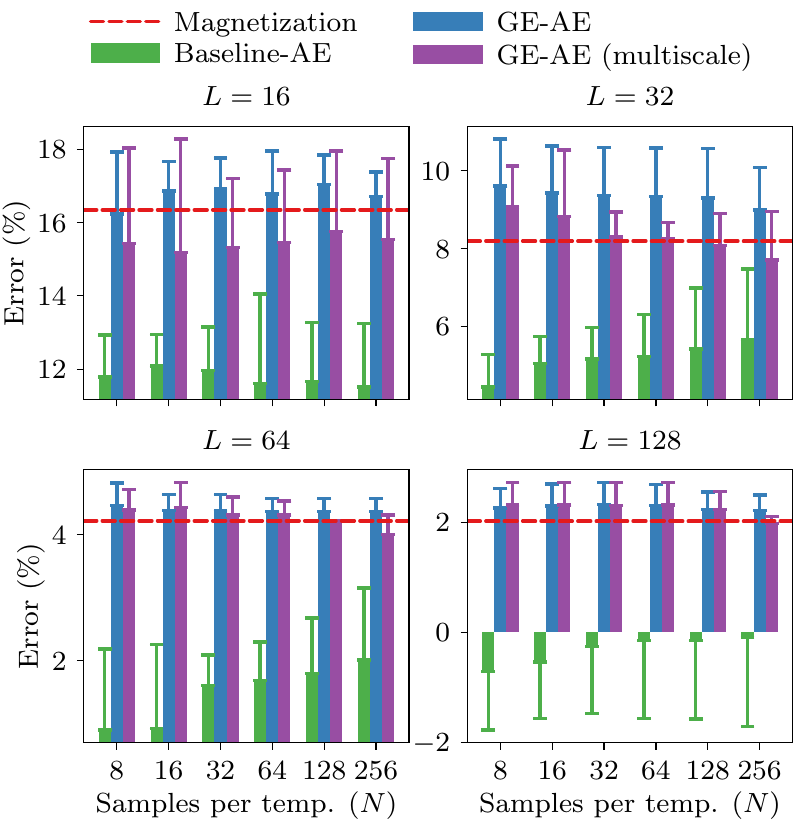}
\caption{\label{fig:tc_antiferro} %
Antiferromagnetic critical temperature estimates (expressed as percent errors relative to the exact theoretical critical temperature).
The error bars represent standard deviations that combine the standard deviation of means across $24$ trials (eight training data folds and three initialization seeds) and the jackknife standard deviations for each fold and seed, with most of the variance coming from the former. 
The standard deviation associated to magnetization is $0$. 
The baseline-encoder consistently achieves the lowest error and improves with more training data. 
The GE-encoders, on the other hand, are more similar to magnetization, with multiscale training reducing error slightly, and are also more stable in terms of their standard deviations.
}
\end{figure}

\subsection{Extrapolating the critical temperature estimates}
\label{sec:results:extrapolate}

Here we perform finite-size scaling analysis on the critical temperature estimates presented in the last section in order to obtain estimates at infinite lattice size -- i.e., in the thermodynamic limit. 
For each training data fold and initialization seed, order observable, training-validation sample size, and each of the ferromagnetic and antiferromagnetic cases, we perform least-squares linear regression on the critical temperature estimates against inverse lattice size ($r^2$ value $\approx 1$ for all fits). 
We plot an example of these fits for $N=256$ training-validation samples per temperature in Fig.~\ref{fig:tc_vs_L}. 
The $y$-intercepts of the linear fits are then taken to be the critical temperature estimates at $L^{-1}=0$ -- i.e., the thermodynamic limit. 
We find that while the baseline-encoder achieves the lowest error in its estimation of the critical temperature for individual finite lattice sizes~(Figs.~\ref{fig:tc_ferro}-\ref{fig:tc_antiferro}), 
the GE-encoders are significantly more accurate once their estimates are extrapolated to infinite lattice size~(Fig.~\ref{fig:tc_extrapolate}). 

\begin{figure}
\centering
\includegraphics[width=1.0\columnwidth]{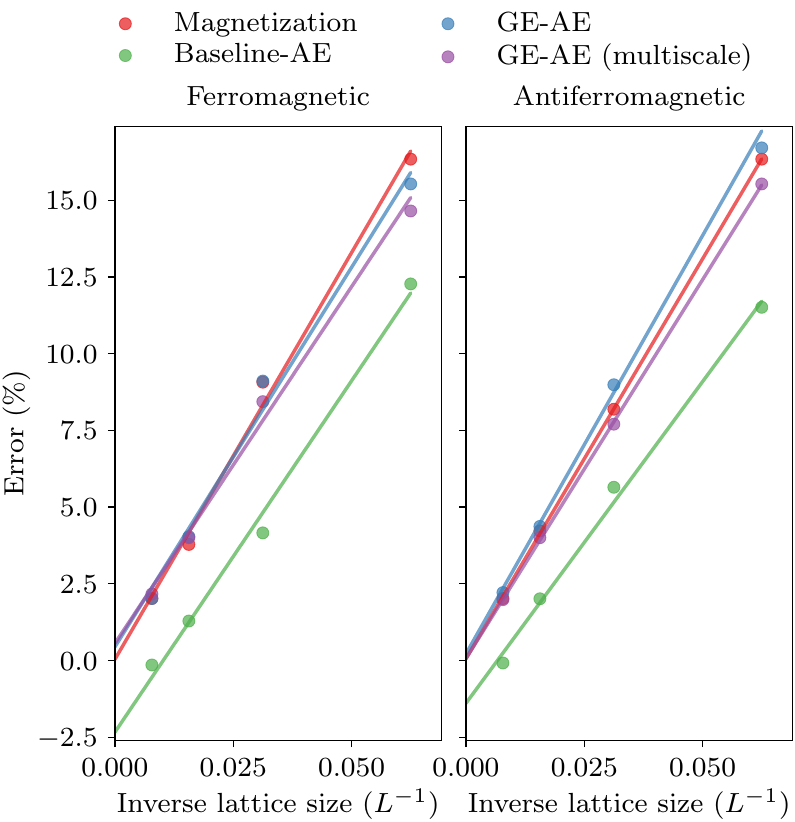}
\caption{\label{fig:tc_vs_L} %
Linear dependence of critical temperature estimates (expressed as percent errors relative to the exact theoretical critical temperature) on inverse lattice size using $N=256$ training-validation samples per temperature and averaged over $24$ trials (eight training data folds and three initialization seeds). 
Although the baseline-encoder achieves lower error for each lattice size individually, the linear extrapolation of its estimates to infinite lattice size ($y$-intercept) is a worse estimate compared to the GE-encoders.
}
\end{figure}

\begin{figure}
\centering
\includegraphics[width=1.0\columnwidth]{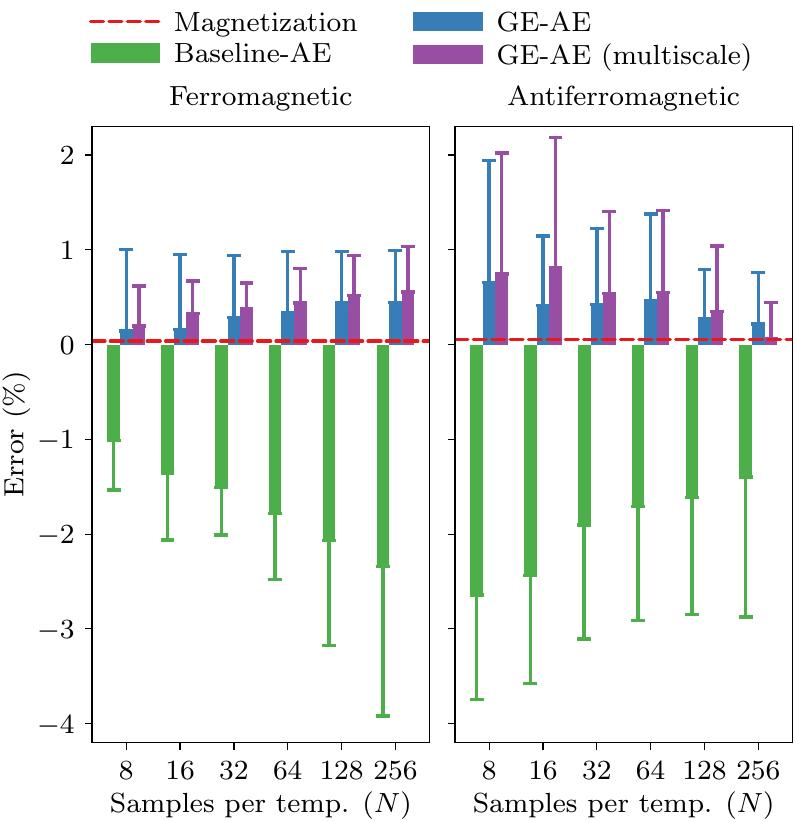}
\caption{\label{fig:tc_extrapolate} %
Critical temperature estimates (expressed as percent errors relative to the exact theoretical critical temperature) extrapolated to infinite lattice size. 
The error bars represent standard deviations that combine the standard deviation of means across $24$ trials (eight training data folds and three initialization seeds) and the jackknife standard deviations for each fold and seed, with most of the variance coming from the former. 
The standard deviation associated to magnetization is $0$. 
Although the baseline-encoder achieves lower error on finite lattices (Figs.~\ref{fig:tc_ferro}-\ref{fig:tc_antiferro}), the GE-encoder estimates extrapolate to have lower error in the thermodynamic limit.
}
\end{figure}

\subsection{Measuring the time efficiency}

We compare the time efficiencies of the GE-autoencoder and baseline-autoencoder methods. 
Since we were able to exploit never-broken symmetries to reduce the network size of the GE-autoencoder, we expect it to be significantly more efficient. 
Figure~\ref{fig:times} reports the total computation times for the GE-autoencoder and baseline-autoencoder methods for each lattice size. These values include the time to generate all the data (i.e., run the MC simulation), all preprocessing time such as checkerboard-averaging the lattice configurations, and the time needed to train and validate the autoencoder and to evaluate the trained encoder on the entire dataset of lattice configurations. 
Importantly, the training-validation-evaluation time is a \emph{sum} over all 24 trials (eight training data folds and three initialization seeds) to reflect the computation needed to obtain error bars on the critical temperature estimates. 
Note that whether we are looking at the ferromagnetic or antiferromagnetic case has no impact on execution time, as the autoencoder architectures and learning hyperparameter settings are identical in both cases. 
Moreover, as a consequence of allowing the minibatch size to proportionally vary with the training-validation sample size $N$, we found that the execution time depended very little on $N$. Thus, we report each execution time as an average over all sample sizes $N$ and over the ferromagnetic vs.\ antiferromagnetic cases. 
We find that the GE-autoencoder method is significantly faster than the baseline-autoencoder. 
Moreover, multiscale training gives an additional boost in efficiency in the computation time needed to extrapolate estimates to infinite lattice size, as the multiscale GE-autoencoder needs to be trained only once across the four finite lattice sizes. 
We therefore conclude that our GE-autoencoder method is indeed more efficient than the baseline-autoencoder method.

\begin{figure}
\centering
\includegraphics[width=1.0\columnwidth]{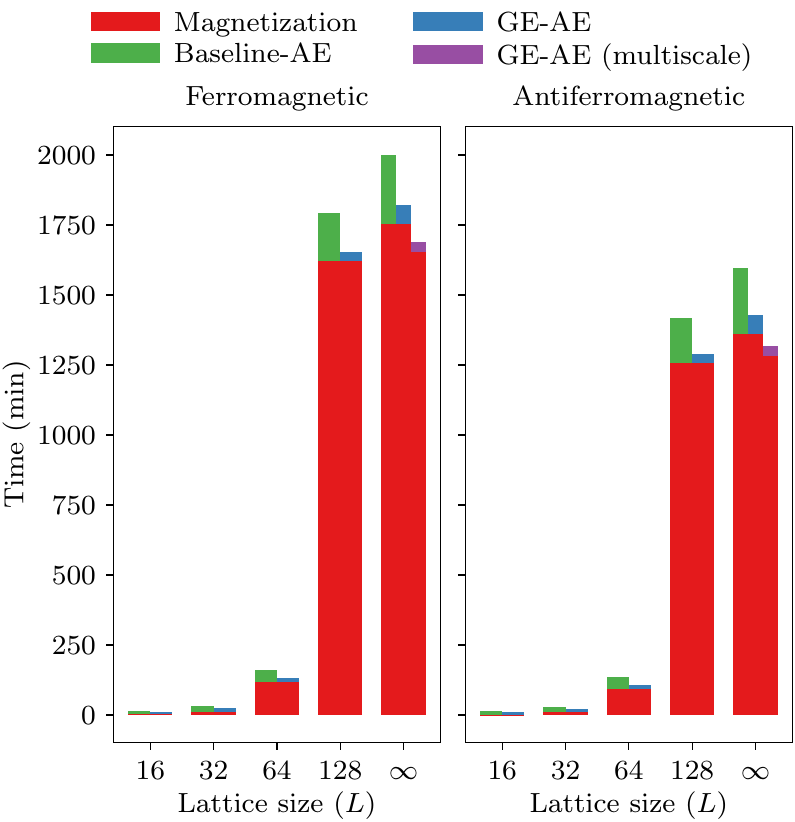}
\caption{\label{fig:times} %
Computation times to generate data (red) and train (not red) the autoencoders for each lattice size $L$. 
Note that the time to measure magnetization is reported as equivalent to data generation time as these measurements were taken on-the-fly during the MC simulation. 
Note also that each training time is a \emph{sum} over all eight training folds and RNG seeds. 
The training times were then averaged over all training-validation sample sizes $N$ and over both the ferromagnetic and antiferromagnetic cases, 
as these factors did not contribute to significant variation in execution time. 
The GE-autoencoder is significantly more time-efficient (lower time) than the baseline-autoencoder. 
The times reported for infinite lattice size are just sums of the times for the finite lattice sizes, 
although the multiscale GE-autoencoder exhibits greater efficiency as it only requires a single network to be trained across all finite lattice sizes.
}
\end{figure}

\subsection{Detecting an external magnetic field}
\label{sec:field}

Finally, we investigate if the baseline-encoder and GE-encoder order observables can be used to detect the presence of a weak external magnetic field $h$ in the ferromagnetic Ising model by adding a term to Eq.~\eqref{Eq:HIsing}:
\begin{equation}
    \calH(\mathbf{x}) \to \calH(\mathbf{x}) - h\sum_{\bf i} {x}_{\bf i}\, . 
\end{equation}
We assume the magnetic field is uniform and note that it breaks the internal symmetry of the Ising model.
We consider two temperatures---$2.0$ and $2.5$---slightly below and above the critical temperature, and we consider three field strengths $0.001J$, $0.01J$, and $0.1J$ (where $J=1$ is the coupling constant in the Ising Hamiltonian). 
For each case, as well as the case of no external field at all, we use the Wolff algorithm with a ``ghost site''~\cite{coniglio1989exact} to simulate the Ising model with lattice size $L=128$ in an external magnetic field; 
we generate $N=2,000$ sample lattice configurations for each temperature and field strength. 
If $x^0_1,\ldots,x^0_N$ (resp. $x_1,\ldots,x_N$) are the sample lattice configurations in the absence (resp. presence) of an external magnetic field, then for each encoder $\calO$, we compute the statistic
\begin{equation}
D = \left \lvert \frac{1}{N}\sum_{n=1}^N\calO(x_n) - \frac{1}{N}\sum_{n=1}^N\calO(x^0_n)\right\rvert. 
\end{equation}
Using the baseline-encoder and GE-encoder already fitted to data as described in previous sections, we obtain $24$ measurements of $D$ for each encoder (eight training data folds and three initialization seeds). 
We then define the ``confidence score''
\begin{equation} \label{eq:xi}
\xi = \frac{\langle D\rangle}{\sqrt{\langle D^2\rangle-\langle D\rangle^2}}.
\end{equation}
Intuitively, a value sufficiently far from zero indicates that the order parameter has shifted and hence there is an external symmetry-breaking field. 
We normalize by the standard deviation of $D$ so that $\xi$ is independent of the arbitrary scale learned by each encoder; 
it also boosts the score when the measurement of $D$ is robust across the $24$ trials. 
Figure~\ref{fig:field} shows the $\xi$ scores for the baseline-encoder and GE-encoder at each temperature and external field strength. 
In all cases, the GE-encoder attains a significantly higher score than does the baseline-encoder, meaning that it is more sensitive to and detects with greater confidence the presence of a weak external magnetic field. 
At temperature $2.5$, we note that the baseline-encoder does become more confident (increasing score) in its detection as the external field becomes stronger, which is to be expected; 
however, the GE-encoder remains confident even in its detection of the weakest field.

\begin{figure}
\centering
\includegraphics[width=1.0\columnwidth]{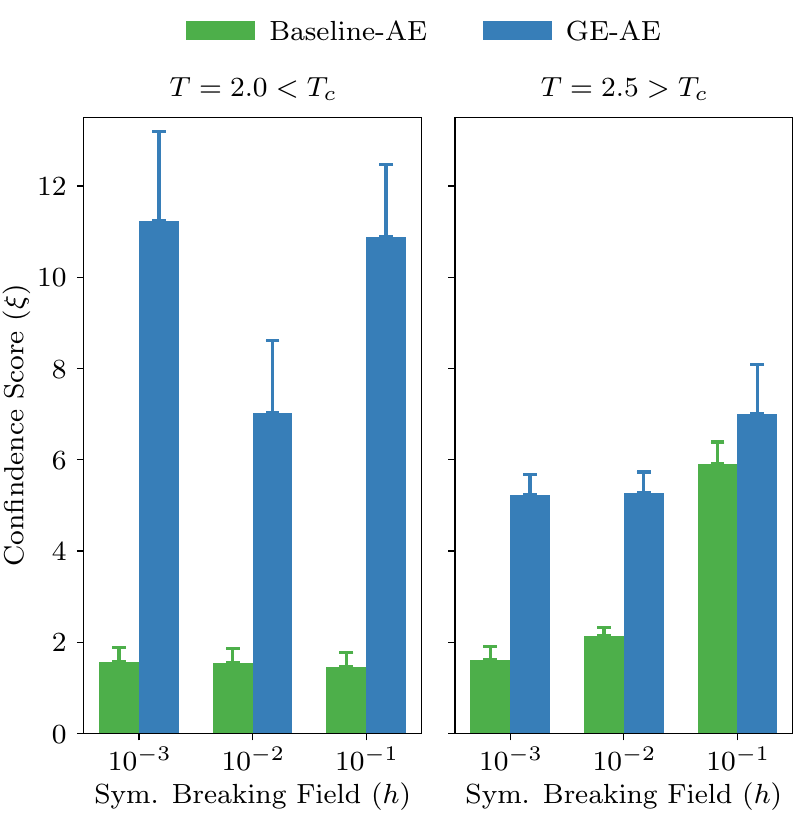}
\caption{\label{fig:field} %
``Confidence scores'' $\xi$ (see Eq.~\ref{eq:xi}) of the baseline-encoder (green) and GE-encoder (blue) at detecting the presence of an external magnetic field at two different temperatures and for three different field strengths. 
The error bars represent standard deviations estimated using $10,000$ bootstrap draws from the $24$ trials used to calculate $\xi$. 
The GE-encoder consistently detects the external field with greater confidence than does the baseline-encoder.
}
\end{figure}

\section{On vector order observables}
\label{sec:vector}

\subsection{A 2D order observable}
\label{sec:2D}

We have thus far assumed that the order observable is scalar-valued or equivalently that a sufficient choice for the latent dimension of the autoencoder networks is $1$. 
In this section, we justify this assumption by instead assuming a 2D vector order observable; 
if the real 2D orthogonal representation $\psi:G\mapsto\mathrm{O}(2, \RR)$ by which the order observable transforms---as learned by the GE-autoencoder---is equivalent to a 1D representation, then we may conclude that a 1D order observable is sufficient to describe the phase transition. 
The exposition in this section will also help to illustrate how our GE-autoencoder method can be extended to a somewhat more complex scenario.

We construct the 2D GE-autoencoder in direct analogy to the 1D case. 
First, we calculate the subgroup of never-broken symmetries and find it to be $H = \langle\alpha^2, \beta^2\rangle$, 
which is the subgroup generated by the even horizontal translations and even vertical translations (see Prop.~\ref{prop:never_broken_2D} in Appendix~\ref{appendix:2D}). 
This subgroup has ${L^2}/{4}$ symmetries for even lattice size $L$, 
and the corresponding quotient group $G/H$ of possibly-broken symmetries has $16L^2/(L^2/4) = 64$ elements. 
Compare this to the 1D case, in which there are $L^2$ never-broken symmetries and only $16$ possibly-broken symmetries.

Next, we constrain the GE-autoencoder so that its encoder is invariant to the subgroup $H$ of never-broken symmetries. 
Instead of constraining each weight matrix $\mathbf{w}_k$ (associated to the $k^\text{th}$ hidden neuron) to have a checkerboard pattern as in the 1D case, 
we constrain it to be a tiling of a $2\times 2$ submatrix; 
the resulting weight matrix is then invariant to the action of the subgroup $H$, and hence so is the encoder. 
Applying this constrained encoder to an input $L\times L$ lattice configuration $\mathbf{x}$ is then equivalent to the following procedure: 
First, we construct a reduced 4D representation $\check{x} = (\check{x}_1,\check{x}_2,\check{x}_3,\check{x}_4)$ of $\mathbf{x}$ as follows:
\begin{align*}
\check{x}_1 &= \frac{4}{L^2}\sum_{i_x \mbox{ even, } i_y \mbox{ even}} x_{\bf i} \\
\check{x}_2 &= \frac{4}{L^2}\sum_{i_x \mbox{ even, } i_y \mbox{ odd}} x_{\bf i} \\
\check{x}_3 &= \frac{4}{L^2}\sum_{i_x \mbox{ odd, } i_y \mbox{ even}} x_{\bf i} \\
\check{x}_4 &= \frac{4}{L^2}\sum_{i_x \mbox{ odd, } i_y \mbox{ odd}} x_{\bf i}.
\end{align*}
The representation $\check{x}$ is just the $2\times 2$ block-average of $\mathbf{x}$. 
Then, we feed $\check{x}$ into a reduced unconstrained encoder $\check{\calO}:[-1, 1]^4\mapsto\RR^2$. 
Finally, we use the 4D output of a reduced decoder $\check{\calD}$ to reconstruct an $L\times L$ lattice configuration by tiling a $2\times 2$ block in the horizontal and vertical directions.

Let $\psi:G\mapsto\mathrm{O}(2,\RR)$ be the 2D latent representation of $G$ by which the 2D order observable transforms. 
We already know $\psi_g=I\forall g\in H$, 
where $I$ is the $2\times 2$ identity matrix. 
Our particular choice of the GE-autoencoder architecture as described above places an additional constraint on $\psi$:
\[ \psi_{\alpha}\check{\calO} = \psi_{\rho^2\tau}\check{\calO}. \]
(see Prop.~\ref{prop:art_2D} in Appendix~\ref{appendix:2D}). 
Thus, we need only learn three of the four generators of $\psi$ while training the GE-autoencoder: $\psi_{\rho}$, $\psi_{\tau}$, and $\psi_{\sigma}$. 
This result is analogous to Prop.~\ref{prop:art} for the 1D case.

We train the 2D GE-autoencoder with the loss function [Eq.~\eqref{eq:loss_2D}] given below in Sec.~\ref{sec:general}; 
it is a generalization of the loss function [Eq.~\eqref{eq:loss_reduced}] and holds for any order observable dimension $d$. 
We train the 2D GE-autoencoder with the same datasets and optimizer settings as in the 1D case. 
After training, we obtain the learned latent representation $\psi$ of $G$ using the estimator [Eq.~\eqref{eq:psihat}] given below in Sec.~\ref{sec:general}. 
Observe that the GE-autoencoder is invariant under the transformation $\check{\calO}(\cdot)\rightarrow A\check{\calO}(\cdot)$ and $\check{\calD}(\cdot, \cdot)\rightarrow \check{\calD}(A^{-1}\cdot, \cdot)$ for any invertible $2\times 2$ matrix $A$. 
Thus, without loss of generality, we transform the learned representation $\psi$ into the eigenbasis of $\psi_{\sigma}$, 
so that $\psi_{\sigma}$ is diagonal with sorted diagonal elements. 
In both the ferromagnetic and antiferromagnetic cases, the elements $\psi_{\rho}$, $\psi_{\tau}$, and $\psi_{\sigma}$---and hence the entire representation $\psi$---are approximately diagonal (Table~\ref{table:generators_2D}). 
More precisely, we see that the learned representation $\psi$ approximately admits the decomposition $\psi = \psi_1\oplus\psi_2$ where $\psi_1$ is the representation learned in the 1D case and $\psi_2$ is the trivial representation. 
Thus, the learned 2D representation is equivalent to the 1D representation in Sec.~\ref{sec:learned_rep}, and hence a 1D order observable is sufficient.

\begin{table}
\centering
\caption{\label{table:generators_2D} %
Estimated 2D latent representation $\psi_g$ by which the learned 2D order observable (i.e., GE-encoder) transforms, for $g\in \{\rho, \tau, \sigma\}$. 
Estimates were averaged over all $24$ trials (eight training data folds and three initialization seeds) as well as over all lattice sizes $L$ and training-validation sample sizes $N$ as they showed little variation; 
the reported uncertainties are standard deviations. 
In both the ferromagnetic and antiferromagnetic cases, the learned representations are equivalent to the 1D representations presented in Sec.~\ref{sec:learned_rep}.
}
\begin{tabular}{ccc}
\toprule
\quad & Ferromagnetic & Antiferromagnetic \\
\midrule
$\psi_{\rho}$ & 
$\lmat
\hphantom{-}1.0000(2) & 0.000(0) \\
\hphantom{-}0.0000(9) & 1.000(0) \\
\rmat$ & $\lmat
-0.96(5) & 0.000(0) \\
\hphantom{-}0.00(0) & 0.994(9) \\
\rmat$ \\
\quad & \quad & \quad \\ 
$\psi_{\tau}$ & 
$\lmat
\hphantom{-}1.0000(2) & 0.000(0) \\
\hphantom{-}0.0000(8) & 1.000(0) \\
\rmat$ & $\lmat
-0.96(5) & 0.000(0) \\
\hphantom{-}0.00(0) & 0.994(9) \\
\rmat$ \\
\quad & \quad & \quad \\ 
$\psi_{\sigma}$ & 
$\lmat
-1.0000(8) & 0.000(0) \\
\hphantom{-}0.0000(0) & 1.000(2) \\
\rmat$ & $\lmat
-0.96(5) & 0.000(0) \\
\hphantom{-}0.00(0) & 0.994(9) \\
\rmat$ \\
\bottomrule
\end{tabular}
\end{table}

\subsection{A general procedure}
\label{sec:general}

We end this section with some remarks on a general procedure for calculating the subgroup of never-broken symmetries and applying the GE-autoencoder method. 
Consider an arbitrary statistical-mechanical system in thermal equilibrium with finite symmetry group $G$, 
and suppose we assume an order observable of the system to have dimension $d$ 
(we can regard $d$ as a hyperparameter that we select as part of the GE-autoencoder architecture). 
Then by Prop.~\ref{prop:chi} (see Appendix~\ref{appendix:chi}), the subgroup $H_d$ of never-broken symmetries can be calculated as
\begin{equation} \label{eq:Hd}
H_d = \bigcap_{\chi\in\hat{G}\mid \degR(\chi)\leq d} \operatorname{ker}(\chi),
\end{equation}
where $\hat{G}$ is the set of all (complex-)irreducible characters of $G$; 
$\operatorname{ker}(\chi)$ is the kernel of the character $\chi$ defined as the preimage set $\chi^{-1}(\degC(\chi))$ of the degree $\degC(\chi) = \chi(1)$; 
and we define $\degR(\chi)$ to be the degree of the smallest real character built out of $\chi$:
\[ \degR(\chi) = \operatorname{deg}(\chi) 
\begin{cases}
1, & \mbox{ if } \IFS(\chi) = 1 \\
2, & \mbox{ otherwise,}
\end{cases} \]
where $\IFS(\chi)$ is the Frobenius-Schur indicator of $\chi$. 
We can therefore quickly calculate the subgroup of never-broken symmetries once we have the character table of $G$.

An immediate corollary of Eq.~\eqref{eq:Hd} (and Prop.~\ref{prop:chi}) is that we have the normal series $H_1\trianglerighteq H_2\trianglerighteq \cdots \trianglerighteq \{1\}$; 
i.e., the subgroup of never-broken symmetries shrinks with increasing dimension of the order observable and eventually becomes trivial, 
so that beyond some finite value of $d$, the GE-autoencoder is no longer able to exploit any never-broken symmetries and is as large as the baseline-autoencoder.

We implemented Eq.~\eqref{eq:Hd} in GAP%
\footnote{GAP is a computer algebra system for computational discrete algebra with particular emphasis on computational group theory~\citep{anon2021gap}.} %
and used our implementation to calculate the subgroup of never-broken symmetries for the Ising symmetry group $G$ for various dimensions $d$ of the order observable (Table~\ref{table:Hd}). 
We note that a limitation of this computational approach is that it requires a numerical value for the lattice size $L$, 
and the computation time increases with $L$. 
We ran our code for lattice sizes $L\in\{4,8,16\}$ and observed that the presentations of the returned subgroups in terms of generators and relations were independent of lattice size, 
allowing us to conclude empirically that these presentations hold for arbitrarily large even lattice size $L$.

The subgroups of never-broken symmetries of the Ising symmetry group for $d\in\{1,2\}$ agree with the theoretical calculations in Props.~\ref{prop:never_broken},\ref{prop:never_broken_2D} (Table~\ref{table:Hd}). 
For $d=3$, we get the same subgroup as with $d=2$. 
For $d=4$, the subgroup of never-broken symmetries becomes trivial, so that the GE-autoencoder is just as large as the baseline-autoencoder; 
this remains the case for all $d\geq 4$ since the subgroups form a descending series with increasing $d$.

\begin{table}
\centering
\caption{\label{table:Hd} %
Subgroup $H_d$ of never-broken symmetries of the Ising symmetry group $G$ assuming various dimensions $d$ of the order observable. 
Here $\beta = \rho\alpha\rho^{-1}$ is the horizontal translation generator. 
The subgroups were computed using our GAP implementation of Eq.~\eqref{eq:Hd} for lattice sizes $L\in\{4,8,16\}$.
}
\begin{tabular}{S[table-format=1]ccc}
\toprule
{$d$} & $|H_d|$ & $|G/H_d|$ & $H_d$ \\
\midrule
1 & $L^2$ & 16 & $\langle\alpha^2, \rho^2, (\alpha\rho)^2\rangle$ \\
2 & $L^2/4$ & 64 & $\langle\alpha^2, \beta^2\rangle$ \\
3 & $L^2/4$ & 64 & $\langle\alpha^2, \beta^2\rangle$ \\
4 & 1 & $16L^2$ & $\{1\}$ \\
\bottomrule
\end{tabular}
\end{table}

Once we have determined the subgroup $H_d$ of never-broken symmetries, we constrain the GE-autoencoder so that the encoder is $H_d$-invariant and the decoder returns an $H_d$-invariant lattice configuration. 
If the encoder and decoder each have one hidden layer of neurons, then the starting point for determining the appropriate parameter constraints is a recent classification of all invariant shallow neural networks~\cite{agrawal2022classification}. 
Finally, once the GE-autoencoder architecture is selected, we train the network by minimizing a loss function with symmetry regularization that generalizes Eq.~\eqref{eq:loss_reduced} for 1D order observables to order observables of any dimension $d$. 
The loss function is given below.

Suppose we have a dataset of $N$ lattice configurations. 
For every $g\in G$, let $Z_g\in\RR^{N\times d}$ whose $n^\text{th}$ row is $\check{O}(g^{-1}\check{x}_n)$; 
i.e., $Z_g$ is the matrix of outputs of the GE-encoder evaluated on the dataset, where all lattice configurations were first transformed by $g^{-1}$ 
(we use $g^{-1}$ instead of $g$ because the GE-encoder outputs are stacked as row vectors in $Z_g$). 
For the identity element $g=1$, let $Z = Z_1$ for further brevity. 
Let $P$ be the orthogonal projection operators onto the null space of $Z$:
\[ P = I - Z^+Z, \]
where $Z^+$ is the Moore-Penrose pseudoinverse of $Z$ and $I$ the identity matrix. 
Finally, let $A$ be a $d\times d$ learnable matrix parameter, 
and define the following estimator of $\psi_g$:
\begin{equation} \label{eq:psihat}
\hat{\psi}_g = Z^+Z_g + P A.
\end{equation}
Then the general loss function for a GE-autoencoder with latent dimension $d$ has the form
\begin{align}
\check{\calL}(\check{\calO}, \check{\calD}, A) 
&= \frac{1}{N}\sum_{n=1}^N L_{\mathrm{BCE}}(\check{\calD}(\check{\calO}(\check{x}_n), T_n), \check{x}_n) \nonumber \\
&+ \lambda\sum_{g\in\Gamma} (R_g + S_g), \label{eq:loss_2D}
\end{align}
where $\Gamma\subset G$ is the minimal set of symmetry generators necessary to check (e.g., for Ising symmetries, $\Gamma = \{\tau,\sigma\}$ for $d=1$ and $\Gamma=\{\rho\tau\sigma\}$ for $d=2$) 
and
\begin{align}
R_g &= \Vert I - \hat{\psi}_g^{\top} \hat{\psi}_g\Vert_F^2 \label{eq:R} \\
S_g &= \frac{\Vert Z_g-Z\hat{\psi}_g \Vert_F^2}{\Vert Z_g\Vert_F^2}, \label{eq:S}
\end{align}
and where $\Vert\cdot\Vert_F^2$ is the squared Frobenius matrix norm (sum of squared matrix elements). 
These regularization terms are obtained by relaxing the hard equivariance constraint $Z_g = Z\psi_g$ to the minimization of $\Vert Z_g-Z\psi_g \Vert_F^2$. 
Specifically, $\hat{\psi}_g$ is the minimizer (i.e., linear least squares estimator); 
$R_g$ ensures $\hat{\psi}_g$ is (approximately) an orthogonal representation; 
and $S_g$ ensures $\Vert Z_g-Z\hat{\psi}_g\Vert_F^2$ is minimized. 
Note the denominator in Eq.~\eqref{eq:S} is included so the optimizer does not simply rescale the GE-encoder.

The above loss function is an approximate generalization of the loss function for the case of a 1D order observable [Eq.~\eqref{eq:loss_reduced}]. 
Specifically, if $d=1$, then $R_g$ and $S_g$ reduce to expressions equivalent to the first and second regularization terms in Eq.~\eqref{eq:loss_reduced}, 
with $S_g$ matching the second regularization term exactly. 
(see Appendix~\ref{appendix:reg_2D} for details). 
Note that Eq.~\eqref{eq:loss_reduced} includes a third regularization term whose original purpose was to ensure a nontrivial representation $\psi$ is learned; 
however, based on a small sample of numerical tests, we believe that this third regularization term is not strictly necessary, and we have thus omitted its generalization from Eq.~\eqref{eq:loss_2D}.

\section{Discussion}
\label{sec:discussion}

We introduced the group-equivariant autoencoder (GE-autoencoder) -- a deep neural network (DNN) architecture that can be used to locate phase transitions by detecting the associated spontaneous symmetry breaking (SSB). We demonstrated its efficacy for the 2D classical ferromagnetic and antiferromagnetic Ising models, finding that the GE-autoencoder (1) accurately determines which symmetries are broken at each temperature, and (2) estimates the critical temperature with greater accuracy and time-efficiency than an SSB-agnostic autoencoder.

We also found the GE-autoencoder to be more robust than the baseline-autoencoder in an interesting sense. 
Recall that we deliberately selected nonlinear architectures for the autoencoders, even though (staggered) magnetization---the ``true'' order observable---is a linear function of the lattice configuration. 
This models the likely scenario in real applications where the DNN being used is more expressive than the unknown order observable we are seeking. 
Ideally, if the DNN is robust, then it should be able to reduce its expressivity to fit the target order observable. 
As discussed in Sec.~\ref{sec:results:temperature}, we suspect that the baseline-autoencoder performs better than the GE-autoencoder on finite lattices but worse in the thermodynamic limit because it learned an inappropriate nonlinear order observable; 
that the GE-autoencoder could accurately learn magnetization without overfitting attests to its superior robustness. 
This robustness is also reflected in the GE-autoencoder's greater sensitivity to the presence of an external symmetry-breaking magnetic field (Sec.~\ref{sec:field}).

There are several implementation details in our method that are worth noting. 
First, in our proof-of-principle example, we found that the $T_{\mathrm{c}}$ estimation by fitting a step function to the fourth Binder cumulant vs.\ temperature curve works well, 
and it still allows for finite-size scaling analysis (see Sec.~\ref{sec:results:extrapolate}). 
In contrast, the intersection of second Binder cumulant curves across lattice sizes is not guaranteed for autoencoders, 
and the intersection points of fourth Binder cumulant curves across lattice sizes is not guaranteed to be unique. 
Second, proper initialization and learning rate settings (see Appendix~\ref{appendix:lr}) were critical for a fair comparison of the baseline-autoencoder and GE-autoencoder. 
Our approach to determining these settings could be useful for deep learning experiments in general, where one DNN is a constrained copy of another. 
Finally, in contrast to ML methods for phase detection that are entirely data-driven, our method includes group-theoretic considerations that allow us to exploit some of the structure available in the problem, namely symmetries; 
given the benefits we have identified here, we think this practice should always be followed whenever possible.

Our work has several physical implications as well. For example, recent progress has been made in transfer-learning from small to large lattice sizes implementing ideas of block decimation from the renormalization group as a way to more efficiently extrapolate to the thermodynamic limit \cite{Efthymiou:2019uy}. The GE-autoencoder, on the other hand, is scale-independent and could therefore be well-suited for this transfer-learning task. 
Indeed, we found that the multiscale GE-autoencoder is more time-efficient and slightly more accurate than the single-scale GE-autoencoder. 
In future work, we aim to improve the multiscale GE-autoencoder until it saves us from having to simulate the largest lattice size (e.g., $L=128$) entirely. 
For example, researchers have used different generative ML methods to learn the distribution of lattice configurations of a system, allowing them to simulate the system more efficiently than with traditional MC methods~\cite{PhysRevB.95.041101, ShenPRB2018, Li2019ann, AlbergoPRD2019, PhysRevB.101.115111, PhysRevB.98.041102}. Combining these methods with ideas introduced in this paper could allow us to generate large sample lattices based only on a dataset of small-to-moderate lattices, possibly offering a significant speedup over MC methods.

The superior accuracy of the GE-autoencoder could also translate into greater robustness against the sign problem. There is empirical evidence suggesting that DNNs could overcome the sign problem in the single-band Hubbard model to some extent~\cite{broecker2017machine}. Subsequent work, however, showed that the sign problem returns at more extreme temperatures and doping~\cite{ch2017machine}, although this work also demonstrates that transfer-learning from sign-problem-free regions to sign-problem-prone regions in the phase diagram could work for small amounts of doping. 
If the GE-autoencoder is indeed more robust than previous DNN methods, then perhaps it could allow us to access even more extreme temperatures and doping levels.

Finally, knowledge of the SSB associated to a phase transition has value beyond a means to phase detection alone. For example, identifying the relevant symmetries could help elucidate the mechanism driving subtle phase transitions and thus offer a means to control them in practice to realize real-world applications.

Lastly, we end with remarks on some potential improvements to our approach to be addressed in the future. 
First, we would like to formalize, further develop, and better understand the general procedure presented in Sec.~\ref{sec:general}. 
For example, the computational approach described in Sec.~\ref{sec:2D} for finding the subgroup of never-broken symmetries of an arbitrary finite symmetry group requires that we specify a value for the system size; 
understanding how the never-broken symmetries depend on system size would thus require us to test various system sizes and then search for a pattern, 
and it is unclear how feasible this would be for general systems. 
On the other hand, if we only consider a class of systems for which the dependence of the symmetry group on the system size takes a particular form, then it may be feasible to derive a specialized algorithm for calculating the never-broken symmetries for arbitrary system sizes.

Second, we would like to consider example systems with more complicated symmetry groups. 
This could include those with continuous internal symmetries and gauge symmetries; 
however, a first step would be to consider finite nonabelian internal symmetry groups that must act on vector-valued lattice configurations.

Third and finally, we would like to better understand how to choose the representation by which the never-broken symmetries are imposed on the GE-autoencoder. 
Relating to this is our finding that the baseline-autoencoder estimates the critical temperature with greater accuracy than the GE-autoencoder for individual  lattice sizes (i.e., without finite-size scaling analysis). This aspect was  discussed in Sec.~\ref{sec:results:temperature}. As speculated there, this observation could be due to the greater flexibility of the baseline-autoencoder. 
However, we suspect that this greater flexibility is also why the baseline-autoencoder exhibits a slightly sharper transition in its order parameter vs.\ temperature curve, while the GE-autoencoder learns an order parameter very close to that of magnetization. 
We thus suspect that the baseline-autoencoder is more prone to incorrectly classifying the Ising phase transition as first-order instead of second-order 
(as was observed in Ref.~[\citenum{alexandrou2020critical}]) 
when compared to the GE-autoencoder.  If greater flexibility is desired, we believe the better way is to use a more flexible GE-autoencoder; this could become important for cases where the nature of the phase transition (i.e. continuous or weakly first order) is under dispute.  The network complexity required for the GE-autoencoder to match the critical-temperature-estimation performance of the baseline-autoencoder is likely related to our choice of how we impose the never-broken symmetries. We intend to investigate this relationship in future work.

Much work remains to be done to establish the GE-autoencoder as a mature and trusted machine learning model; however, the proof-of-principle application presented here provides optimism for its utility in addressing open problems related to the search for spontaneous symmetry breaking in the pseudogap phase of the high temperature cuprates, or at topological phase transitions.

\section*{Acknowledgements}
Work by S.~J. and A.~D. was supported by the U.S. Department of Energy, Office of Science, Office of Basic Energy Sciences, under Award Number DE-SC0022311. Work by D.~A. and J.~O. was supported by the U.S. Department of Energy, Office of Science,  Advanced Scientific Computing Research, under award Number DE-SC0018175.

\appendix

\section{Propositions}
\label{appendix:propositions}

\subsection{Subgroup of never-broken Ising symmetries}
\label{appendix:propositions:never}

Recall from Sec.~\ref{sec:ssb} that a symmetry $g\in G$ is never-broken if $\psi_g=1$ (i.e., if $g$ is in the kernel of $\psi$, denoted $\operatorname{ker}(\psi)$). 
In the absence of any knowledge about the true representation $\psi$ associated with the order parameter, 
we can deduce a subgroup of never-broken symmetries by finding all symmetries $g\in G$ such that $\psi_g=1$ (i.e., $g\in \operatorname{ker}(\psi)$) for all representations $\psi:G\mapsto\{-1, 1\}$. 
The following proposition establishes such a subgroup for the Ising symmetry group.

\begin{prop} \label{prop:never_broken}
Let $\Psi$ be the set of all real scalar representations $\psi:G\mapsto\{-1, 1\}$ of the Ising symmetry group $G$. 
Then $\bigcap_{\psi\in\Psi} \operatorname{ker}(\psi) = \langle\alpha^2,\rho^2,(\alpha\rho)^2\rangle$.
\end{prop}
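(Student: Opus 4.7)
The plan is to prove the two inclusions separately. For the easy inclusion $N_0 := \langle\alpha^2,\rho^2,(\alpha\rho)^2\rangle \subseteq \bigcap_{\psi\in\Psi}\operatorname{ker}(\psi)$, I note that every $\psi\in\Psi$ takes values in $\{-1,+1\}$, so $\psi(g^2) = \psi(g)^2 = 1$ for any $g\in G$. In particular, the three generators of $N_0$ are all squares and hence lie in $\operatorname{ker}(\psi)$ for every $\psi$, so the same is true of the subgroup they generate.

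The reverse inclusion is the substantive direction. I will establish it by showing (a) $N_0$ is normal in $G$, (b) the quotient $Q := G/N_0$ is abelian of exponent $2$ generated by the images of $\alpha,\rho,\tau,\sigma$, and (c) these four images are actually independent in $Q$, so $Q\cong(\ZZ/2)^4$. Once these are in place, any $g\notin N_0$ has nontrivial image in $Q$, and projecting onto the appropriate $\ZZ/2$ factor yields a $\psi\in\Psi$ with $\psi(g)=-1$, giving $g\notin\bigcap_{\psi\in\Psi}\operatorname{ker}(\psi)$.

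For step (a), I will check by hand that each of the four generators of $G$ conjugates the three generators of $N_0$ back into $N_0$. Many cases are immediate: $\sigma$ is central, $\alpha$ and $\tau$ commute, and conjugating $\rho^2$ or $\alpha^2$ by $\alpha$ and $\rho$ is easy using $\alpha\rho^2 = \rho^2\alpha^{-1}$. The delicate computations involve the horizontal translation $\beta := \rho\alpha\rho^{-1}$, which commutes with $\alpha$ on geometric grounds (both are lattice translations); unfolding $(\alpha\rho)^2 = \alpha\cdot(\rho\alpha\rho^{-1})\cdot\rho^2 = \alpha\beta\rho^2$ shows $\alpha\beta\in N_0$, whence $\beta^2 = (\alpha^2)^{-1}(\alpha\beta)^2\in N_0$ (using $(\alpha\beta)^2=\alpha^2\beta^2$), and from there the remaining conjugates of $(\alpha\rho)^2$ by $\alpha$, $\rho$, and $\tau$ reduce to elements of $N_0$ via the same tricks. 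This case-check is where the main work lies and is the principal obstacle of the proof; its length is driven by the fact that the relations of $G$ tangle spatial generators in a nonabelian way.

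For step (b), the quotient relations $\alpha^2=\rho^2=(\alpha\rho)^2=1$ force $\alpha\rho=(\alpha\rho)^{-1}=\rho\alpha$, and $\rho\tau=\tau\rho^3$ collapses to $\rho\tau=\tau\rho$; since $\tau$ already commutes with $\alpha$ and $\sigma$ is central, $Q$ is abelian of exponent $2$ with at most $16$ elements. For step (c), I exhibit four explicit characters $\psi_\alpha,\psi_\rho,\psi_\tau,\psi_\sigma\in\Psi$, each sending exactly one of $\alpha,\rho,\tau,\sigma$ to $-1$ and the other three to $+1$. Verifying the defining relations of $G$ under each such assignment is routine (for example, $\rho\tau=\tau\rho^3$ becomes $xy=yx^3$ with $x,y\in\{\pm 1\}$, which is automatic, and $\alpha^L=1$ needs only that $L$ is even). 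The existence of these $16$ distinct characters forces $|Q|\ge 16$, and combined with $|Q|\le 16$ yields $Q\cong(\ZZ/2)^4$. Combining (a)--(c) gives the reverse inclusion and completes the proof.
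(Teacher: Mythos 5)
Your proposal is correct, and it reaches the hard inclusion by a genuinely different route than the paper. Both arguments share the same key ingredient: the observation that every assignment of signs in $\{-1,+1\}$ to the four generators $\alpha,\rho,\tau,\sigma$ is compatible with the defining relations of $G$ and therefore yields a homomorphism $G\mapsto\{-1,1\}$, giving sixteen characters. Where you diverge is in how that ingredient is converted into the inclusion $\bigcap_{\psi}\operatorname{ker}(\psi)\subseteq\langle\alpha^2,\rho^2,(\alpha\rho)^2\rangle$. The paper never proves that $N_0=\langle\alpha^2,\rho^2,(\alpha\rho)^2\rangle$ is normal; instead it evaluates the sixteen characters on a word for $g$ to force all generator exponents to be even, and then massages $g$ into the form $(\alpha^2)^{\cdot}(\alpha\beta)^{\cdot}(\rho^2)^{\cdot}$ using the asserted normal form $g=\alpha^{q_1}\beta^{q_2}\rho^{q_3}\tau^{q_4}\sigma^{q_5}$ coming from the decomposition $G=(\langle\alpha,\beta\rangle\rtimes\langle\rho,\tau\rangle)\times\langle\sigma\rangle$. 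You instead prove $N_0\trianglelefteq G$ directly by a conjugation case-check and identify $G/N_0\cong(\ZZ/2)^4$, after which the characters of the quotient separate points. The two routes cost about the same: your price is the conjugation check, whose crux --- $\alpha\beta=(\alpha\rho)^2\rho^{-2}\in N_0$ and hence $\beta^2\in N_0$ --- you correctly isolate, and which mirrors the identity $\alpha\beta=(\alpha\rho)^2\rho^2$ the paper also needs; the paper's price is justifying the normal form and the well-definedness mod $2$ of the exponent counts. Your version has the advantage of delivering, as a byproduct, exactly the facts the paper asserts without proof immediately after the proposition, namely $\scb(L)\trianglelefteq G$ and $|G/\scb(L)|=16$. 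One small caveat, equally present in the paper's own argument: the commutativity of $\alpha$ with $\beta=\rho\alpha\rho^{-1}$ is taken from the geometric realization rather than derived from the listed relations, so both proofs implicitly work in the concrete symmetry group rather than the bare presentation.
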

\begin{proof}
Let $g\in G$ such that $g=h^2$ for some $h\in G$. 
Then for every $\psi\in\Psi$, we have
\[ \psi_g = \psi_{h^2} = \psi_h^2 = (\pm 1)^2 = 1, \]
and hence $g\in\bigcap_{\psi\in\Psi}\operatorname{ker}(\psi)$; 
in particular, we have
\[ \langle\alpha^2, \rho^2, (\alpha\rho)^2\rangle \subseteq \bigcap_{\psi\in\Psi}\operatorname{ker}(\psi). \]
All that remains to show is the reverse inclusion. 
Let $g\in\bigcap_{\psi\in\Psi}\operatorname{ker}(\psi)$, 
so that $\psi_g=1$ for every $\psi\in\Psi$. 
By the defining relations of $G$, $g$ admits the expression $g = h\tau^{m_3}\sigma^{m_4}$, 
where $h$ is a product of $m_1$ copies of $\alpha$ and $m_2$ copies of $\rho$ in some order. 
Now since $\psi$ takes values in an Abelian group, then for all $g_1,g_2\in G$,
\[ \psi_{g_1g_2} = \psi_{g_1}\psi_{g_2} = \psi_{g_2}\psi_{g_1} = \psi_{g_2g_1}. \]
Using this fact, we have
\begin{align}
\psi_g 
&= \psi_h \psi_{\tau^{m_3}} \psi_{\sigma^{m_4}} \nonumber \\
&= \psi_{\alpha^{m_1}\rho^{m_2}} \psi_{\tau^{m_3}} \psi_{\sigma^{m_4}} \nonumber \\
&= \psi_{\alpha}^{m_1} \psi_{\rho}^{m_2} \psi_{\tau}^{m_3} \psi_{\sigma}^{m_4} = 1. \label{eq:mi} \\
\end{align}
This holds for every $\psi\in\Psi$. 
Now each $\psi\in\Psi$ is completely determined by its values on the generators $\alpha$, $\rho$, $\tau$, and $\sigma$. 
Moreover, it is easy to verify that if we apply $\psi$ to every defining relation of $G$, 
then the resulting equations are satisfied for all choices of $\psi_{\alpha}$, $\psi_{\rho}$, $\psi_{\tau}$, and $\psi_{\sigma}$. 
Thus, $\Psi$ is precisely the set of homomorphisms $\psi:G\mapsto\{-1, 1\}$ determined by every combination of values in $\{-1, 1\}$ on the four generators of $G$. 
Let $\psi_i\in\Psi$ such that $\psi_i$ is $-1$ on the $i$th generator and $1$ on the other three generators. 
Evaluating Eq.~\eqref{eq:mi} at $\psi=\psi_i$, we get $(-1)^{m_i} = 1$, 
implying that each $m_i$ is even. 
The element $g$ thus takes the form $g = h\tau^{2n_3}\sigma^{2n_4} = h$, 
where $h$ is a product of $2n_1$ copies of $\alpha$ and $2n_2$ copies of $\rho$ in some order.

On the other hand, let $\beta = \rho\alpha\rho^{-1}$, and note that $\alpha$ and $\beta$ commute. 
Then $G = (\langle\alpha,\beta\rangle\rtimes\langle\rho,\tau\rangle)\times\langle\sigma\rangle$, 
implying that each $g\in G$ admits the unique representation 
\[ g = \alpha^{q_1}\beta^{q_2}\rho^{q_3}\tau^{q_4}\sigma^{q_5}. \]
If $\psi_g=1$, then we have already deduced that $g$ must be a product of $\alpha$'s and $\rho$'s, 
and hence $g=\alpha^{q_1}\beta^{q_2}\rho^{q_3}$. 
Noting that $\beta^{q_2}=\rho\alpha^{q_2}\rho^{-1}$, 
this expression of $g$ has $q_1+q_2$ copies of $\alpha$ and $1-1+q_3=q_3$ copies of $\rho$. 
Equating these to the previously obtained exponents in the expression of $g$, we have $q_1+q_2=2n_1$ and $q_3=2n_2$, 
implying that $q_3$ is even and that $q_1$ and $q_2$ have the same parity. 
Now using the fact that $\alpha$ and $\beta$ commute, $g$ admits the form
\begin{align*}
g 
&= \alpha^{q_1-q_2}\alpha^{q_2}\beta^{q_2}\rho^{q_3} \\
&= \alpha^{q_1-q_2}(\alpha\beta)^{q_2}\rho^{q_3} \\
&= (\alpha^2)^{n_1-q_2}(\alpha\beta)^{q_2}(\rho^2)^{n_2}, 
\end{align*}
implying $g\in\langle\alpha^2,\rho^2,\alpha\beta\rangle$. 
Finally, note that
\begin{align*}
\alpha\beta 
&= \alpha\rho\alpha\rho^{-1} \\
&= \alpha\rho\alpha\rho^3 \\
&= \alpha\rho\alpha\rho\rho^2 \\
&= (\alpha\rho)^2\rho^2.
\end{align*}
Since $\rho^2\in\langle\alpha^2,\rho^2,\alpha\beta\rangle$, then $\langle\alpha^2,\rho^2,(\alpha\rho)^2\rangle$, 
and therefore $g\in\langle\alpha^2,\rho^2,(\alpha\rho)^2\rangle$, establishing
\[ \bigcap_{\psi\in\Psi}\operatorname{ker}(\psi) \subseteq \langle\alpha^2,\rho^2,(\alpha\rho)^2\rangle. \]
\end{proof}

Intuitively, even for a general observable dimension $d$, the observable $\calO:\XX\mapsto\RR^d$ reduces dimensionality from $L^2$ to $d$ where $L$ may be arbitrarily large in the thermodynamic limit. 
If $d$ is less than the minimum dimensionality required for $\psi$ to be a faithful representation, then some subgroup of $G$ will necessarily be modded out by $\psi$.

It can be shown that $\scb(L) \trianglelefteq G$, and hence we have the quotient group
\begin{align} \label{eq:quotient_group}
    G/H &= \langle\alpha H, \rho H, \tau H, \sigma H\rangle \nonumber \\
        &= \{\alpha^{m_1}\rho^{m_2}\tau^{m_3}\sigma^{m_4} H: m_i\in \{0, 1\}\},
        \nonumber \\
    H &= \scb(L).
\end{align}
This quotient group represents the reduction in the number of symmetries we will need to check for SSB; 
indeed, $|G| = 16L^2$ and $|H|=L^2$, so that $|G/H| = 16$.

\subsection{Constraint on the representations of spatial symmetries}
\label{appendix:propositions:art}

The form of the reduced encoder [Eq.~\eqref{eq:encoder_reduced}] places additional constraints on the spatial symmetries in the quotient group [Eq.~\eqref{eq:quotient_group}], as specified in the following proposition.

\begin{prop} \label{prop:art}
Suppose $\check{\calO}:[-1, 1]^2\mapsto\RR$ is nonzero for at least one lattice configuration. 
For every $\psi\in\Psi$ such that $\check{\calO}(g\check{x}) = \psi_g\check{\calO}(\check{x})$ for all Ising symmetries $g\in G$, 
we have $\psi_{\alpha} = \psi_{\rho} = \psi_{\tau}$.
\end{prop}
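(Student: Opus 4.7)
The plan is to reduce the claim to a single observation: each of the three spatial generators $\alpha$, $\rho$, and $\tau$ induces the same action on the checkerboard-averaged coordinate $(\checkb{x},\checkw{x})$, namely the coordinate swap $(b,w)\mapsto(w,b)$. Once this is established, the equivariance hypothesis yields three equations with a common left-hand side, and comparing them forces $\psi_\alpha=\psi_\rho=\psi_\tau$.

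The first step is to prove a lemma that, for every $g \in \{\alpha,\rho,\tau\}$ and every $x \in \XX$, one has $(\checkb{gx},\checkw{gx}) = (\checkw{x},\checkb{x})$. This is a direct parity calculation on the site indices $(i_x,i_y) \in \{0,\ldots,L-1\}^2$. Placing the lattice origin at the center of the lattice (between sites, which is possible because $L$ is even), the generators act as $\alpha(i_x,i_y) = (i_x,\,i_y+1)\bmod L$, $\rho(i_x,i_y) = (L-1-i_y,\,i_x)$, and $\tau(i_x,i_y) = (L-1-i_x,\,i_y)$. In each case the parity of $i_x+i_y$ flips: for $\alpha$ because of the $+1$, and for $\rho$ and $\tau$ because $L-1$ is odd. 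So each generator maps the even sublattice bijectively onto the odd sublattice and vice versa, and the equal-weight sums defining $\checkb{\cdot}$ and $\checkw{\cdot}$ interchange.

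Given the lemma, the proposition follows readily. By hypothesis there exists some $x_0 \in \XX$ with $\check{\calO}(\check{x}_0) \neq 0$. Applying the equivariance condition at $x_0$ for each $g \in \{\alpha,\rho,\tau\}$ and substituting from the lemma yields the common identity
\[ \check{\calO}(\checkw{x_0},\checkb{x_0}) \;=\; \psi_g\,\check{\calO}(\checkb{x_0},\checkw{x_0}) \qquad \text{for each } g \in \{\alpha,\rho,\tau\}. \]
Since $\psi_g \in \{-1,+1\}$, both sides are nonzero, so dividing the three equations by the common nonzero factor $\check{\calO}(\checkb{x_0},\checkw{x_0})$ gives $\psi_\alpha=\psi_\rho=\psi_\tau$.

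The one thing to be careful with is the lemma: the parity flips hinge on placing the origin at an inter-site point and on $L$ being even, so that $L-1$ is odd. For odd $L$ the central site would be fixed by $\rho$ and $\tau$, and both would then preserve the sublattices while $\alpha$ would still swap them, so the assumption that $L$ is even (already built into the setup of Prop.~\ref{prop:never_broken}) is essential here as well. Everything past the lemma is elementary algebra.
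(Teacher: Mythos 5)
Your proof is correct and takes essentially the same route as the paper's: both hinge on the observation that each of $\alpha$, $\rho$, $\tau$ swaps the two sublattices, so that $g(\checkb{x},\checkw{x})=(\checkw{x},\checkb{x})$, after which evaluating the equivariance identity at a configuration where $\check{\calO}$ is nonzero forces $\psi_\alpha=\psi_\rho=\psi_\tau$. The only difference is that you spell out the index-parity computation (and its reliance on $L$ being even and on the rotation/reflection axes sitting between sites) that the paper dispatches with ``clearly.''
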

\begin{proof}
Clearly $\alpha$ (downward translation), $\rho$ ($90^\circ$-rotation), and $\tau$ (reflection) map black squares to white squares and vice versa on an $L\times L$ checkerboard where $L$ is even. 
Thus, for all $g\in\{\alpha, \rho, \tau\}$,
\begin{align*}
g(\checkb{x}, \checkw{x}) &= (\checkw{x}, \checkb{x}) \\
\check{\calO}(g(\checkb{x}, \checkw{x})) &= \check{\calO}((\checkw{x}, \checkb{x})) \\
\psi_g\check{\calO}((\checkb{x}, \checkw{x})) &= \check{\calO}((\checkw{x}, \checkb{x})),
\end{align*}
and hence,
\[ \psi_{\alpha}\check{\calO}((\checkb{x}, \checkw{x})) = \psi_{\rho}\check{\calO}((\checkb{x}, \checkw{x})) = \psi_{\tau}\check{\calO}((\checkb{x}, \checkw{x})). \]
Evaluating this on a lattice configuration $\mathbf{x}$ for which $\check{\calO}$ is nonzero, we obtain $\psi_{\alpha} = \psi_{\rho} = \psi_{\tau}$ as desired.
\end{proof}

The key idea is that $g(\checkb{x}, \checkw{x}) = (\checkw{x}, \checkb{x})$ for all $g\in\{\alpha,\rho,\tau\}$. 
The upshot is that we now need only estimate $\psi_{\sigma}$ and one of $\psi_{\alpha}$, $\psi_{\rho}$, and $\psi_{\tau}$ from the data.

\subsection{Unsupervised-learning of symmetries}
\label{appendix:propositions:invariant}

The following proposition states that under suitable conditions, if an unsupervised model is fit to a dataset containing symmetries, then the fit model will be invariant to those symmetries at least when restricted to the dataset.

\begin{prop} \label{prop:invariant_min}
Let $G$ be a finite group and $\calX = \{\mathbf{x}_n\in\RR^m\}_{n=1}^N$ a $G$-invariant dataset; 
i.e., $g\mathbf{x}_n\in\calX$ for every $g\in G$ and $\mathbf{x}_n\in\calX$. 
Let $\calF$ be a convex set of functions $f:\calX\mapsto\RR^m$ such that 
$gfg^{-1}\in\calF$ for every $f\in\calF$ and $g\in G$. 
Let $L_{\mathrm{metric}}:\RR^m\times\calX\mapsto\RR$ be a function such that
\begin{enumerate}
\item $L_{\mathrm{metric}}(g\mathbf{x}, g\mathbf{x}^{\prime}) = L_{\mathrm{metric}}(\mathbf{x}, \mathbf{x}^{\prime})$ for all $g\in G$ and $(\mathbf{x}, \mathbf{x}^{\prime})\in\RR^m\times\calX$, and
\item $L_{\mathrm{metric}}$ is strictly convex in its first argument.
\end{enumerate}
Then every global minimizer $f_*\in\calF$ of the loss function $\calL:\calF\mapsto\RR$ given by
\[ \calL(f) = \frac{1}{N}\sum_{n=1}^N L_{\mathrm{metric}}(f(\mathbf{x}_n), \mathbf{x}_n) \]
is $G$-equivariant on the dataset $\calX$.
\end{prop}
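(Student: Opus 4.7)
The plan is to use a standard symmetrization-plus-Jensen argument. Specifically, given a global minimizer $f_*\in\calF$, I would construct the symmetrized candidate
\[ \bar{f} = \frac{1}{|G|}\sum_{g\in G} g^{-1}\circ f_*\circ g, \]
observe that $\bar{f}\in\calF$ by the closure of $\calF$ under conjugation combined with convexity, and then show that $\bar{f}$ is also a global minimizer. Strict convexity of $L_{\mathrm{metric}}$ would then force $g^{-1}\circ f_*\circ g$ to coincide with $f_*$ on $\calX$ for every $g\in G$, which is exactly the equivariance claim.

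The first concrete step is to verify that $\calL$ is invariant under conjugation, namely $\calL(g^{-1}\circ f\circ g)=\calL(f)$ for every $g\in G$ and $f\in\calF$. This follows by writing out the sum, applying hypothesis (1) on $L_{\mathrm{metric}}$ to move $g$ into both arguments, and then reindexing using the fact that $g$ permutes the dataset $\calX$. The second step is the Jensen estimate: evaluated at any fixed $\mathbf{x}_n\in\calX$,
\[ L_{\mathrm{metric}}(\bar{f}(\mathbf{x}_n),\mathbf{x}_n) \leq \frac{1}{|G|}\sum_{g\in G} L_{\mathrm{metric}}\bigl(g^{-1}f_*(g\mathbf{x}_n),\mathbf{x}_n\bigr), \]
again by hypothesis (1) equal to $\frac{1}{|G|}\sum_g L_{\mathrm{metric}}(f_*(g\mathbf{x}_n),g\mathbf{x}_n)$. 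Averaging over $n$ and swapping the two sums converts each inner sum over $n$ into $\calL(f_*)$, giving $\calL(\bar{f})\leq\calL(f_*)$. Since $f_*$ is a minimizer, equality holds everywhere, and in particular the Jensen bound above is saturated pointwise on $\calX$.

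The final step extracts equivariance from the saturation. Strict convexity of $L_{\mathrm{metric}}(\cdot,\mathbf{x}_n)$ means the Jensen inequality is strict unless all convex-combined points coincide, i.e.\ the values $g^{-1}f_*(g\mathbf{x}_n)$ are independent of $g$. Comparing to $g=1$ yields $g^{-1}f_*(g\mathbf{x}_n)=f_*(\mathbf{x}_n)$, equivalently $f_*(g\mathbf{x}_n)=gf_*(\mathbf{x}_n)$ for all $g\in G$ and $\mathbf{x}_n\in\calX$, which is the desired $G$-equivariance on $\calX$.

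The argument is almost entirely formal; the only real subtlety, and what I would flag as the main place to be careful, is the bookkeeping needed to ensure that the symmetrized iterate $\bar{f}$ is actually a member of $\calF$ (so that it is an admissible competitor to $f_*$). This requires using the conjugation-closure of $\calF$ to place each $g^{-1}\circ f_*\circ g$ in $\calF$, and then convexity of $\calF$ to form the average; verifying this is a one-liner but it is the load-bearing step connecting the hypotheses on $\calF$ to the Jensen comparison. Everything else is a direct consequence of the two stated properties of $L_{\mathrm{metric}}$ and the fact that $G$ permutes $\calX$.
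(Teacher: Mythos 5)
Your proof is correct and follows essentially the same route as the paper's: both hinge on the conjugation-invariance of $\calL$ together with strict convexity to force the global minimizer to coincide with all of its conjugates on $\calX$. The only cosmetic difference is that you symmetrize over the whole group at once and apply Jensen pointwise to $L_{\mathrm{metric}}$, whereas the paper takes the two-point midpoint $\tfrac{1}{2}\left(f_*+gf_*g^{-1}\right)$ for each $g$ separately and invokes strict convexity of the aggregate loss $\calL$ as a functional on $\calF$.
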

\begin{proof}
First we establish that the loss function $\calL$ is both strictly convex and $G$-invariant. 
For each $n=1,\ldots,N$, it is easy to show that the map from $\calF$ to $\RR$ given by $f\rightarrow L_{\mathrm{metric}}(f(\mathbf{x}_n), \mathbf{x}_n)$ is strictly convex. 
(Note that this relies on the hypothesis that the domain of the functions in $\calF$ is the dataset $\calX$, 
so that for $f_1,f_2\in\calF$, $f_1=f_2$ if and only if $f_1(\mathbf{x}_n)=f_2(\mathbf{x}_n)$ for all $n=1,\ldots,N$.) 
Moreover, since $\calL$ is a convex combination of such maps, then it is also strictly convex. 
Now for every $f\in\calF$ and $g\in G$, we have
\begin{align*}
\calL(gfg^{-1}) 
&= \frac{1}{N}\sum_{n=1}^N L_{\mathrm{metric}}(gf(g^{-1}\mathbf{x}_n), \mathbf{x}_n) \\
&= \frac{1}{N}\sum_{n=1}^N L_{\mathrm{metric}}(gf(g^{-1}\mathbf{x}_n), g(g^{-1}\mathbf{x}_n)) \\
&= \frac{1}{N}\sum_{n=1}^N L_{\mathrm{metric}}(gf(\mathbf{x}_n), g\mathbf{x}_n)  \tag{$g^{-1}\mathbf{x}_n\rightarrow \mathbf{x}_n$} \\
&= \frac{1}{N}\sum_{n=1}^N L_{\mathrm{metric}}(f(\mathbf{x}_n), \mathbf{x}_n) \\
&= \calL(f),
\end{align*}
where we used the fact that the dataset $\calX$ is $G$-invariant in the reindexing step. 
Thus, $\calL$ is $G$-invariant, where each $g\in G$ acts on $\calL$ by conjugation of its argument. 

Now let $f_*\in\calF$ be a global minimizer of $\calL$. 
By $G$-invariance, $gf_*g^{-1}$ is also a global minimizer for every $g\in G$, 
and by convexity, $\frac{f_*+gf_*g^{-1}}{2}$ is a global minimizer for each $g\in G$ as well. 
On the other hand, convexity and $G$-invariance together imply that
\begin{align*}
\calL\left(\frac{f_*+gf_*g^{-1}}{2}\right) 
&\leq \frac{\calL(f_*) + \calL(gf_*g^{-1})}{2} \\
&= \frac{\calL(f_*) + \calL(f_*)}{2} \\
&= \calL(f_*),
\end{align*}
but since $f_*$ and $\frac{f_*+gf_*g^{-1}}{2}$ are both global minimizers, then this inequality is in fact an equality:
\[ \calL\left(\frac{f_*+gf_*g^{-1}}{2}\right) = \frac{\calL(f_*) + \calL(gf_*g^{-1})}{2}. \]
Finally, since $\calL$ is strictly convex, then this necessitates $f_* = gf_*g^{-1}$ 
or equivalently $f(g\mathbf{x}_n) = gf(\mathbf{x}_n)$ for all $g\in G$ and $\mathbf{x}_n\in\calX$, 
thereby proving the claim.
\end{proof}

Note that the functions in $\calF$ are restricted to the dataset $\calX$; 
the minimizer $f_*$ is therefore not guaranteed to be $G$-equivariant off the dataset. 

For our particular case, the function $L_{\mbox{metric}}$ is the binary cross-entropy $L_{\mathrm{BCE}}$ [Eq.~\eqref{eq:bce}]. 
By its form in Eq.~\eqref{eq:bce}, $L_{\mathrm{BCE}}$ is clearly invariant under the spin-flip symmetry $(\mathbf{\hat{x}}, \mathbf{x})\rightarrow (-\mathbf{\hat{x}}, -\mathbf{x})$, 
and it is manifestly invariant under all spatial symmetries as it is a sum over the lattice sites; 
thus, $L_{\mathrm{BCE}}$ is $G$-invariant where $G$ is the Ising symmetry group. 
The strict convexity of $L_{\mathrm{BCE}}$ is evident; 
since $x_{\bf i}=\pm 1$, then each summand in Eq.~\eqref{eq:bce} is either $-\log\left(\frac{1+\hat{x}_{\bf i}}{2}\right)$ or $-\log\left(\frac{1-\hat{x}_{\bf i}}{2}\right)$, 
and in either case, its second derivative with respect to $\hat{x}_{\bf i}$ is strictly positive, so that each summand is strictly convex. 
The set $\calF$ is the set of functions (restricted to our dataset) expressible as autoencoders of fixed depth but arbitrary widths; 
we allow for arbitrary widths to ensure that any convex combination of autoencoders is again expressible as a single autoencoder in $\calF$. 
In practice, however, we make the assumption that the global minimizer $f_*\in\calF$ can be accessed by the single autoencoder architecture we proposed. 
There are additional caveats to Prop.~\ref{prop:invariant_min}: 
First, our dataset of lattice sets is probably only approximately $G$-invariant; 
second, actually finding $f_*$ is nontrivial since the map from its network parameters to the autoencoder $f$ is in general nonconvex. 
Nevertheless, we take Prop.~\ref{prop:invariant_min} as strong motivation to make the assumption that our autoencoder $f = \calD\circ\calO$ is $G$-invariant.

\section{Network initialization and learning rates}
\label{appendix:lr}

Here we provide details on the network parameter initialization and learning rate settings needed for a fair comparison of the baseline-autoencoder and GE-autoencoder. 
The GE-autoencoder is a small network that is equivalent to the larger baseline-autoencoder with the parameters of its first encoding layer and last decoding layer constrained to a checkerboard pattern (ignoring the symmetry regularization terms in the GE-autoencoder loss function). 
The idea is to initialize and set the learning rates of the baseline-autoencoder such that, if the checkerboard constraint were imposed and maintained on the baseline-autoencoder, then it would be and would remain functionally equivalent to the GE-autoencoder at initialization time and throughout training.

\paragraph*{Initialization} %
We initialize the weight matrix and bias vector of each layer of the GE-autoencoder with IID values sampled under a uniform distribution over $\left[-\frac{1}{\sqrt{h_{\mathrm{in}}}}, \frac{1}{\sqrt{h_{\mathrm{in}}}}\right]$, 
where $h_{\mathrm{in}}$ is the input dimension of the layer. 
If $(u_k, v_k)$ are the initial weights of the first encoding layer of the GE-autoencoder incident to the $k$th hidden neuron, then we initialize the weights of the first encoding layer of the baseline-autoencoder according to Eq.~\eqref{eq:uivi}. 
We initialize the weight matrix and bias vector of the last decoding layer of the baseline-autoencoder by simply tiling the initialized parameters of the last decoding layer of the GE-autoencoder according to a checkerboard pattern. 
All remaining parameters of the baseline-autoencoder are structurally equivalent to those of the GE-autoencoder, 
and we thus initialize them to be equal to the corresponding initialized parameters in the GE-autoencoder.

\paragraph*{Learning rate} %
We set the learning rate of all parameters of the GE-autoencoder to $\eta_{\mathrm{GE}} = 0.001$. 
We proceed to deduce the appropriate learning rate $\eta$ for the weights in the first encoding layer of the baseline-autoencoder; 
we do this for the weights on ``black squares'' ($i_x+i_y$ even; see Eq.~\eqref{eq:uivi}); 
the argument for ``white squares'' is analogous. 
Assuming full-batch gradient descent for simplicity and ignoring symmetry regularization, the update rules for the weights in the first encoding layers of the baseline-autoencoder and GE-autoencoder are
\[ \Delta w_{k,{\bf i}} = -\eta\frac{\partial\calL}{\partial w_{k,{\bf i}}} \mbox{ and }
\Delta u_k = -\eta_{\mathrm{GE}}\frac{\partial\calL}{\partial u_k}. \]
However, by Eq.~\eqref{eq:uivi}, we have $\Delta w_{k,{\bf i}} = \frac{2}{L^2}\Delta u_k$ and hence
\[ \eta\frac{\partial\calL}{\partial w_{k,{\bf i}}} = \frac{2}{L^2}\eta_{\mathrm{GE}}\frac{\partial\calL}{\partial u_k}. \]
Applying the chain rule to the right side, we have
\begin{align*}
\eta\frac{\partial\calL}{\partial w_{k,{\bf i}}} 
&= \frac{2}{L^2}\eta_{\mathrm{GE}}\sum_{j_x+j_y \mbox{ even}}\frac{\partial\calL}{\partial w_{k,{\bf j}}}\frac{\partial w_{k,{\bf j}}}{\partial u_k} \\
&= \frac{2}{L^2}\eta_{\mathrm{GE}}\left(\frac{2}{L^2}\sum_{j_x+j_y \mbox{ even}}\frac{\partial\calL}{\partial w_{k,{\bf j}}}\right).
\end{align*}
The quantity in parentheses is the average gradient over black squares. 
By averaging both sides over ${\bf j}$ with $j_x+j_y$ even, we can obtain such an average gradient on both sides; 
cancelling then leaves us with the desired learning rate:
\[ \eta = \frac{2}{L^2}\eta_{\mathrm{GE}}. \]
As an example, for lattice size $L=128$, the learning rate of the weights in the first encoding layer of the baseline-autoencoder should be set to approximately $1.2\times 10^{-7}$-- 
a value unlikely to be found by hand. 
By the same argument as above, we can show that the learning rate of the parameters in the last decoding layer of the baseline-autoencoder should be set equal to $\eta_{\mathrm{GE}}$. 
Finally, all remaining parameters of the baseline-autoencoder are structurally equivalent to those of the GE-autoencoder, 
and hence we set all of their learning rates equal to $\eta_{\mathrm{GE}}$ as well.

\section{Functional comparison to magnetization and Onsager's solution}
\label{appendix:onsager}

Here we regard the order observables (magnetization, baseline-encoder, and GE-encoder) as elements of a Hilbert space and measure the distance from each of them to magnetization. 
Given any Euclidean vector space $\RR^n$ with inner product $\langle\cdot, \cdot\rangle$, we define a measure of error $\nu:\RR^n\times\RR^n\mapsto [0, 1]$ as follows:
\[ \nu(\mathbf{v}, \mathbf{w}) = \sin^2\theta = 1-\left(\frac{\langle \mathbf{v},\mathbf{w}\rangle}{\Vert \mathbf{v}\Vert \Vert \mathbf{w}\Vert}\right)^2, \]
where $\theta$ is the angle between vectors $\mathbf{v}$ and $\mathbf{w}$. 
As $\nu$ takes values in $[0, 1]$, we will express it as a percentage. 
Note that $\nu(\mathbf{v}, \mathbf{w})$ is independent of the signs and norms of $\mathbf{v}$ and $\mathbf{w}$-- a desirable property for our purpose. 
We evaluate each order observable $\calO$ on a test set of $n=204,800$ lattice configurations ($2,048$ configurations sampled at each of $100$ temperatures) to obtain a vector of measurements $\hat{\calO}\in\RR^n$. 
We then evaluate $\nu(\hat{\calO}, \hat{\calO_M})$ for each order observable $\calO$, where $\calO_M$ is magnetization. 
We trivially have $\nu(\hat{\calO_M}, \hat{\calO_M})=0$, but we also obtain an error around $0.0 - 0.1\%$ or $0.001$ for both the single-scale and multiscale GE-encoders across all lattice sizes and training-validation sample sizes, 
indicating that the GE-encoders learn a function very similar to magnetization. 
In contrast, the baseline-encoder results in a larger error and is thus more distinct from magnetization~(Fig.~\ref{fig:cor_magnetization}); 
in the ferromagnetic case, this error even increases with training-validation sample size, meaning that the baseline-encoder moves away from magnetization as it sees more data. 
This is also consistent with the divergence between the baseline-encoder and magnetization in terms of their critical temperature estimates~(Fig.~\ref{fig:tc_ferro}).

\begin{figure}
\centering
\includegraphics[width=1.0\columnwidth]{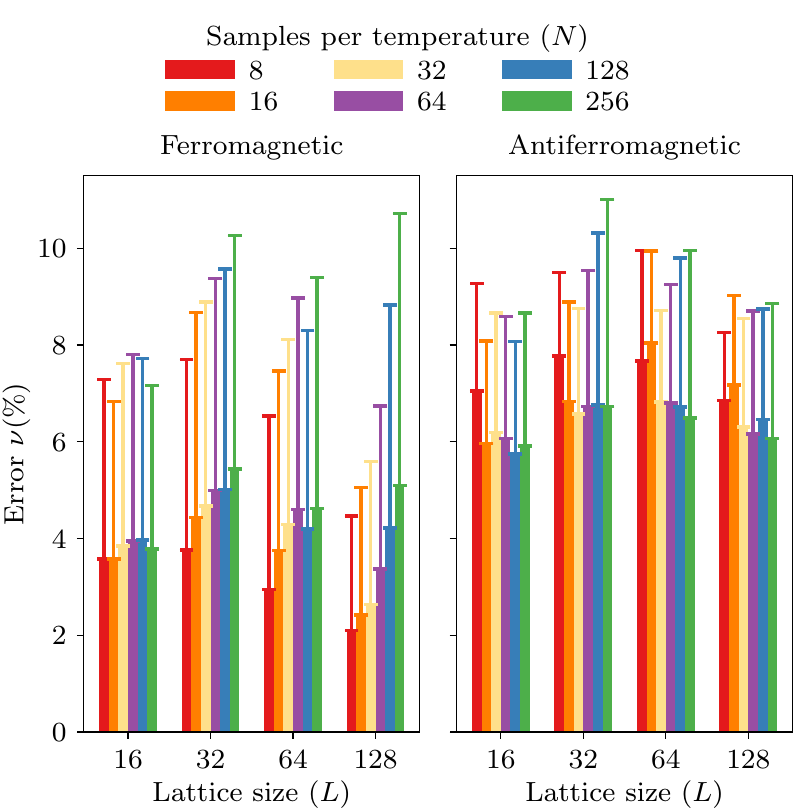}
\caption{\label{fig:cor_magnetization} %
Measure of error between the baseline-encoder and magnetization. 
Error bars indicate the standard deviation across $24$ trials (eight training data folds and three initialization seeds). 
The baseline-encoder is distinct from magnetization (positive error $\nu$) and becomes increasingly distinct with increasing number of training-validation samples in the ferromagnetic case.
}
\end{figure}

We now compare the order parameter $\langle |\calO|\rangle$ derived from each order observable (Fig.~\ref{fig:orders}) to Onsager's solution $M_{\mathrm{ONS}}$ [Eq.~\eqref{eq:onsager}]. 
We evaluate the order parameters and Onsager's solution on the $100$ temperatures in our dataset to obtain vectors in $\RR^{100}$. 
We then evaluate $\nu(\langle|\calO|\rangle, M_{\mathrm{ONS}})$ independently for each jackknife-sample of the order parameter $\langle|\calO|\rangle$. 
We perform least-squares regression on the calculated error $\nu$ against inverse lattice size, finding an approximately linear relationship with the baseline-encoder having a slightly weaker linear dependence (in terms of $r^2$ value) than magnetization and the GE-encoders; 
we visualize this for the maximal case of $N=256$ training-validation samples per temperature~(Fig.~\ref{fig:cor_onsager}). 
Extrapolating to infinite lattice size (i.e., the thermodynamic limit), we find that the order parameters derived from magnetization and the GE-encoders almost converge to Onsager's solution, while the baseline-encoder converges further away from Onsager's solution.

\begin{figure}
\centering
\includegraphics[width=1.0\columnwidth]{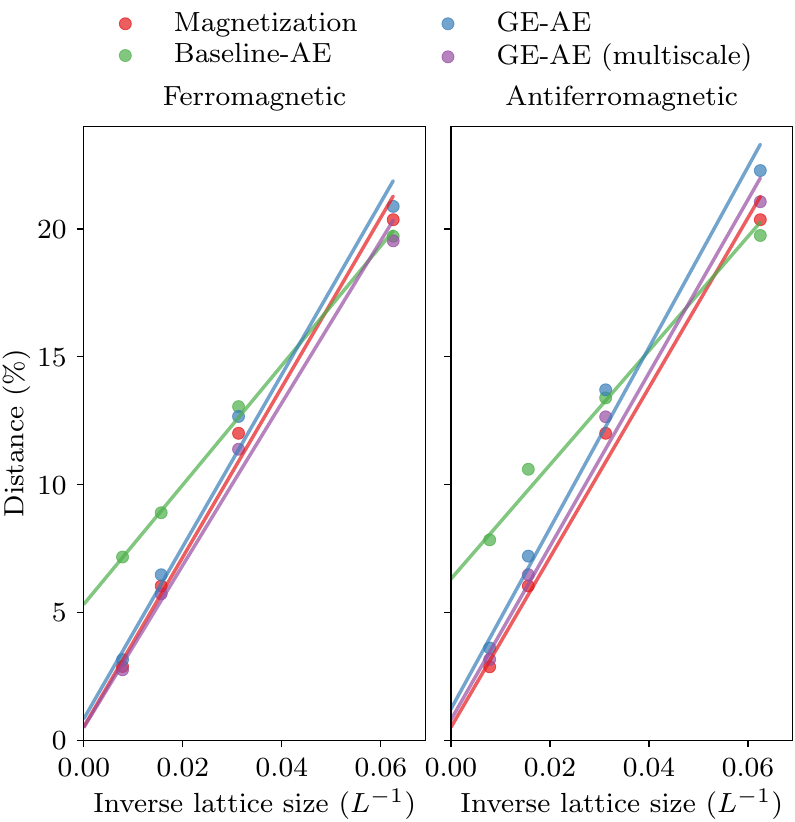}
\caption{\label{fig:cor_onsager} %
Measure of error between the order parameters derived from each order observable and Onsager's solution using $N=256$ training-validation samples per temperature and averaged over $24$ trials (eight training data folds and three initialization seeds). 
The error exhibits an approximately linear dependence on inverse lattice size, with stronger linearity for magnetization and the GE-encoders. 
At infinite lattice size ($L^{-1}=0$), magnetization and the GE-encoders converge more closely to Onsager's solution than does the baseline-encoder.
}
\end{figure}

\section{Obtaining point estimates of the critical temperature}
\label{appendix:optimization}

Here we provide the details of obtaining a point estimate of the critical temperature given interval estimates as described in Sec.~\ref{sec:results:temperature}. 
Let $T_1 < T_2 < \ldots < T_M$ be the temperatures at which we ran MC simulations to generate our dataset, 
and suppose we have $N$ MC-sampled lattice configurations from each of these temperatures. 
Let $t_0\in [T_{m_0}, T_{m_0+1}]$ be the interval estimate of the critical temperature obtained from the Binder cumulant curve based on all available data, in the way described in Sec.~\ref{sec:results:temperature}, 
and let $t_i\in [T_{m_i}, T_{m_i+1}]$ be the interval estimate of the critical temperature obtained from the $i$th jackknife-sample Binder cumulant curve, for $i=1,\ldots,N$. 
We seek the critical temperature point estimate that is optimally stable; 
we do so by minimizing the jackknife variance subject to the interval estimates:
\[ \min_{t_0,\ldots,t_N} \sum_{i=1}^N(t_i-t_0)^2 \mbox{ s.t.} \]
\[ T_{m_i} \leq t_i \leq T_{m_i+1}, \mbox{ for } i=0,\ldots,N. \]
This problem can be expressed more elegantly with vector notation. 
Let $\mathbf{t},\mathbf{a},\mathbf{b}\in\RR^{N+1}$ with elements $t_i$, $T_{m_i}$, and $T_{m_i+1}$ respectively, 
and define the matrix $\mathbf{A}\in\RR^{(N+1)\times (N+1)}$ with elements
\begin{align*}
A_{00} &= N \\
A_{i0} &= A_{0i} = -1 \\
A_{ii} &= 1, \mbox{ for } i=1,\ldots,N,
\end{align*}
and zero for all remaining elements. 
Then the above optimization problem can be expressed as
\[ \min_{\mathbf{t}} \mathbf{t}^\top\mathbf{A}\mathbf{t} \mbox{ s.t. } \mathbf{a} \leq \mathbf{t} \leq \mathbf{b}. \]

The matrix $\mathbf{A}$ admits the factorization $\mathbf{A} = \mathbf{B}^\top \mathbf{B}$, 
where $\mathbf{B}\in\RR^{N\times (N+1)}$ with block structure $\mathbf{B} = [\mathbf{1}\mid \mathbf{I}]$, 
where $\mathbf{1}$ is an $N$-dimensional vector of $1$'s and $\mathbf{I}$ is the $N\times N$ identity matrix. 
The matrix $\mathbf{B}$ clearly has rank $N$, 
and hence $\mathbf{A}$ is a symmetric positive semidefinite matrix of corank $1$. 
The above constrained optimization problem is thus a convex quadratic program, which we efficiently solve numerically using the coneqp solver available in cvxopt~\cite{vandenberghe2010cvxopt}. 

Let $\mathbf{t_{\mathrm{sol}}}$ be the numerical solution obtained. 
Since $\mathbf{A}$ has corank $1$, then the solution set of the optimization problem is at most a line segment (one degree of freedom). 
It is easy to see that if $\mathbf{t_{\mathrm{sol}}}$ is an interior solution, then $\mathbf{t_{\mathrm{sol}}}\pm \eps\mathbf{1}$ is a solution as well for sufficiently small $\eps > 0$, 
where $\mathbf{1}$ is an $(N+1)$-dimensional vector of $1$'s. 
If we set
\[ \eps_1 = \min(\mathbf{t_{\mathrm{sol}}}-\mathbf{a}) \mbox{ and } 
\eps_2 = \min(\mathbf{b}-\mathbf{t_{\mathrm{sol}}}), \]
then the complete solution set is
\[ \{(1-\alpha)(\mathbf{t_{\mathrm{sol}}}-\eps_1\mathbf{1}) + \alpha(\mathbf{t_{\mathrm{sol}}}+\eps_2\mathbf{1}): 0 \leq \alpha \leq 1\}. \]
We select the midpoint $\mathbf{t_*}$ as the vector of jackknife critical temperature point estimates.

The jackknife mean and variance of the critical temperature estimate are then
\begin{align*}
\mathrm{mean} &= \overline{t_*} + B \\
\mathrm{variance} &= \sum_{i=1}^N (t_{*i}-\overline{t_*})^2 + \frac{B^2}{N},
\end{align*}
where $\overline{t_*} = \frac{1}{N}\sum_{i=1}^N t_{*i} $ 
is the mean of the jackknife samples (not including the overall estimate based on all $N$ samples), and $B = N (t_0 - \overline{t_*})$ 
is a term added to reduce the bias in the jackknife mean, at the cost of incurring additional variance. 
Note that the above expression for the variance is equivalent to the objective function of the optimization problem we solved. 
As already discussed in Sec.~\ref{sec:results:temperature}, our estimates of $B$ were unstable due to the non-differentiability of our underlying critical temperature estimator; 
we therefore set $B=0$ in the above equations to obtain our final expressions for the mean critical temperature point estimate and variance.

\section{On vector order observables}

\subsection{A 2D order observable}
\label{appendix:2D}

Before stating Prop.~\ref{prop:never_broken_2D}, we need the following lemma, which characterizes the real orthogonal square and fourth roots of the $2\times 2$ identity matrix.

\begin{lemma} \label{lemma:A4}
Let $A$ be a $2\times 2$ real orthogonal matrix such that $A^4 = I$, 
where $I$ denotes the $2\times 2$ identity matrix. 
\begin{enumerate}[label={(\alph*)}]
\item \label{lemma:A4:a}
If $A^2 = I$, then
\[ A = \pm I \mbox{ or } A = \lmat \cos\theta & \sin\theta\\ \sin\theta & -\cos\theta\rmat, \quad\theta\in [0, 2\pi). \]
\item \label{lemma:A4:b}
If $A^2\neq I$, then
\[ A = \pm \lmat 0 & -1\\ 1 & 0\rmat. \]
\item \label{lemma:A4:c} 
$A^2 = \pm I$.
\end{enumerate}
\end{lemma}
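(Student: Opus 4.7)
The plan is to exploit the standard classification of real $2\times 2$ orthogonal matrices: every $A\in\mathrm{O}(2,\RR)$ is either a rotation $R_\theta=\bigl[\begin{smallmatrix}\cos\theta & -\sin\theta\\ \sin\theta & \cos\theta\end{smallmatrix}\bigr]$ (if $\det A=1$) or a reflection $S_\theta=\bigl[\begin{smallmatrix}\cos\theta & \sin\theta\\ \sin\theta & -\cos\theta\end{smallmatrix}\bigr]$ (if $\det A=-1$). Using the identities $R_\theta R_\phi=R_{\theta+\phi}$ and $S_\theta^2=I$, the claims reduce to elementary trigonometric bookkeeping. I would treat parts (a) and (b) directly, and then derive (c) as an immediate consequence.

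For part (a), I split on the sign of $\det A$. If $A=R_\theta$, then $A^2=R_{2\theta}=I$ forces $2\theta\in 2\pi\ZZ$, i.e.\ $\theta\in\{0,\pi\}$, giving $A=\pm I$. If $A=S_\theta$, then $A^2=I$ automatically (every planar reflection is an involution), so $A$ may be any matrix of the stated reflection form with $\theta\in[0,2\pi)$. Together these exhaust the solutions of $A^2=I$.

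For part (b), I again split on $\det A$. The case $A=S_\theta$ is impossible since it would give $A^2=I$. So $A=R_\theta$, and the conditions $R_\theta^4=I$, $R_\theta^2\neq I$ translate to $4\theta\in 2\pi\ZZ$ and $2\theta\notin 2\pi\ZZ$, i.e.\ $\theta\in\{\pi/2,\,3\pi/2\}$. Plugging in yields $A=\pm\bigl[\begin{smallmatrix}0 & -1\\ 1 & 0\end{smallmatrix}\bigr]$.

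Part (c) is then a direct corollary: in case (a), $A^2=I$ by hypothesis; in case (b), $A=\pm R_{\pi/2}$, so $A^2=R_\pi=-I$. I do not anticipate any real obstacle here; the only care needed is to ensure the reflection parametrization in (a) is stated in a single-valued way over $\theta\in[0,2\pi)$, matching the statement of the lemma.
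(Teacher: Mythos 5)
Your proposal is correct and starts from the same decomposition the paper uses (every $A\in\mathrm{O}(2,\RR)$ is a rotation $R_\theta$ or a reflection $S_\theta$), so it is essentially the same approach. The one notable difference is in execution: by invoking $R_\theta R_\phi=R_{\theta+\phi}$ and the involutive property of reflections, you dispatch part (b) directly from $4\theta\in 2\pi\ZZ$ and $2\theta\notin 2\pi\ZZ$, whereas the paper routes part (b) through applying part (a) to $A^2$ and then eliminating subcases by equating matrix entries and trigonometric identities; your version is shorter and avoids that bookkeeping, but proves the same statements.
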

\begin{proof}
Since $A$ is orthogonal, then it has one of the two following forms for some real $\theta$:
\begin{align*}
A &= \lmat\cos\theta & -\sin\theta\\ \cos\theta & \sin\theta\rmat, \mbox{ or } \tag{*} \\
A &= \lmat\cos\theta & \sin\theta\\ \sin\theta & -\cos\theta\rmat. \tag{**}
\end{align*}

To prove (a), suppose $A^2=I$. 
If $A$ has the form (**), then it is easy to check that $A^2=I$ holds trivially. 
Suppose instead $A$ has the form (*). 
Then $A^2=I$ necessitates
\begin{align*}
\cos^2\theta -\sin^2\theta &= 1 \\
\cos^2\theta - (1 - \cos^2\theta) &= 1 \\
\cos^2\theta &= 1,
\end{align*}
and hence $\sin^2\theta=0$. 
By (*), this implies $A = \pm I$ as claimed in (a).

To prove (b), suppose $A^2\neq I$. 
Since $A^4 = (A^2)^2 = I$, then by (a), we must have either $A^2=-I$ or $A^2$ has the following form for some real $\phi$:
\[ A^2 = \lmat\cos\phi & \sin\phi\\ \sin\phi & -\cos\phi\rmat. \tag{***} \]
First suppose $A^2$ has the form (***) and $A$ the form (*). 
Then equating the diagonal terms of (*) squared and (***), 
and doing similar with the off-diagonal terms, 
we obtain the equations
\begin{align*}
\cos^2\theta-\sin^2\theta &= \pm \cos\phi \\
2\cos\theta\sin\theta &= \pm\sin\phi.
\end{align*}
These equations imply
\begin{align*}
\cos^2\theta-\sin^2\theta &= \pm 0 \\
2\cos\theta\sin\theta &= \pm0,
\end{align*}
and hence $\cos\theta=\sin\theta=0$, which is impossible.

Suppose on the other hand $A^2$ has the form (***) and $A$ the form (**). 
Equating the diagonal terms of (***) squared and (**), we obtain
\begin{align*}
(\pm \cos\theta)^2 + \sin^2\theta &= \pm\cos\phi \\
1 &= \pm\cos\phi,
\end{align*}
which is also impossible. 
Thus, $A^2$ cannot have the form (***).

We now turn to the case $A^2=-I$. 
If $A^2=-I$ and $A$ has the form (**), then $\cos^2\theta +\sin^2\theta = -1$, which is impossible. 
On the other hand, if $A^2=-I$ and $A$ has the form (*), then
\begin{align*}
\cos^2\theta - \sin^2\theta &= -1 \\
\cos^2\theta - (1-\cos^2\theta) &= -1 \\
2\cos^2\theta - 1 &= -1 \\
\cos^2\theta &= 0,
\end{align*}
and hence $\sin^2\theta=1$. 
The form (*) thus implies the expression for $A$ claimed in (b).

Finally, to prove (c), simply observe if $A^2=I$, then we are done; 
otherwise, $A$ has the form given in (b), whose square is $-I$.
\end{proof}

We now state and prove Prop.~\ref{prop:never_broken_2D}, which gives the subgroup of never-broken symmetries for a 2D order observable. 
Let $\beta = \rho\alpha\rho^{-1}$; i.e., the generator of horizontal translations. 
Let $\mathrm{O}(2, \RR)$ be the group of $2\times 2$ real orthogonal matrices.

\begin{prop} \label{prop:never_broken_2D}
Let $\Psi$ be the set of all real 2D representations $\psi:G\mapsto\mathrm{O}(2, \RR)$ of the Ising symmetry group $G$. 
Then $\bigcap_{\psi\in\Psi} \operatorname{ker}(\psi) = \langle\alpha^2,\beta^2\rangle$.
\end{prop}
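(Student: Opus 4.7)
The plan is to prove the two inclusions separately, mirroring the structure of Prop.~\ref{prop:never_broken} but now exploiting Lemma~\ref{lemma:A4} to control $\psi_\rho$.

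For the inclusion $\langle\alpha^2,\beta^2\rangle \subseteq \bigcap_\psi \ker(\psi)$, fix an arbitrary $\psi \in \Psi$. Since $\psi_\rho \in \mathrm{O}(2,\RR)$ and $\rho^4 = 1$, Lemma~\ref{lemma:A4}\ref{lemma:A4:c} forces $\psi_{\rho^2} = \psi_\rho^2 = \pm I$. Applying $\psi$ to the defining relation $\alpha\rho^2 = \rho^2\alpha^{-1}$ yields $\psi_\alpha \psi_{\rho^2} = \psi_{\rho^2} \psi_\alpha^{-1}$; because $\pm I$ is central, this collapses to $\psi_\alpha = \psi_\alpha^{-1}$, i.e. $\psi_{\alpha^2} = I$. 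Since $\beta = \rho\alpha\rho^{-1}$, we get $\psi_{\beta^2} = \psi_\rho\,\psi_{\alpha^2}\,\psi_\rho^{-1} = I$ as well, so $\langle\alpha^2,\beta^2\rangle$ lies in every $\ker(\psi)$.

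For the reverse inclusion, let $g \in \bigcap_{\psi\in\Psi}\ker(\psi)$. First I reduce to Prop.~\ref{prop:never_broken} by embedding: every 1D representation $\phi:G\to\{-1,1\}$ gives a 2D representation $\phi\oplus\phi \in \Psi$ with the same kernel as $\phi$. Hence $g$ lies in the kernel of every 1D representation, so by Prop.~\ref{prop:never_broken}, $g \in \mathrm{SCB}(L)$. Using the canonical form $g = \alpha^{q_1}\beta^{q_2}\rho^{q_3}\tau^{q_4}\sigma^{q_5}$ (which exists by the semidirect-product structure noted in the proof of Prop.~\ref{prop:never_broken}), membership in $\mathrm{SCB}(L)$ forces $q_4 = q_5 = 0$, $q_3 \in \{0,2\}$, and $q_1+q_2$ even.

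To eliminate the remaining possibilities I construct two explicit 2D representations. Set $\psi_\rho = \bigl(\begin{smallmatrix} 0 & -1 \\ 1 & 0 \end{smallmatrix}\bigr)$, $\psi_\tau = \mathrm{diag}(1,-1)$, $\psi_\sigma = I$ for both, and take
\[
\psi^{(1)}_\alpha = \mathrm{diag}(-1,1), \qquad \psi^{(2)}_\alpha = I.
\]
I verify that both satisfy every defining relation of $G$: the orders of $\rho,\tau,\sigma$ are correct; $\psi_\tau^{-1}\psi_\rho\psi_\tau = \psi_\rho^{-1}$ holds for the chosen reflection/rotation pair; $\psi_{\rho^2} = -I$ commutes with each $\psi_\alpha$, so $\alpha\rho^2 = \rho^2\alpha^{-1}$ is preserved (both sides of the conjugation equation collapse correctly since $\psi_\alpha^2 = I$ in each case); $\alpha\tau = \tau\alpha$ holds because $\psi_\alpha$ and $\psi_\tau$ are simultaneously diagonal; and $\sigma$ is central. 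Conjugating $\psi_\alpha$ by $\psi_\rho$ gives $\psi^{(1)}_\beta = \mathrm{diag}(1,-1)$ and $\psi^{(2)}_\beta = I$. Evaluating on $g$:
\[
\psi^{(2)}_g = (-I)^{q_3/2}, \qquad \psi^{(1)}_g = \mathrm{diag}((-1)^{q_1},(-1)^{q_2})\cdot(-I)^{q_3/2}.
\]
The condition $\psi^{(2)}_g = I$ forces $q_3 = 0$, and then $\psi^{(1)}_g = I$ forces both $q_1$ and $q_2$ even. Thus $g = (\alpha^2)^{q_1/2}(\beta^2)^{q_2/2} \in \langle\alpha^2,\beta^2\rangle$.

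The main obstacle is the case analysis in showing $\psi_{\alpha^2} = I$ must hold in every 2D orthogonal representation (which means we \emph{cannot} separate $q_1$ from $q_1+2$ using any 2D rep, so the construction of $\psi^{(1)}$ and $\psi^{(2)}$ must target the mod-2 information only), together with the bookkeeping needed to verify the semidirect-product relation $\rho\tau = \tau\rho^3$ and the twist relation $\alpha\rho^2 = \rho^2\alpha^{-1}$ for both candidate representations. Everything else is routine multiplication of $2\times 2$ matrices.
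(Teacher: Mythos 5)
Your proof is correct and follows the paper's overall strategy: Lemma~\ref{lemma:A4}\ref{lemma:A4:c} combined with the relation $\alpha\rho^2=\rho^2\alpha^{-1}$ gives $\langle\alpha^2,\beta^2\rangle\subseteq\bigcap_{\psi}\operatorname{ker}(\psi)$, and the diagonal embedding $\phi\mapsto\phi\oplus\phi$ of scalar representations reduces the first half of the reverse inclusion to Prop.~\ref{prop:never_broken}, exactly as in the paper. Where you genuinely diverge is the endgame. The paper constructs a \emph{single} representation with $\psi_{\alpha}=\operatorname{diag}(-1,1)$ and $\psi_{\rho}=\left(\begin{smallmatrix}0&-1\\1&0\end{smallmatrix}\right)$ and checks that $\rho^2$ and $\alpha\beta$ each fail to lie in the intersection of kernels; but that representation sends both of these elements to $-I$, hence sends the product $\alpha\beta\rho^2$ to $I$, and excluding two of the three nontrivial cosets of the Klein-four quotient $\scb(L)/\langle\alpha^2,\beta^2\rangle$ from a subgroup does not exclude the third. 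Your two-representation argument --- using $\psi^{(2)}$ (with $\psi_{\alpha}=I$ and $\psi_{\rho}^2=-I$) to force $q_3=0$ first, and only then $\psi^{(1)}$ to force $q_1$ and $q_2$ even --- pins down the exponents one at a time and therefore does handle the coset of $\alpha\beta\rho^2$, which the paper's single representation cannot distinguish from the identity. This is a real gain in rigor over the published argument, purchased at the modest cost of verifying the defining relations for one additional representation (a routine matrix check, which you correctly identify, together with the relation bookkeeping, as the only real labor). The rest --- the canonical form $g=\alpha^{q_1}\beta^{q_2}\rho^{q_3}\tau^{q_4}\sigma^{q_5}$, the constraints inherited from membership in $\scb(L)$, and the final rewriting $g=(\alpha^2)^{q_1/2}(\beta^2)^{q_2/2}$ --- matches the paper's reasoning and is sound.
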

\begin{proof}
Let $I$ be the $2\times 2$ identity matrix, 
and let $\psi\in\Psi$. 
Since $\rho^4 = 1$ by definition, then $\psi_{\rho}^4 = I$. 
By Lemma~\ref{lemma:A4}~(c), $\psi_{\rho}^2 = \pm I$. 
The defining relation $\alpha\rho^2 = \rho^2\alpha^{-1}$ thus implies
\begin{align*}
\psi_{\alpha}\psi_{\rho}^2 &= \psi_{\rho}^2\psi_{\alpha}^{-1} \\
\psi_{\alpha} &= \psi_{\alpha}^{-1} \\
\psi_{\alpha}^2 &= I.
\end{align*}
Ergo, $\alpha^2\in\operatorname{ker}(\psi)$. 
We can similarly show $\beta^2\in\operatorname{ker}(\psi)$, 
and hence
\[ \langle\alpha^2, \beta^2\rangle \leq \bigcap_{\psi\in\Psi} \operatorname{ker}(\psi). \]
All that remains is to prove the reverse inclusion.

Let $\Phi$ be the set of all real scalar representations $\phi:G\mapsto\{-1, 1\}$, 
and define the set
\[ \Psi_1 = \{g\mapsto \phi_g I: \phi\in\Phi\}. \]
Then by Prop.~\ref{prop:never_broken}, we have
\[ \bigcap_{\psi\in\Psi}\operatorname{ker}(\psi) \leq \bigcap_{\psi\in \Psi_1}\operatorname{ker}(\psi) \leq \langle\alpha^2, \rho^2, (\alpha\rho)^2\rangle. \]
If we can show
\[ \rho^2, \alpha\beta \notin \bigcap_{\psi\in\Psi}\operatorname{ker}(\psi), \tag{*} \]
then this will imply
\[ \bigcap_{\psi\in\Psi}\operatorname{ker}(\psi) \leq \langle\alpha^2, \beta^2\rangle, \]
which will then establish the claim. 
Consider $\psi\in\Psi$ defined such that
\begin{align*}
\psi_{\alpha} &= \lmat -1 & 0\\ 0 & 1\rmat \\
\psi_{\rho} &= \lmat 0 & -1\\ 1 & 0\rmat.
\end{align*}
Note these are valid representations as $\psi_{\alpha}\psi_{\rho}^2 = \psi_{\rho}^2\psi_{\alpha}^{-1}$. 
Then we have $\psi_{\rho}^2 = -I$ as well as
\begin{align*}
\psi_{\alpha}\psi_{\beta} 
&= \psi_{\alpha}\psi_{\rho}\psi_{\alpha}\psi_{\rho}^{-1} \\
&= -I,
\end{align*}
so that $\rho^2, \alpha\beta\notin \operatorname{ker}(\psi)$. 
This establishes (*) and hence the proposition.
\end{proof}

Recall the description of the GE-autoencoder architecture in Sec.~\ref{sec:2D}, 
where the input into the network is a 4D block-average $\check{x}$ of a lattice configuration $\mathbf{x}$. 
The following proposition states that our particular choice of architecture places an additional constraint on the representation $\psi$.

\begin{prop} \label{prop:art_2D}
The GE-encoder described in Sec.~\ref{sec:2D} satisfies $\psi_{\alpha}\check{\calO} = \psi_{\rho^2\tau}\check{\calO}$.
\end{prop}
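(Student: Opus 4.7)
The plan is to reduce the statement to a concrete property of the block-averaging map $B\colon\mathbf{x}\mapsto\check{x}$ defined in Sec.~\ref{sec:2D}: namely, that $B(\alpha\mathbf{x})=B(\rho^2\tau\mathbf{x})$ for every lattice configuration $\mathbf{x}$. Since the GE-encoder factors through this map as $\calO=\check{\calO}\circ B$, the equality will propagate to the encoder outputs, and the (learned) $G$-equivariance $\calO\circ g=\psi_g\calO$ will then force the stated relation on the reduced encoder $\check{\calO}$.

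To establish the reduction I would track how each generator permutes the four parity classes of sites labeling the sublattice averages $\check{x}_1,\ldots,\check{x}_4$. Working with conventions consistent with the presentation of $G$ in Sec.~\ref{sec:ising} -- sites indexed by $(i_x,i_y)\in\{0,\ldots,L-1\}^2$ with $L$ even, $\alpha$ a unit translation in $i_y$, $\tau$ the reflection $i_x\mapsto L-1-i_x$ (whose line lies between columns), and $\rho$ the $90^\circ$-counterclockwise rotation about the geometric center -- one checks that $\alpha$ toggles $i_y$-parity, $\tau$ toggles $i_x$-parity, and $\rho^2$ toggles \emph{both} parities (because $L-1$ is odd). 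Composing, the two $i_x$-toggles in $\rho^2$ and $\tau$ cancel, leaving $\rho^2\tau$ to toggle only the $i_y$-parity -- exactly the permutation induced by $\alpha$. Both therefore send $(\check{x}_1,\check{x}_2,\check{x}_3,\check{x}_4)\mapsto(\check{x}_2,\check{x}_1,\check{x}_4,\check{x}_3)$, giving $B(\alpha\mathbf{x})=B(\rho^2\tau\mathbf{x})$. Equivalently, one may observe that $\alpha^{-1}\rho^2\tau$ is geometrically a horizontal reflection whose action $i_y\mapsto L-2-i_y$ preserves both coordinate parities and hence lies in the kernel of the $G$-action on sublattice averages.

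The rest of the proof is immediate: applying $\check{\calO}$ to both sides and using its learned equivariance,
\[
\psi_\alpha\check{\calO}(B(\mathbf{x}))=\check{\calO}(B(\alpha\mathbf{x}))=\check{\calO}(B(\rho^2\tau\mathbf{x}))=\psi_{\rho^2\tau}\check{\calO}(B(\mathbf{x})),
\]
which, holding for all $\mathbf{x}$ and hence on the full domain of $\check{\calO}$, yields $\psi_\alpha\check{\calO}=\psi_{\rho^2\tau}\check{\calO}$. The main obstacle is the convention-dependent bookkeeping in the parity-permutation step: the cancellation of the $i_x$-toggles hinges on the rotation center and the reflection axis sitting between lattice sites rather than on them, which is what $L$ being even together with the defining relations of $G$ actually force. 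Once that is pinned down, the rest of the argument is routine.
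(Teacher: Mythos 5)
Your proposal is correct and follows essentially the same route as the paper: compute the permutation that each generator induces on the four sublattice averages, observe that $\alpha$ and $\rho^2\tau$ induce the identical permutation (so $\alpha\check{x}=\rho^2\tau\check{x}$), and then invoke the equivariance $\check{\calO}(g\check{x})=\psi_g\check{\calO}(\check{x})$. The only difference is a labeling convention (which coordinate $\alpha$ translates), which affects the explicit permutation written down but not the argument; your added remark that $\alpha^{-1}\rho^2\tau$ preserves both coordinate parities is a clean way to phrase the same verification.
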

\begin{proof}
Let $\check{x} = (\check{x}_1,\check{x}_2,\check{x}_3,\check{x}_4)$ be the block-average of a lattice configuration $\mathbf{x}$ over non-overlapping $2\times 2$ blocks. 
Then
\begin{align*}
\alpha\check{x} &= (\check{x}_3,\check{x}_4,\check{x}_1,\check{x}_2) \\
\rho\check{x} &= (\check{x}_2,\check{x}_4,\check{x}_1,\check{x}_3) \\
\tau\check{x} &= (\check{x}_2,\check{x}_1,\check{x}_4,\check{x}_3).
\end{align*}
Based on these permutations, it is easy to verify $\alpha\check{x} = \rho^2\tau\check{x}$. 
Thus,
\begin{align*}
\check{\calO}(\alpha\check{x}) &= \check{\calO}(\rho^2\tau\check{x}) \\
\psi_{\alpha}\check{\calO} &= \psi_{\rho^2\tau}\check{\calO},
\end{align*}
completing the proof.
\end{proof}

\subsection{Never-broken symmetries in terms of characters}
\label{appendix:chi}

The following proposition gives a way to compute the subgroup of never-broken symmetries of an arbitrary finite group $G$ in terms of its character table.

\begin{prop} \label{prop:chi}
Let $G$ be a finite group. 
Let $\Psi_d$ be the set of all real orthogonal representations of $G$ with degree $d$. 
Let $\hat{G}$ be the set of all irreducible characters of $G$. 
Then 
\[ \bigcap_{\psi\in\Psi_d}\operatorname{ker}(\psi) = \bigcap_{\chi\in\hat{G}\mid \degR(\chi)\leq d} \operatorname{ker}(\chi). \]
\end{prop}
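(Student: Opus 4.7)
The plan is to prove the two inclusions separately, using the standard dictionary between real orthogonal representations of $G$ and complex irreducible characters as mediated by the Frobenius-Schur indicator.

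For the inclusion $\bigcap_{\chi \in \hat{G},\,\degR(\chi) \leq d}\operatorname{ker}(\chi) \subseteq \bigcap_{\psi \in \Psi_d}\operatorname{ker}(\psi)$, I would fix an arbitrary $\psi \in \Psi_d$ and complexify to decompose $\psi \otimes \mathbb{C} = \bigoplus_i \chi_i$ as a sum of complex irreducibles. The heart of the argument is that every $\chi_i$ appearing in this decomposition satisfies $\degR(\chi_i) \leq d$. This is a case analysis on $\IFS(\chi_i)$: if $\IFS(\chi_i)=1$, then $\chi_i$ itself is realizable over $\RR$ at degree $\degR(\chi_i) = \degC(\chi_i)$ inside $\psi$; if $\IFS(\chi_i) = -1$ or $0$, then invariance of the decomposition under complex conjugation forces $\chi_i$ to be paired with $\overline{\chi_i}$ (with itself in the quaternionic case, with even multiplicity), contributing a real subrepresentation of degree $2\degC(\chi_i) = \degR(\chi_i)$. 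In all cases $\degR(\chi_i) \leq d$, so by hypothesis $\chi_i(g) = \degC(\chi_i)$ for any $g$ in the right-hand intersection, which forces $g$ to act trivially in each $\chi_i$ and hence in $\psi$.

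For the converse inclusion, I would fix an irreducible $\chi$ with $\degR(\chi) \leq d$ and explicitly exhibit a representation $\psi \in \Psi_d$ with $\operatorname{ker}(\psi) = \operatorname{ker}(\chi)$. The natural choice is $\psi := \chi_\RR \oplus \mathbf{1}^{\oplus (d - \degR(\chi))}$, where $\chi_\RR$ denotes the smallest real orthogonal representation built from $\chi$ (of degree exactly $\degR(\chi)$), padded by the trivial representation so that $\psi$ has degree exactly $d$. Orthogonality is standard: any invariant inner product averaged over $G$ works. The key sub-step is to verify $\operatorname{ker}(\chi_\RR) = \operatorname{ker}(\chi)$, which reduces to showing $\chi_\RR(g) = \degC(\chi_\RR)$ iff $\chi(g) = \degC(\chi)$. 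This again splits by $\IFS(\chi)$, but in every case reduces to the inequality $\operatorname{Re}(\chi(g)) \leq |\chi(g)| \leq \degC(\chi)$, with equality throughout exactly when $g$ acts as the identity in $\chi$. Given such a $\psi \in \Psi_d$, any $g$ in the left-hand intersection lies in $\operatorname{ker}(\psi) = \operatorname{ker}(\chi)$.

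The main obstacle will be cleanly handling the Frobenius-Schur trichotomy, especially the quaternionic case: establishing that a $\chi_i$ with $\IFS(\chi_i) = -1$ must appear in $\psi \otimes \mathbb{C}$ with even multiplicity requires invoking the structural fact that the $\chi_i$-isotypic block of $\psi \otimes \mathbb{C}$ inherits a $G$-invariant symplectic (not symmetric) bilinear form from the real structure on $\psi$, so a real form exists only in even complex dimension. Once this dichotomy between symmetric and symplectic invariant forms is in hand, the remainder is bookkeeping with kernels, direct sums, and the observation that the trivial representation has kernel equal to all of $G$ so padding does not enlarge the kernel.
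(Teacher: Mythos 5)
Your proposal is correct and follows essentially the same route as the paper's proof: one inclusion is obtained by padding the minimal real form $\chi_{\RR}$ of each admissible irreducible character with trivial summands to produce a degree-$d$ orthogonal representation with kernel $\operatorname{ker}(\chi)$, and the other by decomposing an arbitrary $\psi\in\Psi_d$ into irreducible constituents and checking via the Frobenius--Schur trichotomy that each constituent satisfies $\degR\leq d$. The only cosmetic difference is that you complexify $\psi$ and argue about conjugation-invariance and even multiplicity directly, whereas the paper decomposes $\psi$ into real-irreducibles first and then passes to their complex constituents; the content is the same.
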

\begin{proof}
Let $\chi\in\hat{G}$ such that $\degR(\chi) \leq d$. 
Let $\psi$ be the irreducible representation with character $\chi$. 
Let $\psi_{\RR}$ be the smallest real representation built out of $\psi$:
\[ \psi_{\RR} = 
\begin{cases}
\psi, & \mbox{ if } \IFS(\chi) = 1 \\
\psi\oplus\overline{\psi}_{\chi}, & \mbox{ if } \IFS(\chi) = 0 \\
\psi\oplus\psi, & \mbox{ if } \IFS(\chi) = -1.
\end{cases} \]
Note $\degC(\psi_{\RR}) = \degR(\chi)\leq d$. 
Let $n = d-\degC(\psi_{\RR})\geq 0$ and $\psi_1$ the trivial representation. 
Then construct the representation
\[ \psi_{\chi} = \psi_{\RR}\oplus\bigoplus_{i=1}^n \psi_1. \]
The kernel of this representation is clearly $\operatorname{ker}(\psi_{\chi}) = \operatorname{ker}(\psi_{\RR}) = \operatorname{ker}(\chi)$. 
Moreover, $\degC(\psi_{\chi}) = d$ so that $\psi_{\chi}\in\Psi_d$. 
Since a real representation $\psi_{\chi}\in\Psi_d$ can be constructed for every $\chi\in\hat{G}\mid\degR(\chi)\leq d$, then
\[ \bigcap_{\psi\in\Psi_d}\operatorname{ker}(\psi) \leq \bigcap_{\chi\in\hat{G}\mid\degR(\chi)\leq d}\operatorname{ker}(\chi). \tag{*} \]
All that remains is to prove the reverse inclusion as well.

Let $\psi\in\Psi_d$. 
Then $\psi$ admits the decomposition
\[ \psi = \psi_1\oplus \cdots \oplus \psi_k, \]
where each $\psi_i$ is a real-irreducible representation (i.e., irreducible over $\RR$). 
For each $i\in\{1,\ldots,k\}$, there exists a (complex-)irreducible representation $\psi^{\prime}_i$ such that $\psi_i = \psi^{\prime}_i$, $\psi_i = \psi^{\prime}_i\oplus\overline{\psi^{\prime}}_i$, or $\psi_i = \psi^{\prime}_i\oplus\psi^{\prime}_i$. 
Since $\operatorname{ker}(\psi^{\prime}_i) = \operatorname{ker}(\overline{\psi^{\prime}}_i)$, thne in any of these three cases, we have $\operatorname{ker}(\psi_i) = \operatorname{ker}(\psi^{\prime}_i)$. 
Thus,
\[ \operatorname{ker}(\psi) = \bigcap_{i=1}^k\operatorname{ker}(\psi_i) = \bigcap_{i=1}^k\operatorname{ker}(\psi^{\prime}_i). \]
Now let $\chi_i$ and $\chi^{\prime}_i$ be the characters of $\psi_i$ and $\psi^{\prime}_i$ respectively. 
Clearly, $\degR(\chi^{\prime}_i) = \degC(\chi_i)\leq d$ since $\degC(\psi_i)\leq\degC(\psi)=d$ for each $i\in\{1,\ldots,k\}$. 
Thus,
\[ \operatorname{ker}(\psi) = \bigcap_{i=1}^k\operatorname{ker}(\chi^{\prime}_i), \]
where each $\chi^{\prime}_i$ is irreducible and $\degR(\chi^{\prime}_i)\leq d$. 
This establishes the reverse inclusion of (*).
\end{proof}

\subsection{Symmetry regularization}
\label{appendix:reg_2D}

Here we relate the general symmetry regularization terms $R_g$ and $S_g$ [Eqs.~\eqref{eq:R}-\eqref{eq:S}] to the first and second regularization terms in Eq.~\eqref{eq:loss_reduced} for 1D order observables. 
This will provide insight into how Eq.~\eqref{eq:loss_2D} generalizes Eq.~\eqref{eq:loss_reduced}. 
We use the notation as in Eqs.~\eqref{eq:psihat}-\eqref{eq:S}.

First, however, we simplify the expression for $S_g$ [Eq.~\eqref{eq:S}] for arbitrary order dimension $d$.

\begin{lemma} \label{lemma:S_simple}
The regularization term $S_g$ [Eq.~\eqref{eq:S}] admits the expression
\[ S_g = 1 - \frac{\Vert ZZ^+ Z_g\Vert_F^2}{\Vert Z_g\Vert_F^2}. \]
\end{lemma}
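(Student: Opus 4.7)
The plan is to substitute the explicit form of $\hat{\psi}_g$ into the numerator of $S_g$, simplify using standard Moore-Penrose identities, and then recognize that $ZZ^+$ is an orthogonal projection to invoke a Pythagorean-type decomposition of $\|Z_g\|_F^2$.

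First I would compute $Z\hat{\psi}_g$. Using the definition $\hat{\psi}_g = Z^+ Z_g + PA$ with $P = I - Z^+ Z$, we have $Z\hat{\psi}_g = ZZ^+ Z_g + ZPA$. The factor $ZP = Z - ZZ^+ Z$ vanishes by the defining Moore-Penrose identity $ZZ^+ Z = Z$. Hence $Z\hat{\psi}_g = ZZ^+ Z_g$, so that
\begin{equation*}
Z_g - Z\hat{\psi}_g = (I - ZZ^+)\,Z_g.
\end{equation*}

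Next I would exploit that $ZZ^+$ is an orthogonal projector onto the column space of $Z$, i.e.\ it satisfies $(ZZ^+)^\top = ZZ^+$ and $(ZZ^+)^2 = ZZ^+$ (standard properties of the pseudoinverse). The complement $I - ZZ^+$ is then also an orthogonal projector. Using the identity $\|M\|_F^2 = \operatorname{tr}(M^\top M)$, a short trace computation gives
\begin{equation*}
\|Z_g\|_F^2 = \operatorname{tr}(Z_g^\top Z_g) = \operatorname{tr}\bigl(Z_g^\top ZZ^+ Z_g\bigr) + \operatorname{tr}\bigl(Z_g^\top (I-ZZ^+) Z_g\bigr),
\end{equation*}
which is the Pythagorean decomposition
\begin{equation*}
\|Z_g\|_F^2 = \|ZZ^+ Z_g\|_F^2 + \|(I - ZZ^+)Z_g\|_F^2.
\end{equation*}

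Combining the two displays, the numerator of $S_g$ equals $\|Z_g\|_F^2 - \|ZZ^+ Z_g\|_F^2$. Dividing through by $\|Z_g\|_F^2$ yields the claimed identity
\begin{equation*}
S_g = 1 - \frac{\|ZZ^+ Z_g\|_F^2}{\|Z_g\|_F^2}.
\end{equation*}
There is no real obstacle here; the only subtlety is remembering that $ZZ^+$ (as opposed to $Z^+ Z$) is the projector onto the column space and is therefore symmetric and idempotent, which is precisely what allows the cross term in the Pythagorean split to vanish.
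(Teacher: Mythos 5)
Your proof is correct and follows the same route as the paper's: substitute $\hat{\psi}_g$, use $ZZ^+Z = Z$ to kill the $ZPA$ term, and apply the Pythagorean decomposition for the orthogonal projector $I - ZZ^+$. You are in fact slightly more explicit than the paper about why the $ZPA$ contribution vanishes, which is a welcome clarification rather than a deviation.
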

\begin{proof}
Recalling the expression [Eq.~\eqref{eq:psihat}] for the linear least squares estimator $\hat{\psi}_g$, 
the squared residual is
\begin{align*}
\Vert Z_g - Z\hat{\psi}_g\Vert_F^2 
&= \Vert Z_g - Z(Z^+ Z_g + P A)\Vert_F^2 \\
&= \Vert Z_g - ZZ^+ Z_g\Vert_F^2 \\
&= \Vert (I - ZZ^+) Z_g\Vert_F^2.
\end{align*}
Since $I-ZZ^+$ is an orthogonal projection operator, then by the Pythagorean Theorem we obtain
\[ \Vert Z_g-Z\hat{\psi}_g\Vert_F^2 = \Vert Z_g\Vert_F^2 - \Vert ZZ^+ Z_g\Vert_F^2, \]
and thus
\[ S_g = 1 - \frac{\Vert ZZ^+ Z_g\Vert_F^2}{\Vert Z_g\Vert_F^2}, \]
completing the proof.
\end{proof}

We now derive the simplified expressions for the regularization terms $R_g$ and $S_g$ for 1D order observables.

\begin{prop} \label{prop:RS}
Let $d=1$. 
Then the regularization terms $R_g$ and $S_g$ [Eq.~\eqref{eq:R}-\eqref{eq:S}] admit the expressions
\begin{align*}
R_g &= \left(1 - \frac{\Vert Z_g\Vert^2}{\Vert Z\Vert^2} (1-S_g)\right)^2 \\
S_g &= 1 - \left(\frac{Z^{\top}Z_g}{\Vert Z\Vert \Vert Z_g\Vert}\right)^2.
\end{align*}
\end{prop}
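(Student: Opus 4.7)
My plan is to exploit the drastic simplifications that occur when $d=1$. In this case, the data matrices $Z, Z_g \in \RR^{N\times 1}$ are column vectors, and for $Z \neq 0$ the Moore--Penrose pseudoinverse collapses to $Z^+ = Z^\top / \Vert Z\Vert^2$. Consequently $Z^+ Z = 1$ as a $1\times 1$ matrix, so the null-space projector $P = I - Z^+ Z$ vanishes. In particular, the free parameter $A$ in Eq.~\eqref{eq:psihat} contributes nothing, and $\hat{\psi}_g$ reduces to the scalar $\hat{\psi}_g = (Z^\top Z_g)/\Vert Z\Vert^2$.

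For $S_g$, I would invoke Lemma~\ref{lemma:S_simple}, which gives $S_g = 1 - \Vert ZZ^+ Z_g\Vert_F^2 / \Vert Z_g\Vert_F^2$. Substituting $ZZ^+ = ZZ^\top/\Vert Z\Vert^2$ and using that $Z^\top Z_g$ is a scalar, one computes $\Vert ZZ^+ Z_g\Vert_F^2 = (Z^\top Z_g)^2 \Vert Z\Vert^2 / \Vert Z\Vert^4 = (Z^\top Z_g)^2/\Vert Z\Vert^2$, from which the claimed cosine-squared formula for $S_g$ drops out. For $R_g$, the definition $R_g = \Vert I - \hat{\psi}_g^\top \hat{\psi}_g\Vert_F^2$ reduces to $R_g = (1 - \hat{\psi}_g^2)^2 = (1 - (Z^\top Z_g)^2/\Vert Z\Vert^4)^2$. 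A one-line algebraic check that $(\Vert Z_g\Vert^2/\Vert Z\Vert^2)(1 - S_g) = (Z^\top Z_g)^2/\Vert Z\Vert^4$ then matches this against the stated form.

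The only subtle point is the degenerate case $Z = 0$, where $Z^+ = 0$ by convention and both claimed identities hold trivially with the standard interpretation; otherwise the argument is purely mechanical unpacking of definitions, so I do not anticipate any real obstacle. The payoff, once these formulas are in hand, is the direct comparison with the first two regularization terms of Eq.~\eqref{eq:loss_reduced}: the simplified $S_g$ reproduces $1 - L_{\mathrm{cos}}(\check{\calO}, \check{\calO} \circ g)^2$ exactly, while $R_g$ reproduces a close analog of $(1 - \Vert\check{\calO}\circ g\Vert/\Vert\check{\calO}\Vert)^2$ that is additionally coupled to the cosine similarity through the factor $(1 - S_g)$. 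This justifies the claim in the main text that Eq.~\eqref{eq:loss_2D} is an approximate generalization of Eq.~\eqref{eq:loss_reduced}.
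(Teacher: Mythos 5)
Your proposal is correct and follows essentially the same route as the paper's proof: both reduce to the scalar pseudoinverse $Z^+ = Z^\top/\Vert Z\Vert^2$ (so $P=0$), apply Lemma~\ref{lemma:S_simple} for $S_g$, and substitute the scalar $\hat{\psi}_g$ into $R_g$; your direct computation of $\Vert ZZ^+Z_g\Vert^2$ versus the paper's use of the projector identity $(ZZ^+)^\top(ZZ^+)=ZZ^+$ is an immaterial difference. Your explicit handling of the degenerate case $Z=0$ is a small bonus the paper glosses over by assuming the encoder is not identically zero on the sample.
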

\begin{proof}
We first consider $S_g$. 
By Lemma~\ref{lemma:S_simple}, 
\[ S_g = 1 - \frac{\Vert ZZ^+ Z_g\Vert^2}{\Vert Z_g\Vert^2}, \]
where we replaced the Frobenius matrix norm with the usual vector norm since $Z$ and $Z_g$ are now $N$-dimensional column vectors. 
Since $ZZ^+$ is an orthogonal projection operator, then $(ZZ^+)^{\top}(ZZ^+) = ZZ^+$ so that
\[ S_g = 1 - \frac{Z_g^{\top} ZZ^+ Z_g}{\Vert Z_g\Vert^2}. \]
Now unless the GE-encoder is exactly zero on every sampled lattice configuration, $Z$ is a nonzero column vector and thus full-rank. 
Its pseudoinverse is thus the row vector
\[ Z^+ = \frac{Z^{\top}}{\Vert Z\Vert^2}. \]
Substituting this into our expression for $S_g$, we obtain
\begin{align*}
S_g 
&= 1 - \frac{Z_g^{\top} ZZ^{\top} Z_g}{\Vert Z\Vert^2 \Vert Z_g\Vert^2} \\
&= 1 - \left(\frac{Z^{\top}Z_g}{\Vert Z\Vert \Vert Z_g\Vert}\right)^2,
\end{align*}
as claimed.

We now move to $R_g$. 
Since $Z$ is full-rank, then its null space is trivial so that $P=0$. 
The estimator $\hat{\psi}_g$ [Eq.~\eqref{eq:psihat}] thus simplifies to
\begin{align*}
\hat{\psi}_g 
&= Z^+Z_g + 0 \\
&= \frac{Z^{\top}Z_g}{\Vert Z\Vert^2} \\
&= \frac{\Vert Z_g\Vert}{\Vert Z\Vert}\cdot \frac{Z^{\top}Z_g}{\Vert Z\Vert\Vert Z_g\Vert}.
\end{align*}
Substituting this into Eq.~\eqref{eq:R} and noting $\hat{\psi}_g$ is a scalar, we have
\begin{align*}
R_g 
&= (1 - \hat{\psi}_g^2)^2 \\
&= \left(1 - \frac{\Vert Z_g\Vert^2}{\Vert Z\Vert^2}\left(\frac{Z^{\top}Z_g}{\Vert Z\Vert\Vert Z_g\Vert}\right)^2\right)^2 \\
&= \left(1 - \frac{\Vert Z_g\Vert^2}{\Vert Z\Vert^2} (1-S_g)\right)^2,
\end{align*}
completing the proof.
\end{proof}

Observe that the expressions for $R_g$ and $S_g$ in Prop.~\ref{prop:RS} match the first and second regularization terms in Eq.~\eqref{eq:loss_reduced}, 
except for the factor $(1-S_g)$ in $R_g$. 
Even still, in the optimal case $S_g=0$, the expression for $R_g$ matches the first regularization term in Eq.~\eqref{eq:loss_reduced}.

\bibliography{references} 

\end{document}